\titlespacing*{\section}{0pt}{*2}{*1}
\titlespacing*{\subsection}{0pt}{*2}{*1}
\newtheorem{thm}{Theorem}
\newtheorem{lem}{Lemma}
\theoremstyle{definition}
\newtheorem{con}{Condition}
\newtheorem*{alg}{Coordinate Descent Algorithm}
\newcommand\rel[1]{\mathrel{\phantom{#1}}}
\DeclareMathOperator{\sgn}{sgn}
\DeclareMathOperator*\argmin{\arg\min}
\def\bb{\mathbf{b}}
\def\be{\mathbf{e}}
\def\bs{\mathbf{s}}
\def\bu{\mathbf{u}}
\def\bv{\mathbf{v}}
\def\bD{\mathbf{D}}
\def\bS{\mathbf{S}}
\def\bU{\mathbf{U}}
\def\bV{\mathbf{V}}
\def\bW{\mathbf{W}}
\def\bZ{\mathbf{Z}}
\def\bzero{\mathbf{0}}
\def\balpha{\boldsymbol{\alpha}}
\def\bbeta{\boldsymbol{\beta}}
\def\btheta{\boldsymbol{\theta}}
\def\bSigma{\boldsymbol{\Sigma}}
\def\ve{\varepsilon}
\def\mid{\,|\,}
\def\Bigmid{\,\Big|\,}
\begin{document}
\title{High-Dimensional Sparse Additive Hazards Regression}
\author{Wei Lin and Jinchi Lv}
\date{}
\maketitle
\footnotetext{Wei Lin is Postdoctoral Researcher, Department of Biostatistics and Epidemiology, Perelman School of Medicine, University of Pennsylvania, Philadelphia, PA 19104 (E-mail: \textit{weilin1@mail.med.upenn.edu}). Jinchi Lv is Assistant Professor, Information and Operations Management Department, Marshall School of Business, University of Southern California, Los Angeles, CA 90089 (E-mail: \textit{jinchilv@marshall.usc.edu}). This research was partially supported by NSF CAREER Award DMS-0955316 and Grant DMS-0806030. The authors sincerely thank the Co-Editor, Associate Editor, and two referees for their valuable comments that led to substantial improvement of the article.}

\begin{abstract}
High-dimensional sparse modeling with censored survival data is of great practical importance, as exemplified by modern applications in high-throughput genomic data analysis and credit risk analysis. In this article, we propose a class of regularization methods for simultaneous variable selection and estimation in the additive hazards model, by combining the nonconcave penalized likelihood approach and the pseudoscore method. In a high-dimensional setting where the dimensionality can grow fast, polynomially or nonpolynomially, with the sample size, we establish the weak oracle property and oracle property under mild, interpretable conditions, thus providing strong performance guarantees for the proposed methodology. Moreover, we show that the regularity conditions required by the $L_1$ method are substantially relaxed by a certain class of sparsity-inducing concave penalties. As a result, concave penalties such as the smoothly clipped absolute deviation (SCAD), minimax concave penalty (MCP), and smooth integration of counting and absolute deviation (SICA) can significantly improve on the $L_1$ method and yield sparser models with better prediction performance. We present a coordinate descent algorithm for efficient implementation and rigorously investigate its convergence properties. The practical utility and effectiveness of the proposed methods are demonstrated by simulation studies and a real data example.
\end{abstract}

\emph{Running title:} High-Dimensional Sparse Additive Hazards Regression

\emph{Key words:} Empirical process; Oracle property; Regularization; Risk difference; Survival data; Variable selection; Weak oracle property.

\section{Introduction}

Advances in experimental technologies in molecular biology during the past decade have brought in a wealth of biomedical data. For instance, DNA microarrays now can be used to measure the expression of tens of thousands of genes in a sample of cells or to identify hundreds of thousands of single nucleotide polymorphisms for an individual at the same time. Data of this kind pose tremendous challenges to effective statistical inference, since the number of features, $p$, is large compared to the number of observations, $n$, in which case many classical inference methods can easily fail or become inapplicable. Variable selection, a powerful tool for sparse modeling, is a fundamental task in high-dimensional regression problems, which aims to select only a small set of important variables from a huge number of features, in the hope of alleviating the overfitting problem in high dimensions and improving the predictive power and interpretability of the resulting model. See, for example, \citet{Fan:Lv:sele:2010} for a review of recent developments in high-dimensional variable selection.

When clinical data on patient survivals are also available, it would be informative to link high-dimensional biomedical data to survival outcomes. A number of efforts have recently been made in this direction. Regularization methods, which can yield sparse models and hence perform simultaneous variable selection and estimation, are particularly useful and have gained increasing popularity. Several regularization methods originally developed for linear regression have been adapted to survival models. For example, \citet{Tibs:lass:1997} and \citet{Fan:Li:vari:2002} extended the Lasso and nonconcave penalized likelihood, respectively, to the Cox model, while \citet{Zhan:Lu:adap:2007} and \citet{Zou:note:2008} developed weighted $L_1$ methods for the Cox model. \citet{Cai:Fan:Li:Zhou:vari:2005} were the first to study regularization methods for survival models in a framework with $p$ growing with $n$; they demonstrated that the nonconcave penalized pseudo-partial likelihood estimator for multivariate failure time data enjoys the oracle property when $p$ grows slowly with $n$. \citet{Anto:Fryz:Letu:2010} studied the Dantzig selector for the Cox model in a high-dimensional setting, but did not address the issue of model selection consistency. In addition to their classical applications for survival analysis in public health, survival models have been widely used to model time-to-event data for credit risk analysis in finance and economics \citep{Jarrow:cred:2009,Fan:Lv:Qi:spar:2011}. Identifying important risk factors and quantifying their contributions are crucial aims of these problems.

Variable selection techniques for survival data have also been extended beyond the Cox model. As a useful alternative to the Cox model, the additive hazards model assumes that the hazard function of a failure time $T$ conditional on a $p$-vector of possibly time-dependent covariates $\bZ(\cdot)$ takes the form
\begin{equation}\label{eq:model}
\lambda(t\mid\bZ)=\lambda_0(t)+\bbeta_0^T\bZ(t),
\end{equation}
where $\lambda_0(\cdot)$ is an unspecified baseline hazard function and $\bbeta_0$ is a $p$-vector of regression coefficients (\citealp[p.~74]{Cox:Oake:anal:1984}; \citealp[p.~182]{Bres:Day:stat:1987}; \citealp{Lin:Ying:semi:1994}). The additive models provide a characterization of the regression effects different than the multiplicative models and have some remarkable features that are not shared by the latter. In particular, model \eqref{eq:model} pertains to the risk difference, or excess risk, a measure that is especially relevant and informative in epidemiological and clinical studies. Variable selection in model \eqref{eq:model} has been studied by a number of authors; for example, \citet{Leng:Ma:path:2007} proposed a weighted $L_1$ approach and \citet{Mart:Sche:cova:2009} considered several regularization methods including the Lasso and Dantzig selector.

Despite the aforementioned developments, a rigorous high-dimensional theory that can provide strong performance guarantees for regularization-based variable selection methods in the survival setting is still lacking. Specifically, it is unclear how high dimensionality these methods can handle and what conditions are required for obtaining model selection consistency and nice sampling properties. The need for the development of such a theory is urgent, in view of the recent advances in understanding the performance of regularization methods in the linear regression and classification contexts \citep[e.g.,][]{Zhao:Yu:on:2006,Fan:Fan:fair:2008,Lv:Fan:unif:2009,Wain:shar:2009}.

In this article, we propose a class of regularization methods for simultaneous variable selection and estimation in model \eqref{eq:model}, by combining the nonconcave penalized likelihood approach \citep{Fan:Li:vari:2001} and the pseudoscore method \citep{Lin:Ying:semi:1994}. To justify the superior performance of the proposed methodology, we consider a high-dimensional setting where the dimension of covariates can grow fast, possibly nonpolynomially, with the sample size. Under mild, interpretable conditions, we establish the weak oracle property \citep{Lv:Fan:unif:2009} and oracle property \citep{Fan:Li:vari:2001} of the proposed regularized estimators. Our high-dimensional analysis is innovative in that it involves empirical process techniques that, to the best of our knowledge, have not been previously used in the survival analysis literature, and provides new insights into the model selection properties of regularization methods for survival data. In particular, we show that the regularity conditions required by the $L_1$ method are substantially relaxed by a certain class of sparsity-inducing concave penalties, which includes some commonly used concave penalties as special cases (see Section \ref{sec:pen}). Furthermore, we present a coordinate descent algorithm for efficient implementation and rigorously investigate its convergence properties. The practical utility and effectiveness of the proposed methodology are demonstrated by both simulated and real data.

In an independent work closely related to this article, \citet{Brad:Fan:Jian:2011} studied regularized estimation for variable selection in the Cox model and obtained important oracle-type theoretical results in which the dimension of covariates may grow nonpolynomially with the sample size. Besides model assumptions, a critical difference from our results, however, is that they imposed a \emph{random} condition on a large \emph{empirical} covariance matrix; see their condition 8. Thus, it is natural to ask the question whether the regularized estimators still enjoy the desired properties if a similar condition is imposed on the population version of the matrix. Since the empirical covariance matrix involves the outcome variables, as is generally the case for survival models, a \emph{nonrandom} condition on the \emph{population} covariance matrix seems more natural and will provide more confirmative performance guarantees. Such conditions will also have the benefit that they can be viewed as high-dimensional extensions of the classical asymptotic regularity conditions in the low-dimensional setting, which are imposed on the population covariance matrix. We will provide an affirmative answer to this important question.

The remainder of this article is organized as follows. In Section \ref{sec:method}, we propose a class of regularization methods and discuss choices of the penalty function. The theoretical properties of these regularized estimators are studied in Section \ref{sec:hd_theory}. In Section \ref{sec:hd_comp}, we describe a coordinate descent algorithm, study its convergence properties, and discuss selection of tuning parameters. Simulation studies and a real data example are presented in Sections \ref{sec:sim} and \ref{sec:data}, respectively. Some discussion is provided in Section \ref{sec:disc}, and all proofs and technical details are relegated to the Appendices.

\section{Regularization Methodology}\label{sec:method}

\subsection{Regularized Estimation}
We begin with the problem setup. Let $T$ be the failure time and $C$ the censoring time. Denote the censored failure time by $X=T\wedge C$ and the failure indicator by $\Delta=I(T\le C)$, where $I(\cdot)$ is the indicator function. Let $\bZ(\cdot)=(Z_1(\cdot),\cdots,Z_p(\cdot))$ be a vector of predictable covariate processes and assume that $T$ and $C$ are conditionally independent given $\bZ(\cdot)$. The observed data consist of $(X_i,\Delta_i,\bZ_i(\cdot))$, $i=1,\cdots,n$, which are independent copies of $(X,\Delta,\bZ(\cdot))$. We assume that the conditional hazard function is given by model \eqref{eq:model}.

We adopt the usual counting process notation. Define the observed-failure counting process $N_i(t)=I(X_i\le t,\Delta_i=1)$, the at-risk indicator $Y_i(t)=I(X_i\ge t)$, and the counting process martingale
\[
M_i(t)=N_i(t)-\int_0^tY_i(s)\{\lambda_0(s)+\bbeta_0^T\bZ_i(s)\}\,ds.
\]
We will also use $N(t)$, $Y(t)$, and $M(t)$ to denote the generic versions of these processes.

The pseudoscore function of \citet{Lin:Ying:semi:1994} is defined as
\[
\bU(\bbeta)=\frac{1}{n}\sum_{i=1}^n\int_0^{\tau}\{\bZ_i(t)-\overline{\bZ}(t)\}\{dN_i(t)-Y_i(t)\bbeta^T\bZ_i(t)\,dt\},
\]
where $\bbeta \in \mathbb{R}^p$, $\overline{\bZ}(t)=\sum_{j=1}^nY_j(t)\bZ_j(t)/\sum_{j=1}^nY_j(t)$, and $\tau$ is the maximum follow-up time. This estimating function is linear in $\bbeta$; through some algebraic manipulation, we can write $\bU(\bbeta)=\bb-\bV\bbeta$ with
\[
\bb=\frac{1}{n}\sum_{i=1}^n\int_0^{\tau}\{\bZ_i(t)-\overline{\bZ}(t)\}\,dN_i(t)
\]
and
\begin{equation}\label{eq:V_def}
\bV=\frac{1}{n}\sum_{i=1}^n\int_0^{\tau}Y_i(t)\{\bZ_i(t)-\overline{\bZ}(t)\}^{\otimes2}\,dt,
\end{equation}
where $\bv^{\otimes2}=\bv\bv^T$ for any vector $\bv$. Since $\bV$ is positive semidefinite, integrating $-\bU(\bbeta)$ with respect to $\bbeta$ leads to the least-squares-type loss function
\begin{equation}\label{eq:loss}
L(\bbeta)=\frac{1}{2}\bbeta^T\bV\bbeta-\bb^T\bbeta.
\end{equation}
Using this loss function for regularized estimation in model \eqref{eq:model} has been suggested by a number of authors, including \citet{Leng:Ma:path:2007} and \citet{Mart:Sche:cova:2009}; the latter authors noted also that $L(\bbeta)$ can be interpreted as an empirical prediction error, up to a constant, for the part of the model orthogonal to the at-risk indicator.

We now define the regularized estimator $\widehat{\bbeta}$ as a solution to the regularization problem
\begin{equation}\label{eq:opt}
\widehat{\bbeta}=\argmin_{\bbeta\in\mathbb{R}^p}\biggl\{Q(\bbeta)\equiv L(\bbeta)+\sum_{j=1}^pp_{\lambda}(|\beta_j|)\biggr\},
\end{equation}
where $\bbeta=(\beta_1,\cdots,\beta_p)^T$ and $p_{\lambda}(\theta)$, $\theta\ge0$, is a penalty function that depends on the regularization parameter $\lambda \geq 0$. When the minimization problem \eqref{eq:opt} is nonconvex, we will consider a local minimizer, as is common in the literature. It is often convenient to rewrite the penalty function as $p_{\lambda}(\cdot)=\lambda\rho_{\lambda}(\cdot)$; we write $\rho_{\lambda}(\cdot)$ as $\rho(\cdot)$ when it is free of $\lambda$. Without the penalty term, $\widehat{\bbeta}$ reduces to the pseudoscore estimator of \citet{Lin:Ying:semi:1994}. When the dimensionality is high, however, some form of regularization is needed to guard against overfitting, and the performance of the regularized estimator depends critically on the choice of the penalty function. Thus, in what follows we will first define a general class of penalty functions and discuss several popular choices among the class, and then present some theory to gain further insight into these choices.

\subsection{Penalty Function}\label{sec:pen}
To answer the question on what kind of penalty functions are ideal for model selection, \citet{Fan:Li:vari:2001} advocated penalty functions giving rise to estimators with three desired properties: sparsity, unbiasedness, and continuity. These properties motivate consideration of a class of penalty functions that satisfies the following condition.

\begin{con}\label{con:pen}
The function $\rho_{\lambda}(\theta)$ is increasing and concave in $\theta\in[0,\infty)$, and has a continuous derivative $\rho_{\lambda}'(\theta)$ on $(0,\infty)$. In addition, $\rho_{\lambda}'(\theta)$ is increasing in $\lambda$ and $\rho_{\lambda}'(0+)\equiv\rho'(0+)>0$ is independent of $\lambda$.
\end{con}

Some intuition for Condition \ref{con:pen} is as follows. The singularity at the origin encourages sparsity; the concavity assumption aims to reduce the estimation bias; the requirement that $\rho_{\lambda}'(\theta)$ is increasing in $\lambda$ allows $\lambda$ to effectively control the overall strength of the penalty. It should be noted that we do \emph{not} require \emph{strict} concavity or monotonicity, so that a wide range of penalty functions, including those that do not lead to all of the aforementioned three properties, are included in this class, which will facilitate our comparisons among different penalty functions. In the contexts of (generalized) linear models, this class of penalty functions has been studied by \citet{Lv:Fan:unif:2009} and \citet{Fan:Lv:nonc:2011}. Of particular interest are the following examples.

\begin{itemize}
\item The Lasso \citep{Tibs:regr:1996} uses the $L_1$-penalty, i.e., $\rho(\theta)=\theta$, $\theta\ge0$.
\item The smoothly clipped absolute deviation (SCAD) penalty \citep{Fan:comm:1997,Fan:Li:vari:2001} is given by the derivative
    \[
    \rho_{\lambda}'(\theta)=I(\theta\le\lambda)+\frac{(a\lambda-\theta)_+}{(a-1)\lambda}I(\theta>\lambda),\qquad\theta\ge0,
    \]
    where $a>2$ is a shape parameter. This penalty function takes off at the origin as the $L_1$-penalty and then gradually levels off until its derivative reaches zero.
\item The minimax concave penalty (MCP) proposed by \citet{Zhan:near:2010} has the derivative
    \[
    \rho_{\lambda}'(\theta)=\frac{(a\lambda-\theta)_+}{a\lambda},\qquad\theta\ge0,
    \]
    where $a>1$ is a shape parameter. In a similar spirit to SCAD, this penalty function gradually decreases its derivative to zero, except that it drops the $L_1$ part of SCAD.
\item The smooth integration of counting and absolute deviation (SICA) penalty \citep{Lv:Fan:unif:2009} takes the form
    \begin{equation}\label{eq:sica}
    \rho(\theta)=\frac{(a+1)\theta}{a+\theta},\qquad\theta\ge0,
    \end{equation}
    where $a>0$ is a shape parameter. With $a$ varying from 0 to $\infty$, this family provides a smooth homotopy between the $L_0$- and $L_1$-penalties. Each penalty function starts with slope $1+a^{-1}$ at the origin, passes through the point $(1,1)$, and decreases its slope toward zero over the interval $[0,\infty)$.
\end{itemize}

The $L_1$-penalty is a convex relaxation of the $L_0$-penalty and falls at the boundary of the class of penalty functions that satisfies Condition \ref{con:pen}. Although the $L_1$-regularization method enjoys the advantage of computational simplicity, it can suffer from several drawbacks which have motivated a number of improvements. The SCAD penalty was originally proposed to alleviate the bias caused by the $L_1$ approach, and has been shown to possess the oracle property, i.e., the resulting estimator performs asymptotically as well as the oracle estimator which knew the true sparse model in advance. The estimation bias of the Lasso can also interfere with variable selection; as a result, more stringent conditions such as the irrepresentable condition \citep{Zhao:Yu:on:2006} are typically required for consistent variable selection. The advantages of concave penalties regarding model selection consistency have recently been revealed and justified by a number of authors. \citet{Zhan:near:2010} showed that the MCP penalty has certain minimax optimality which enables it to strike a balance between the superior theoretical properties of concave penalties and the computational cost of nonconvex regularization problems. By investigating a nonasymptotic weak oracle property, \citet{Lv:Fan:unif:2009} showed that the regularity conditions needed for the $L_1$ approach can be substantially relaxed by using concave penalties. The SICA family proposed in that article has the remarkable feature that it can be used to define a sequence of regularization problems with varying theoretical performance and computational complexity.

\section{Theoretical Properties}\label{sec:hd_theory}

Besides the choice of the penalty function, the performance of the regularized estimators depends on a variety of factors, such as the dimensionality of the model, the dependency among the covariates, and the choice of the regularization parameter. In order to determine how these factors interact with each other and together affect the performance of the proposed estimators, in this section we rigourously develop a high-dimensional theory and discuss some of its implications.

We begin by introducing some notation to be used in our theoretical results. For any vector $\bv$, recall that $\bv^{\otimes2}=\bv\bv^T$ and for notational convenience, we write $\bv^{\otimes0}=1$ and $\bv^{\otimes1}=\bv$. Define
\begin{gather}
\bs^{(k)}(t)=E\{Y(t)\bZ(t)^{\otimes k}\},\qquad k=0,1,2,\label{eq:s_k}\\
\be(t)=\bs^{(1)}(t)/s^{(0)}(t),\notag\\
\bD=E\biggl[\int_0^{\tau}Y(t)\{\bZ(t)-\be(t)\}^{\otimes2}\,dt\biggr],\notag
\end{gather}
and
\[
\bSigma=E\biggl[\int_0^{\tau}\{\bZ(t)-\be(t)\}^{\otimes2}\,dN(t)\biggr].
\]
It is worthwhile to note that $\bD$ is the population counterpart of the matrix $\bV$ defined in \eqref{eq:V_def}, while $\bSigma$ is the population counterpart of the matrix
\[
\bW=\frac{1}{n}\sum_{i=1}^n\int_0^{\tau}\{\bZ_i(t)-\overline{\bZ}(t)\}^{\otimes2}\,dN_i(t).
\]
These matrices characterize the covariance structure of the model and will play a key role in our high-dimensional analysis.

Furthermore, define the \emph{active set} $A=\{j\colon\beta_{0j}\ne0\}$, where $\beta_{0j}$, $1\le j\le p$, is the $j$th component of the true regression coefficient vector $\bbeta_0$. Let $s=|A|$, i.e., the number of nonzero coefficients in $\bbeta_0$, and we allow the dimension triple $(s,n,p)$ to vary freely. Similarly, define the \emph{estimated active set} $\widehat{A}=\{j\colon\widehat{\beta}_j\ne0\}$, where $\widehat{\bbeta}=(\widehat{\beta}_1,\cdots,\widehat{\beta}_p)^T$. Denote the complement of a set $B$ by $B^c$. We will use sets to index vectors and matrices; for example, $\bbeta_{0A}$ is the vector formed by the components $\beta_{0j}$ of $\bbeta_0$ with $j\in A$, and $\bD_{A^cA}$ is the matrix formed by the entries $D_{ij}$ of $\bD$ with $i\in A^c$ and $j\in A$. Define the (half) \emph{minimum signal}
\[
d=\frac{1}{2}\min_{j\in A}|\beta_{0j}|.
\]
For any $\btheta=(\theta_1,\cdots,\theta_q)^T\in\mathbb{R}^q$ with $\theta_j\ne0$ for all $j$, following \cite{Lv:Fan:unif:2009}, define the \emph{local concavity} of the penalty function $\rho_{\lambda}(\cdot)$ at point $\btheta$ as
\[
\kappa(\rho_\lambda;\btheta)=\lim_{\ve\to0+}\max_{1\le j\le q}\sup_{|\theta_j|-\ve<t_1<t_2<|\theta_j|+\ve}\biggl\{-\frac{\rho_{\lambda}'(t_2)-\rho_{\lambda}'(t_1)} {t_2-t_1}\biggr\}.
\]
Finally, define
\begin{gather*}
\kappa_0=\sup\{\kappa(\rho_{\lambda};\btheta)\colon\btheta\in\mathbb{R}^s,\|\btheta-\bbeta_{0A}\|_{\infty}\le d\},\\
\varphi=\|\bD_{AA}^{-1}\|_{\infty},
\end{gather*}
and
\[
\mu=\Lambda_{\min}(\bD_{AA})-\lambda\kappa_0,
\]
where $\Lambda_{\min}(\cdot)$ denotes the minimum eigenvalue. It is important to note that all the quantities defined above can depend on the sample size $n$, and we have suppressed this dependency for notational simplicity.

\subsection{Weak Oracle Property}
\citet{Lv:Fan:unif:2009} introduced the concept of weak oracle property for comparing different regularization methods. An estimator is said to have the weak oracle property if it is both consistent in model selection and uniformly consistent in estimation. This notion is weaker than the oracle property introduced by \citet{Fan:Li:vari:2001} and hence can be satisfied by a broader class of estimators. To derive a nonasymptotic result regarding the weak oracle property of the proposed estimators, we need to impose the following conditions.

\begin{con}\label{con:proc}
\begin{inparaenum}[(i)]
\item $\int_0^{\tau}\lambda_0(t)\,dt<\infty$.
\item $P\{Y(\tau)=1\}>0$.
\item There exist constants $D,K,r>0$ such that
\[
P\biggl(\sup_{t\in[0,\tau]}|Z_j(t)|>x\biggr)\le D\exp(-Kx^r)
\]
for all $x>0$ and $j=1,\cdots,p$.
\item The sample paths of $Z_j(\cdot)$, $j=1,\cdots,p$, are of uniformly bounded variation.
\end{inparaenum}
\end{con}

\begin{con}\label{con:inco}
There exist constants $\alpha\in(0,1]$, $\gamma\in[0,1/2]$, and $c>0$ such that
\[
\|\bD_{A^cA}\bD_{AA}^{-1}\|_{\infty}\le\biggl\{(1-\alpha)\frac{\rho'(0+)}{\rho_{\lambda}'(d)}\biggr\}\wedge(cn^{\gamma}).
\]
\end{con}

In Condition \ref{con:proc}, parts (i) and (ii) are standard for survival models; part (iii) controls the tail behavior of the covariates and is trivially satisfied for bounded covariates; part (iv) is a very mild technical condition that will facilitate entropy calculations.

Condition \ref{con:inco} is an analog of condition (35) in \citet{Lv:Fan:unif:2009} for penalized least squares, which is in turn a generalization of condition (15) in \citet{Wain:shar:2009} for the Lasso. Often for linear regression, such conditions are first imposed on the deterministic Gram matrix, and then a variety of random design matrices such as Gaussian ensembles can be further considered. For survival models such as model \eqref{eq:model}, however, there is no exact analog of the deterministic Gram matrix; here the matrix $\bV$, which plays the same role as the Gram matrix in linear regression, involves the at-risk indicators and hence is \emph{nondeterministic}. Thus, Condition \ref{con:inco} is imposed on the population version $\bD$ of $\bV$. Note also that we are not restricted to the cases where the covariates are bounded or Gaussian.

The right-hand side of Condition \ref{con:inco} consists of two parts: the first part is an upper bound that reflects the intrinsic capability of the penalty function for variable selection; the second part is at most $O(\sqrt{n})$, where the parameter $\gamma$ needs to be determined by other conditions to be presented later. For the $L_1$-penalty, the first part is bounded by constant 1, which is stringent; for concave penalties, the upper bound is generally relaxed, since concavity implies that $\rho_{\lambda}'(\theta)$ is decreasing in $\theta$ and thus $\rho'(0+)/\rho_{\lambda}'(d)$ can diverge asymptotically. When signals are fairly strong so that $d\gg\lambda$, the first part imposes no constraint for the SCAD and MCP penalties, since $\rho_{\lambda}'(d)=0$ in that case. Also, the upper bound for the SICA penalty can be substantially relaxed by choosing a small value of $a$.

Since Condition \ref{con:inco} and definitions of $\varphi$ and $\mu$ involve the matrices $\bD_{A^cA}\bD_{AA}^{-1}$, $\bD_{AA}^{-1}$, and $\bD_{AA}$, a key step to establishing the weak oracle property is to show that the empirical counterparts of these matrices are close to them in some sense. This intermediate result is provided by the following lemma, which gives explicit probability bounds for similar conditions to hold for the empirical matrices. In what follows, let $\Omega_L$ denote the event that $\max_{j=1}^p\sup_{t\in[0,\tau]}|Z_j(t)|\le L$ for $L>0$.

\begin{lem}[Concentration of empirical matrices]\label{lem:sample1}
Under Conditions \ref{con:pen}--\ref{con:inco}, if $\mu>0$ and $\varphi\vee\mu^{-1}=O(\sqrt{n}/s)$, then there exist constants $D,K>0$ such that
\begin{gather}
P(\|\bV_{AA}^{-1}\|_{\infty}\ge2\varphi\mid\Omega_L)\le s^2D\exp\biggl\{-K\frac{n}{L^4}\biggl(\frac{1}{\varphi^2s^2}\wedge1\biggr)\biggr\},\label{eq:phi}\\
\begin{aligned}
&P\biggl[\|\bV_{A^cA}\bV_{AA}^{-1}\|_{\infty}\ge\biggl\{\Bigl(1-\frac{\alpha}{2}\Bigr)\frac{\rho'(0+)}{\rho_{\lambda}'(d)}\biggr\}\wedge(2cn^{\gamma})\mid\Omega_L\biggr]\\
&\qquad\le(p-s)sD\exp\biggl[-K\frac{n}{L^4}\biggl\{\frac{\bigl(\rho_{\lambda}'(d)^{-1}\wedge n^{\gamma}\bigr)^2}{\varphi^2s^2}\wedge1\biggr\}\biggr]\\
&\qquad\rel{\le}{}+s^2D\exp\biggl\{-K\frac{n}{L^4}\biggl(\frac{1}{\varphi^2s^2}\wedge1\biggr)\biggr\},
\end{aligned}\label{eq:rho}
\end{gather}
and
\begin{equation}\label{eq:kappa}
P(\Lambda_{\min}(\bV_{AA})\le\lambda\kappa_0\mid\Omega_L)\le s^2D\exp\biggl\{-K\frac{n}{L^4}\biggl(\frac{\mu^2}{s^2}\wedge1\biggr)\biggr\}.
\end{equation}
\end{lem}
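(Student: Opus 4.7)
The plan is to show that the empirical matrices concentrate entrywise around their population counterparts $\bD_{AA}$ and $\bD_{A^cA}$, and then translate this concentration into the three stated bounds via matrix perturbation.

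First I would decompose $V_{ij}-D_{ij}$ into a piece involving $\frac{1}{n}\sum_k\int_0^\tau Y_k(t)Z_{ki}(t)Z_{kj}(t)\,dt$ minus its expectation (a sum of i.i.d.\ terms) plus cross terms that involve $\overline{\bZ}(t)-\be(t)$ integrated against $Y$ and $Z$. On $\Omega_L$ the integrands are bounded by a polynomial in $L$, while Condition \ref{con:proc}(ii) lets us bound $\overline{Y}$ away from zero on $[0,\tau]$ with overwhelming probability, so the ratio defining $\overline{\bZ}$ is well controlled once its numerator and denominator are. For the supremum-in-$t$ concentration required by the cross terms, the bounded-variation Condition \ref{con:proc}(iv) controls the bracketing entropy of the classes $\{Y(t)Z_j(t):t\in[0,\tau]\}$ and $\{Y(t):t\in[0,\tau]\}$, so a standard maximal inequality combined with Bernstein's inequality on brackets yields, conditional on $\Omega_L$, an entrywise bound of the form
\[
P(|V_{ij}-D_{ij}|\ge\delta\mid\Omega_L)\le D\exp\!\left(-K\frac{n}{L^4}(\delta^2\wedge\delta)\right)
\]
for all $1\le i,j\le p$ and suitable $\delta>0$; the $L^4$ arises because two covariates enter the product, and the $\wedge\delta$ encodes Bernstein's transition from the sub-Gaussian to the bounded regime.

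Next I would lift entrywise control to matrix norm bounds using $\|\bM\|_\infty\le s\max_{i,j}|M_{ij}|$ for an $s\times s$ matrix. Taking $\delta$ of order $1/(\varphi s)$ and union-bounding over the $s^2$ entries of $\bV_{AA}$, the assumption $\varphi=O(\sqrt n/s)$ keeps us in the quadratic Bernstein regime and gives $\|\bV_{AA}-\bD_{AA}\|_\infty\le 1/(2\varphi)$ with the probability appearing in \eqref{eq:phi}. On this event the Neumann expansion $\bV_{AA}^{-1}=\sum_{k\ge0}\{\bD_{AA}^{-1}(\bD_{AA}-\bV_{AA})\}^k\bD_{AA}^{-1}$ converges and yields $\|\bV_{AA}^{-1}\|_\infty\le 2\varphi$, establishing \eqref{eq:phi}. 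For \eqref{eq:rho} I would write
\[
\bV_{A^cA}\bV_{AA}^{-1}-\bD_{A^cA}\bD_{AA}^{-1}=(\bV_{A^cA}-\bD_{A^cA})\bV_{AA}^{-1}+\bD_{A^cA}\bD_{AA}^{-1}(\bD_{AA}-\bV_{AA})\bV_{AA}^{-1},
\]
use $\|\bV_{AA}^{-1}\|_\infty\le 2\varphi$ from the previous step together with Condition \ref{con:inco}, and take $\delta$ of order $(\rho_\lambda'(d)^{-1}\wedge n^\gamma)/(\varphi s)$; a union bound over the $(p-s)s$ entries of $\bV_{A^cA}$ produces the first term on the right of \eqref{eq:rho}, and the bound from \eqref{eq:phi} contributes the second. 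Finally, \eqref{eq:kappa} follows from Weyl's inequality,
\[
\Lambda_{\min}(\bV_{AA})\ge\Lambda_{\min}(\bD_{AA})-\|\bV_{AA}-\bD_{AA}\|_2\ge\Lambda_{\min}(\bD_{AA})-s\max_{i,j\in A}|V_{ij}-D_{ij}|,
\]
by choosing $\delta$ of order $\mu/s$ and using $\mu^{-1}=O(\sqrt n/s)$ to stay in the quadratic regime.

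The main obstacle is the uniform-in-$t$ concentration of $\overline{\bZ}(t)-\be(t)$ with the correct $L^4$ dependence and an exponential tail in $n$. This step is where Condition \ref{con:proc}(iv) is essential: it bounds the bracketing numbers of the relevant process classes by polynomials in $1/\epsilon$, enabling a Talagrand/Bernstein-type maximal inequality conditional on $\Omega_L$. Once that uniform control is in hand, the remaining work is bookkeeping: propagate the numerator and denominator concentration through the ratio defining $\overline{\bZ}$, combine the pieces into the entrywise bound on $V_{ij}-D_{ij}$, and assemble the final probabilities by the union bounds over the $s\times s$, $(p-s)\times s$, and $s\times s$ collections of entries indicated in the three displayed inequalities.
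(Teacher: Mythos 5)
Your proposal follows essentially the same route as the paper: entrywise concentration of $\bV$ around $\bD$ via empirical-process maximal inequalities plus a Talagrand/Massart-type functional concentration bound conditional on $\Omega_L$, union bounds to lift this to $\|\cdot\|_\infty$ control, the matrix-inversion perturbation bound for \eqref{eq:phi}, the identical two-term decomposition of $\bV_{A^cA}\bV_{AA}^{-1}-\bD_{A^cA}\bD_{AA}^{-1}$ for \eqref{eq:rho}, and an eigenvalue perturbation inequality for \eqref{eq:kappa}. The only cosmetic differences are that the paper invokes the Hoffman--Wielandt inequality (via the Frobenius norm) where you use Weyl's inequality, and it controls the relevant function classes through uniform (VC-hull) entropy rather than bracketing; both choices are interchangeable here.
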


Inequalities \eqref{eq:phi} and \eqref{eq:rho} show that there would not be much difference if we had defined the quantity $\varphi$ or imposed Condition \ref{con:inco} on the empirical matrices. The eigenvalue condition $\Lambda_{\min}(\bV_{AA})>\lambda\kappa_0$ is needed for identification of a strict local minimizer of problem \eqref{eq:opt}; inequality \eqref{eq:kappa} says that this condition holds with high probability if $\Lambda_{\min}(\bD_{AA})$ and $\lambda\kappa_0$ have a positive gap $\mu$ that does not shrink to zero too fast.

We now state our main theoretical result regarding the weak oracle property of the proposed estimators.

\begin{thm}[Weak oracle property]\label{thm:weak_oracle}
In addition to Conditions \ref{con:pen}--\ref{con:inco}, assume that the following conditions hold:
\begin{gather}
\frac{n\bigl(\rho_{\lambda}'(d)^{-1}\wedge n^{\gamma}\bigr)^2}{\varphi^2s^2(\log p)^{r_1}}\to\infty,\qquad\frac{n(\varphi^{-1}\wedge\mu)^2}{s^2(\log s)^{r_1}}\to\infty,\label{eq:wop_dim}\\
\frac{n\lambda^2}{(\log p)^{r_1}}\to\infty,\qquad\frac{n^{1-2\gamma}\lambda^2}{(\log s)^{r_1}}\to\infty,\label{eq:wop_lam}
\end{gather}
and
\begin{equation}\label{eq:wop_d}
d\ge c_1\varphi\lambda\rho'(0+),
\end{equation}
where $\mu>0$, $r_1=(r+4)/r$, and $c_1=2+1/(4c)$. Then, for some constants $D,K>0$, with probability at least
\[
1-D\exp\biggl[-Kn^{1/r_1}\biggl\{\frac{(\varphi^{-1}\wedge\mu)^2}{s^2}\wedge1\biggr\}^{1/r_1}\biggr]
-D\exp\biggl\{-Kn^{1/r_1}\biggl(\frac{\lambda^2}{n^{2\gamma}}\wedge1\biggr)^{1/r_1}\biggr\}\to1,
\]
there exists a regularized estimator $\widehat{\bbeta}$ that satisfies the following properties:
\begin{compactenum}[(a)]
\item (Sparsity) $\widehat{\bbeta}_{A^c}=\bzero$.
\item ($L_{\infty}$-loss) $\|\widehat{\bbeta}_A-\bbeta_{0A}\|_{\infty}\le c_1\varphi\lambda\rho'(0+)$.
\end{compactenum}
\end{thm}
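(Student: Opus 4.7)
The plan is to follow the standard two-step route for nonconvex weak-oracle results: first construct a candidate estimator supported only on the true active set $A$ by solving a restricted stationarity equation, then verify that the global subdifferential/KKT conditions hold so that this candidate is a strict local minimizer of $Q$. The quadratic structure $L(\bbeta)=\tfrac{1}{2}\bbeta^T\bV\bbeta-\bb^T\bbeta$ makes the KKT equations explicit, and the algebraic identity $\bb-\bV\bbeta_0=\bU(\bbeta_0)$ reduces everything to two kinds of concentration: concentration of the empirical matrices around $\bD$, already supplied by Lemma~\ref{lem:sample1}, and concentration of the pseudoscore $\bU(\bbeta_0)$ around zero.

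Set $\widehat{\bbeta}_{A^c}=\bzero$ and seek $\widehat{\bbeta}_A$ in the closed $\ell_\infty$-ball $B=\{\bbeta_A\colon\|\bbeta_A-\bbeta_{0A}\|_\infty\le c_1\varphi\lambda\rho'(0+)\}$ satisfying the restricted equation $\bV_{AA}\bbeta_A-\bb_A+\bar\rho(\bbeta_A)=\bzero$, where $\bar\rho_j(\bbeta_A)=p_\lambda'(|\beta_{Aj}|)\sgn(\beta_{Aj})$. Rewriting this as the fixed-point equation $\bbeta_A=\bbeta_{0A}+\bV_{AA}^{-1}(\bU_A(\bbeta_0)-\bar\rho(\bbeta_A))$ and applying Brouwer's theorem gives existence once we have $\|\bV_{AA}^{-1}\|_\infty\le 2\varphi$ (from~\eqref{eq:phi}) and $\|\bU_A(\bbeta_0)\|_\infty$ is of order $\lambda$. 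The minimum-signal condition~\eqref{eq:wop_d} ensures $|\widehat{\beta}_{Aj}|\ge d>0$ for all $j\in A$ whenever $\widehat{\bbeta}_A\in B$, yielding part~(a) of the conclusion, together with $\|\bar\rho(\widehat{\bbeta}_A)\|_\infty\le\lambda\rho_\lambda'(d)\le\lambda\rho'(0+)$ by monotonicity of $\rho_\lambda'$. To verify the KKT condition at the zero coordinates, eliminate $\widehat{\bbeta}_A$ via the restricted equation to obtain
\[
\bV_{A^cA}\widehat{\bbeta}_A-\bb_{A^c}=-\bU_{A^c}(\bbeta_0)+\bV_{A^cA}\bV_{AA}^{-1}\bU_A(\bbeta_0)-\bV_{A^cA}\bV_{AA}^{-1}\bar\rho(\widehat{\bbeta}_A).
\]
Taking $\ell_\infty$ norms and invoking the empirical version of Condition~\ref{con:inco} from~\eqref{eq:rho} together with the pseudoscore tail bound then yields the strict inequality $\|\bV_{A^cA}\widehat{\bbeta}_A-\bb_{A^c}\|_\infty<\lambda\rho'(0+)$, which is the coordinatewise first-order condition for $\widehat{\beta}_j=0$, $j\in A^c$, to sit at a local minimum of $Q$; the strict second-order condition on $A$ comes from $\Lambda_{\min}(\bV_{AA})>\lambda\kappa_0$ via~\eqref{eq:kappa}.

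The main obstacle is the required tail bound on the pseudoscore: on the event $\Omega_L$ for a slowly growing truncation level $L=L_n$, one needs $P(\|\bU(\bbeta_0)\|_\infty\ge\eta\mid\Omega_L)\le pD\exp\{-Kn\eta^2/L^{\mathrm{const}}\}$, which after choosing $L_n\sim(\log p)^{1/r}$ to control $P(\Omega_L^c)$ via Condition~\ref{con:proc}(iii) produces the exponents $n^{1/r_1}$ with $r_1=(r+4)/r$ appearing in the conclusion. Because $\bU(\bbeta_0)=n^{-1}\sum_i\int_0^\tau\{\bZ_i(t)-\overline{\bZ}(t)\}\,dM_i(t)$ uses the \emph{random} centering $\overline{\bZ}(t)$ rather than a deterministic mean, a direct Bernstein bound for martingale integrals is not applicable; the approach is to replace $\overline{\bZ}(t)$ by its deterministic limit $\be(t)=\bs^{(1)}(t)/s^{(0)}(t)$, apply a Bernstein-type inequality for bounded predictable integrands against $M_i$ (valid on $\Omega_L$) to the centered term, and control the residual by uniform empirical-process concentration of $n^{-1}\sum_i Y_i(t)\bZ_i(t)$ around $\bs^{(1)}(t)$, using the bounded-variation sample paths from Condition~\ref{con:proc}(iv) to supply the entropy bound and the sub-Weibull tail to justify the truncation. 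A final union bound over the $p$ coordinates of $\bU(\bbeta_0)$ together with Lemma~\ref{lem:sample1} and the scaling conditions~\eqref{eq:wop_dim}--\eqref{eq:wop_lam} produces the stated probability: the first exponential absorbs the contributions of Lemma~\ref{lem:sample1} and the second absorbs the pseudoscore concentration.
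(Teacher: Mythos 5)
Your proposal is correct and follows essentially the same route as the paper: fix $\widehat{\bbeta}_{A^c}=\bzero$, obtain $\widehat{\bbeta}_A$ from the restricted stationarity equation by Brouwer's fixed point theorem on the $\ell_\infty$-hypercube of radius $c_1\varphi\lambda\rho'(0+)$, verify the off-support first-order condition via the empirical bound \eqref{eq:rho} and the second-order condition via \eqref{eq:kappa} (this is exactly the paper's Lemma~\ref{lem:kkt}), and tune the truncation level $L_n$ to produce the $n^{1/r_1}$ exponent. The only (immaterial) difference is in the pseudoscore concentration: you center $\overline{\bZ}(t)$ at $\be(t)$ and bound the residual crudely by $\sup_t|\bar{Z}_j(t)-e_j(t)|\int_0^\tau|dM(t)|$, whereas the paper's Lemma~\ref{lem:concU} treats $\mathbb{P}_n\int_0^\tau\bar{Z}_j(t)\,dM(t)$ as a supremum over a bracketing-entropy-controlled class of bounded-variation functions near $e_j$ and applies Massart's functional Hoeffding inequality; both give the same $n^{-1/2}$ rate with the required exponential tails.
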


To develop intuition for the two conditions in \eqref{eq:wop_dim}, we consider some simplified cases. First, concavity implies that $\rho_{\lambda}'(d)\le\rho'(0+)$; thus, a sufficient condition for the first condition in \eqref{eq:wop_dim} to hold is
\begin{equation}\label{eq:wop_dim1}
\frac{n}{\varphi^2s^2(\log p)^{r_1}}\to\infty.
\end{equation}
Consider the second condition in \eqref{eq:wop_dim} and recall that $\mu=\Lambda_{\min}(\bD_{AA})-\lambda\kappa_0$. For the $L_1$-penalty, $\kappa_0=0$; for SCAD and MCP, $\lambda\kappa_0=(a-1)^{-1}$ and $a^{-1}$, respectively. Thus, for this condition to hold for these penalties, it suffices to assume that $\Lambda_{\min}(\bD_{AA})$ is bounded away from zero and that
\[
\frac{n}{\varphi^2s^2(\log s)^{r_1}}\to\infty,
\]
where the latter is implied by \eqref{eq:wop_dim1}. Therefore, conditions in \eqref{eq:wop_dim} are primarily constraints on the growth rates of the model dimensions $p$ and $s$ and certain matrix norms of $\bD_{AA}^{-1}$. On the other hand, if we assume, for simplicity, that $\varphi$ is constant, then \eqref{eq:wop_dim1} gives a lower bound for the number of observations that are needed for guaranteed sparse recovery, $n\gg s^2(\log p)^{r_1}$. This is an interesting setting, since it shows that the proposed estimators can handle a nonpolynomially growing dimension of covariates as high as $\log p=o(n^{1/r_1})$, while the dimension of the true sparse model grows as $s=o(\sqrt{n})$. In particular, for bounded covariates, we can take $r_1=1$ by letting $r \rightarrow \infty$ and thus allow $\log p=o(n)$.

For simplicity, consider the case of bounded covariates, i.e., $r_1=1$. The two conditions in \eqref{eq:wop_lam} give a lower bound for the regularization parameter $\lambda$,
\[
\lambda\gg\sqrt{\frac{\log p}{n}}\vee\sqrt{\frac{\log s}{n^{1-2\gamma}}}.
\]
Thus, in view of \eqref{eq:wop_d}, we see that $\lambda$ should be chosen to satisfy
\[
\sqrt{\frac{\log p}{n}}\vee\sqrt{\frac{\log s}{n^{1-2\gamma}}}\ll\lambda\le\frac{d}{c_1\varphi\rho'(0+)}.
\]
For such choices of $\lambda$ to exist, the minimum signal $d$ must satisfy
\begin{equation}\label{eq:dgg}
d\gg\varphi\biggl(\sqrt{\frac{\log p}{n}}\vee\sqrt{\frac{\log s}{n^{1-2\gamma}}}\biggr).
\end{equation}
Recall that $\gamma\in[0,1/2]$ has appeared in Condition \ref{con:inco}. More insight can be gained by comparing the two parts on the right-hand side of \eqref{eq:dgg}: the first part will dominate if $n^{\gamma}\ll\sqrt{(\log p)/(\log s)}$, and in this case, Theorem \ref{thm:weak_oracle} guarantees recovery of signals that satisfy $d\gg\varphi\sqrt{(\log p)/n}$, independent of $\gamma$; otherwise, the second part will dominate, and the weakest recoverable signal will depend on the correlation between the two sets of true variables and noise variables as reflected by the value of $\gamma$. Of course, for the $L_1$-penalty, since the first part in Condition \ref{con:inco} always dominates the second part, we can simply take $\gamma=0$, and thus the value of $\gamma$ plays no role in determining the lower bound for $d$.

Despite the similarities between the results presented here and those for (generalized) linear models in \citet{Lv:Fan:unif:2009} and \citet{Fan:Lv:nonc:2011}, it is worthwhile to note some important differences. The restriction on the correlation structure described in Condition \ref{con:inco} has a more complex form, in that the matrix $\bD$ involves not only the covariates but also the failure process and censoring mechanism. This complexity arises from the semiparametric nature of survival models and the presence of censoring, and as a consequence, the ability of performing variable selection is affected by the interplay of several factors, including the covariates, baseline hazard, and censoring mechanism. Moreover, although our results allow the dimensions to grow at rates comparable to those available for linear regression models, our proofs suggest that the necessary sample size for observing the effects of these growth rates, which is determined by the constants, may be significantly larger. In addition, there is a rate loss in characterizing the convergence of the matrix $\bV$ to its population counterpart $\bD$. These facts call for a need to increase the sample size for making reliable inference.

\subsection{Oracle Property}\label{sec:oracle}
In addition to model selection consistency, the oracle property requires the regularized estimator to be asymptotically as efficient as the oracle estimator with the true sparse model known \emph{a priori}. For this purpose, some extra eigenvalue conditions are needed. Define $\Lambda_1=\Lambda_{\min}(\bD_{AA})$, $\Lambda_2=\Lambda_{\min}(\bSigma_{AA})$, and $\Lambda_3=\Lambda_{\min}(\bD_{AA}^{-1}\bSigma_{AA}\bD_{AA}^{-1})$. The oracle property of the proposed regularized estimators is stated in the following result.

\begin{thm}[Oracle property]\label{thm:oracle}
Assume that all conditions of Theorem \ref{thm:weak_oracle} hold. In addition, assume that
\begin{equation}\label{eq:op_dim}
\frac{n\Lambda_1^2}{s^2(\log s)^{r_1}}\to\infty,\qquad\frac{n\Lambda_2^2}{s^2}\to\infty,\qquad \frac{n\Lambda_1^4\Lambda_3}{s^3}\to\infty,
\end{equation}
and
\begin{equation}\label{eq:op_lam}
\frac{ns\lambda^2\rho_{\lambda}'(d)^2}{\Lambda_1^2\Lambda_3}\to0,
\end{equation}
where $r_1=(r+4)/r$. Then, for some constants $D,K>0$, with probability at least
\[
1-D\exp\biggl[-Kn^{1/r_1}\biggl\{\frac{(\varphi^{-1}\wedge\mu\wedge\Lambda_1)^2}{s^2}\wedge1\biggr\}^{1/r_1}\biggr]
-D\exp\biggl\{-Kn^{1/r_1}\biggl(\frac{\lambda^2}{n^{2\gamma}}\wedge1\biggr)^{1/r_1}\biggr\}\to1,
\]
there exists a regularized estimator $\widehat{\bbeta}$ that satisfies the following properties:
\begin{compactenum}[(a)]
\item (Sparsity) $\widehat{\bbeta}_{A^c}=\bzero$.
\item (Asymptotic normality) For every $\bu\in\mathbb{R}^s$ with $\|\bu\|_2=1$,
\[
\sqrt{n}\bu^T\bSigma_{AA}^{-1/2}\bD_{AA}(\widehat{\bbeta}_A-\bbeta_{0A})\xrightarrow{\mathcal{D}}N(0,1).
\]
\end{compactenum}
\end{thm}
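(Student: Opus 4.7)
The strategy is to bootstrap from Theorem~\ref{thm:weak_oracle} (which already gives part~(a) and tight control over the active coordinates) to an asymptotically linear representation of $\widehat{\bbeta}_A-\bbeta_{0A}$, and then read off the asymptotic normality from a Lindeberg--Feller CLT for the leading term. On the high-probability event where Theorem~\ref{thm:weak_oracle} applies, $|\widehat{\beta}_j|\ge|\beta_{0j}|-c_1\varphi\lambda\rho'(0+)\ge d>0$ for every $j\in A$ by \eqref{eq:wop_d}. Hence $Q$ is differentiable at $\widehat{\bbeta}$ in every active coordinate, and since $L$ is quadratic and $\widehat{\bbeta}_{A^c}=\bbeta_{0A^c}=\bzero$ the restricted first-order condition reduces to
\[
\bV_{AA}(\widehat{\bbeta}_A-\bbeta_{0A})=\bU_A(\bbeta_0)-\bar{\mathbf{p}}_{\lambda}(\widehat{\bbeta}_A),
\]
where $\bU_A(\bbeta_0)=\bb_A-\bV_{AA}\bbeta_{0A}$ and $\bar{\mathbf{p}}_{\lambda}(\widehat{\bbeta}_A)\in\mathbb{R}^s$ has entries $\lambda\rho_{\lambda}'(|\widehat{\beta}_j|)\sgn(\widehat{\beta}_j)$.

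Inverting $\bV_{AA}$ and premultiplying by $\sqrt{n}\,\bu^T\bSigma_{AA}^{-1/2}\bD_{AA}$ decomposes the target quantity as $T_1+T_2+T_3$, with
\begin{gather*}
T_1=\sqrt{n}\,\bu^T\bSigma_{AA}^{-1/2}\bU_A(\bbeta_0),\qquad T_2=-\sqrt{n}\,\bu^T\bSigma_{AA}^{-1/2}\bar{\mathbf{p}}_{\lambda}(\widehat{\bbeta}_A),\\
T_3=\sqrt{n}\,\bu^T\bSigma_{AA}^{-1/2}(\bD_{AA}\bV_{AA}^{-1}-\mathbf{I})[\bU_A(\bbeta_0)-\bar{\mathbf{p}}_{\lambda}(\widehat{\bbeta}_A)].
\end{gather*}
For $T_1$, cancellation of the compensator gives $\bU_A(\bbeta_0)=n^{-1}\sum_i\int_0^{\tau}\{\bZ_{iA}(t)-\overline{\bZ}_A(t)\}\,dM_i(t)$; swapping $\overline{\bZ}_A(t)$ for the deterministic $\be_A(t)$, with the error handled by an empirical-process bound analogous to those behind Lemma~\ref{lem:sample1}, rewrites $T_1$ as $n^{-1/2}\sum_i\bu^T\bSigma_{AA}^{-1/2}\boldsymbol{\xi}_i+o_P(1)$, where $\boldsymbol{\xi}_i=\int_0^{\tau}\{\bZ_{iA}(t)-\be_A(t)\}\,dM_i(t)$ are i.i.d., mean-zero with $E[\boldsymbol{\xi}_i\boldsymbol{\xi}_i^T]=\bSigma_{AA}$. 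Each summand then has unit variance and $T_1\xrightarrow{\mathcal{D}}N(0,1)$ follows from Lindeberg--Feller after verifying a Lyapunov moment bound via Condition~\ref{con:proc}(iii) and $n\Lambda_2^2/s^2\to\infty$ from \eqref{eq:op_dim}.

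For $T_2$, the active-set lower bound $|\widehat{\beta}_j|\ge d$ together with the monotonicity of $-\rho_{\lambda}'$ gives $\|\bar{\mathbf{p}}_{\lambda}(\widehat{\bbeta}_A)\|_2\le\sqrt{s}\,\lambda\rho_{\lambda}'(d)$, so that $|T_2|\le\sqrt{ns\lambda^2\rho_{\lambda}'(d)^2/\Lambda_2}\le\sqrt{ns\lambda^2\rho_{\lambda}'(d)^2/(\Lambda_1^2\Lambda_3)}\to 0$ by \eqref{eq:op_lam} and the elementary inequality $\Lambda_2\ge\Lambda_1^2\Lambda_3$. For $T_3$, writing $\bD_{AA}\bV_{AA}^{-1}-\mathbf{I}=(\bD_{AA}-\bV_{AA})\bV_{AA}^{-1}$ and promoting Lemma~\ref{lem:sample1} to an operator-norm concentration bound, together with the gap $\mu>0$, yields $\|\bV_{AA}-\bD_{AA}\|_{\mathrm{op}}=O_P(s/\sqrt{n})$ and $\|\bV_{AA}^{-1}\|_{\mathrm{op}}=O_P(\Lambda_1^{-1})$; combined with $\|\bU_A(\bbeta_0)\|_2=O_P(\sqrt{s/n})$ and the bound above on the penalty vector, this produces $|T_3|=O_P(\sqrt{s^3/(n\Lambda_1^4\Lambda_3)})+o_P(1)$, which vanishes by the third condition in \eqref{eq:op_dim}. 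Aggregating the probabilities from Theorem~\ref{thm:weak_oracle}, Lemma~\ref{lem:sample1}, the $\overline{\bZ}\to\be$ concentration, and the envelope $\Omega_L$ (controlled via Condition~\ref{con:proc}(iii)) yields the stated lower bound. The hardest step will be the Lindeberg verification in $T_1$: because $\bSigma_{AA}^{-1/2}$ can inflate higher moments of $\boldsymbol{\xi}_i$ by a factor depending on $\Lambda_2$, one must couple the subexponential tail bound on $\bZ$ with a Bernstein-type inequality for the stochastic integrals $\boldsymbol{\xi}_i$ to certify the Lyapunov ratio uniformly in the growing-$s$ regime.
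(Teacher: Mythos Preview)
Your proposal is correct and follows essentially the same architecture as the paper: invoke the KKT identity from Theorem~\ref{thm:weak_oracle}, premultiply by $\sqrt{n}\,\bu^T\bSigma_{AA}^{-1/2}\bD_{AA}$, and split into a leading martingale term, a matrix-approximation remainder, and a penalty-bias remainder. The paper groups the three pieces slightly differently (its $T_3$ keeps the $\bD_{AA}\bV_{AA}^{-1}$ prefactor on the penalty vector and its $T_2$ contains only $\bU_A(\bbeta_0)$), but the bounds coincide; your use of $\Lambda_2\ge\Lambda_1^2\Lambda_3$ for $T_2$ is a valid detour that the paper avoids by writing $\|\bSigma_{AA}^{-1/2}\bD_{AA}\|_2=\Lambda_3^{-1/2}$ directly.

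The one genuinely different choice is the CLT for $T_1$. The paper does \emph{not} swap $\overline{\bZ}_A$ for $\be_A$ nor reduce to an i.i.d.\ sum; it applies the martingale central limit theorem of Andersen and Gill directly to $\sqrt{n}\,\bu^T\bSigma_{AA}^{-1/2}\bU_A(\bbeta_0)$, and verifies convergence of the (optional) variation by showing $\bu^T\bSigma_{AA}^{-1/2}\bW_{AA}\bSigma_{AA}^{-1/2}\bu\to1$ via the concentration of $\bW_{AA}$ around $\bSigma_{AA}$ (Lemma~\ref{lem:concW-Sigma}) and the second condition in \eqref{eq:op_dim}. This is cleaner because the predictable integrand $\overline{\bZ}_A$ is handled natively by the martingale structure, so there is no $\overline{\bZ}\to\be$ correction to control and no Lyapunov ratio to chase in the growing-$s$ regime. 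Your i.i.d.\ route is legitimate and ultimately works under the stated conditions, but it buys you two extra chores (the swap error and the Lindeberg verification you flag as ``hardest'') that the martingale CLT sidesteps.
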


The three conditions in \eqref{eq:op_dim} relate the sparsity dimension $s$ to eigenvalue bounds for the matrices $\bD_{AA}$, $\bSigma_{AA}$, and $\bD_{AA}^{-1}\bSigma_{AA}\bD_{AA}^{-1}$. If we consider the special case where the eigenvalues of these matrices are all bounded away from zero, then these conditions are trivially satisfied for $s=o(n^{1/3})$. The form of \eqref{eq:op_dim}, however, deals with much more general situations and is very meaningful in that the eigenvalue bounds are closely related to the difficulty of the estimation problem.

In the context of linear regression, it is well known that the $L_1$-penalty does not have the oracle property \citep{Zou:adap:2006,Wain:shar:2009}. It is clear from \eqref{eq:op_lam} that this is also the case for the problem considered here. To see this, consider the case where $\Lambda_1$ and $\Lambda_3$ are fixed, and note that $\rho'(d)\equiv1$ for the $L_1$-penalty. Conditions in \eqref{eq:wop_lam} imply that $n\lambda^2\to\infty$, and hence \eqref{eq:op_lam} cannot hold. For the SCAD and MCP penalties, \eqref{eq:op_lam} is trivially satisfied for $d\gg\lambda$, since $\rho_{\lambda}'(d)=0$ in that case. For the SICA penalty, we have $\rho'(d)=a(a+1)/(a+d)^2$; thus, in order to obtain the oracle property, we should take $a\to0+$ at a rate such that $d\gg a$ and $\sqrt{ns}\lambda a/d^2\to0$. This result is reasonable since the SICA penalty approaches the $L_0$-penalty as $a\to0+$.

\section{Implementation}\label{sec:hd_comp}

In this section, we describe an efficient coordinate descent algorithm for implementation of the proposed methodology, analyze its convergence properties, and discuss the selection of tuning parameters.

\subsection{Coordinate Descent Algorithm}
The idea of coordinate optimization for penalized least-squares problems was proposed by \citet{Fu:pena:1998} and \cite{Daub:Defr:De:2004}, and was demonstrated by \citet{Frie:Hast:Hfli:Tibs:path:2007} and \cite{Wu:Lang:coor:2008} to be exceptionally efficient for large-scale sparse problems. Recently, a number of authors, including \citet{Fan:Lv:nonc:2011}, \citet{Breh:Huan:coor:2011}, and \citet{Mazu:Frie:Hast:spar:2011}, generalized this idea to regularized regression with concave penalties and showed that it is an attractive alternative to earlier proposals such as the local quadratic approximation (LQA) \citep{Fan:Li:vari:2001} and local linear approximation (LLA) \citep{Zou:Li:one-:2008}.

To balance the regularization strengths on different components of $\bbeta$, we minimize the weighted version of the objective function,
\[
\widetilde{Q}(\bbeta;\lambda)\equiv L(\bbeta)+\sum_{j=1}^pV_{jj}p_{\lambda}(|\beta_j|),
\]
where, for simplicity, we assume that $V_{jj}\ne0$ for all $j$. The coordinate descent method minimizes the objective function in one coordinate at a time and cycles through all coordinates until convergence. To produce a solution path over a grid of values of the regularization parameter $\lambda$, at each grid point the solution from a neighboring point is used as a warm start to accelerate convergence and, for concave penalties, to avoid suboptimal local solutions. Pick $\lambda_{\max}$ sufficiently large such that $\widehat{\bbeta}=\bzero$ is a (local) solution and take a decreasing sequence $(\lambda_1,\cdots,\lambda_K)$ with $\lambda_1=\lambda_{\max}$; for the Lasso, SCAD, MCP, and SICA, one can take $\lambda_{\max}=\max_{j=1}^p(|b_j|/V_{jj})$, where $b_j$ is the $j$th component of $\bb$. The following algorithm produces a solution path $\widehat{\bbeta}^{\lambda_k}$ over a grid of points $\lambda_k$, $k=1,\cdots,K$.

\begin{alg}\label{alg:cd}\mbox{}
\begin{compactenum}
\item Initialize $\widehat{\bbeta}=\bzero$ and set $k=1$.
\item Cyclically for $j=1,\cdots,p$, update the $j$th component $\widehat\beta_j$ of $\widehat{\bbeta}$ by the univariate (global) minimizer of $\widetilde{Q}(\widehat{\bbeta};\lambda_k)$ with respect to $\widehat{\beta}_j$ until convergence.
\item Set $\widehat{\bbeta}^{\lambda_k}=\widehat{\bbeta}$ and $k\leftarrow k+1$.
\item Repeat Steps 2 and 3 until $k=K+1$.
\end{compactenum}
\end{alg}

The above algorithm requires efficiently solving a univariate regularization problem in Step~2. Since $L(\bbeta)$ is quadratic, closed-form solutions to this univariate problem exist for commonly used penalty functions, including the Lasso, SCAD, MCP, and SICA. The solutions for the first three penalties have been given in the aforementioned references, and we provide the solution for the SICA penalty in Appendix \ref{sec:app_sica}. Coordinate descent may be slow if the resulting model is not sparse; hence, to save computation time, one can set a level of sparsity for early stopping if only models up to a certain size are desired.

\subsection{Convergence Analysis}
Denote by $\bbeta^m=(\beta_1^m,\cdots,\beta_p^m)^T$ the $m$th update vector generated by coordinate descent, i.e.,
\[
\beta_j^m=\argmin_{\theta\in\mathbb{R}}\widetilde{Q}(\beta_1^m,\cdots,\beta_{j-1}^m,\theta,\beta_{j+1}^{m-1},\cdots,\beta_p^{m-1})
\]
for $j=1,\cdots,p$ and $m=1,2,\cdots$, where $\bbeta^0=(\beta_1^0,\cdots,\beta_p^0)^T$ is the starting point and we have suppressed the dependence of $\widetilde{Q}(\bbeta)$ on $\lambda$. Define the \emph{maximum concavity} of the penalty function $p_{\lambda}(\cdot)$ by
\[
\kappa(p_{\lambda})=\sup_{0<t_1<t_2<\infty}\biggl\{-\frac{p_{\lambda}'(t_2)-p_{\lambda}'(t_1)}{t_2-t_1}\biggr\}.
\]
For the $L_1$-penalty, SCAD, MCP, and SICA, we have $\kappa(p_{\lambda})=0$, $(a-1)^{-1}$, $a^{-1}$, and $2\lambda(a^{-1}+a^{-2})$, respectively. Due to the concavity of $p_{\lambda}(\cdot)$, the objective function $\widetilde{Q}(\bbeta)$ is generally nonconvex. However, under certain conditions such that the concavity of $p_{\lambda}(\cdot)$ is dominated by the convexity of the quadratic loss function $L(\bbeta)$ componentwise (subject to normalization by $V_{jj}$), $\widetilde{Q}(\bbeta)$ can still be strictly convex along each coordinate. This key observation leads to the following convergence result.

\begin{thm}[Convergence of coordinate descent]\label{thm:conv}
Under Condition \ref{con:pen}, the sequence $\{\bbeta^m\}$ generated by the coordinate descent algorithm is bounded. Moreover, the following statements hold:
\begin{compactenum}[(a)]
\item If the penalty function $p_{\lambda}(\cdot)$ satisfies $\kappa(p_{\lambda})<1$, then every cluster point of $\{\bbeta^m\}$ is a local minimizer of $\widetilde{Q}(\bbeta)$.
\item If in addition, the sequence $\{\bbeta^m\}$ eventually lies in a compact neighborhood $\mathcal{K}$ of $\bbeta^*$ such that $\bbeta^*$ is the unique local minimizer of $\widetilde{Q}(\bbeta)$ in $\mathcal{K}$, then the sequence $\{\bbeta^m\}$ converges to $\bbeta^*$.
\end{compactenum}
\end{thm}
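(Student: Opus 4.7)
The plan is to exploit the coordinatewise convexity implied by $\kappa(p_\lambda) < 1$, together with $V_{jj} > 0$. Along the $j$th coordinate, the objective reads
$f_j(t) = \tfrac12 V_{jj} t^2 + c_j t + V_{jj}\, p_\lambda(|t|) + \text{const},$
with $c_j$ depending linearly on $\bbeta_{-j}$. The quadratic part is $V_{jj}$-strongly convex, the penalty $V_{jj}\, p_\lambda(|t|)$ has maximum concavity $V_{jj}\kappa(p_\lambda)$, and their sum is strongly convex with modulus at least $V_{jj}(1-\kappa(p_\lambda))>0$. Hence $f_j$ is coercive with a unique minimizer, and for each one-coordinate update I obtain the strong descent inequality
$\widetilde Q(\bbeta^{\text{before}}) - \widetilde Q(\bbeta^{\text{after}}) \ge \tfrac12 V_{jj}(1-\kappa(p_\lambda))\bigl(\beta_j^{\text{after}} - \beta_j^{\text{before}}\bigr)^2.$
In particular $\widetilde Q(\bbeta^m)$ is non-increasing, and since $dN_i$ is supported on $\{Y_i = 1\}$, $\bb$ lies in the column space of $\bV$, giving $L(\bbeta)\ge -\tfrac12 \bb^T\bV^+\bb$, so $\widetilde Q$ is bounded below.

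I would first establish boundedness of $\{\bbeta^m\}$. The level-set inclusion $\widetilde Q(\bbeta^m)\le \widetilde Q(\bbeta^0)$ combined with the lower bound on $L$ yields $\sum_j V_{jj}\, p_\lambda(|\beta_j^m|) \le C$. When $p_\lambda$ is unbounded (the $L_1$ and SCAD cases), this alone forces $|\beta_j^m|$ to stay bounded. When $p_\lambda$ is bounded (MCP, SICA), one combines the level-set bound on $L(\bbeta^m)$ with the coordinatewise first-order condition---for $\beta_j^m \ne 0$, $V_{jj}\beta_j^m + \sum_{k\ne j} V_{jk}\beta_k^{\text{curr}} - b_j = -V_{jj}\, p_\lambda'(|\beta_j^m|)\sgn(\beta_j^m)$, whose right-hand side is bounded in absolute value by $V_{jj}\lambda\rho'(0+)$---so that quadratic growth of $V_{jj}t^2$ dominates any linear compensation from the other coordinates, forcing the orbit into a fixed compact set.

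Summing the strong descent inequality over $j$ and $m$ yields $\sum_m \|\bbeta^m - \bbeta^{m-1}\|_2^2 < \infty$, so $\bbeta^{m+1} - \bbeta^m \to \bzero$. For part (a), let $\bbeta^{m_k}\to\bbeta^*$ be a cluster point, which exists by boundedness. Writing $\bbeta^{m_k,j}$ for the working iterate at the moment coordinate $j$ is updated during cycle $m_k$, the global-minimizer rule gives the subgradient inclusion
$0 \in \nabla_j L(\bbeta^{m_k,j}) + V_{jj}\, \partial\bigl[p_\lambda(|\cdot|)\bigr]\bigl(\beta_j^{m_k,j}\bigr).$
Since $\bbeta^{m+1} - \bbeta^m \to \bzero$, every working iterate in cycle $m_k$ converges to $\bbeta^*$; continuity of $p_\lambda'(\cdot)$ on $(0,\infty)$ and upper semicontinuity of the subdifferential of $|\cdot|$ at $0$ allow me to pass to the limit in each inclusion simultaneously, producing the full KKT condition for $\widetilde Q$ at $\bbeta^*$. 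Since $\widetilde Q$ is strongly convex along each coordinate, any sufficiently small $\ell_\infty$-perturbation of $\bbeta^*$ can be reached by a sequence of coordinatewise moves, each of which strictly increases $\widetilde Q$, so $\bbeta^*$ is a local minimizer.

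For part (b), any subsequence of $\{\bbeta^m\}$---which eventually lies in the compact set $\mathcal{K}$---has a further convergent sub-subsequence whose limit lies in $\mathcal{K}$, is a local minimizer of $\widetilde Q$ by part (a), and therefore must equal $\bbeta^*$ by the uniqueness hypothesis; hence $\bbeta^m\to\bbeta^*$. The hard part will be the rigorous passage to the limit in the subgradient condition when coordinates cross zero, since $\partial[p_\lambda(|\cdot|)]$ jumps at the origin: one must split the index set into components where $\beta_j^{m_k}$ is eventually of constant sign and components where it collapses to zero, and invoke sign stability on the former and upper semicontinuity of the absolute-value subdifferential on the latter to patch together the full optimality condition. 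A secondary subtlety is that boundedness does not reduce to classical coercivity when $\bV$ is singular (as in the high-dimensional regime $p>n$), which is why the argument above blends the level-set bound with the first-order condition rather than appealing to coercivity of $\widetilde Q$ directly.
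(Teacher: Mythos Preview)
Your proposal follows the same three-step architecture as the paper's proof: (i) use coordinatewise strong convexity under $\kappa(p_\lambda)<1$ to obtain the descent inequality $\widetilde Q(\balpha_{j-1}^m)-\widetilde Q(\balpha_j^m)\ge \tfrac{c_0}{2}(\beta_j^{m-1}-\beta_j^m)^2$ and hence $\bbeta^m-\bbeta^{m-1}\to\bzero$; (ii) pass to the limit in the coordinatewise optimality conditions at a cluster point; (iii) for part~(b), extract a convergent subsequence and invoke the uniqueness hypothesis. The paper phrases step~(ii) via directional derivatives and their upper semicontinuity, whereas you use subgradient inclusions; these are equivalent formulations of the same idea.

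Two remarks. First, your treatment of boundedness is more careful than the paper's: you correctly observe that $\bb$ lies in the column space of $\bV$ (since $dN_i$ is supported on $\{Y_i=1\}$), which gives a lower bound on $L$; the paper's one-line claim that $|\beta_j|\to\infty$ forces $\widetilde Q(\bbeta)\to\infty$ is not literally true when $\bV$ is singular and $p_\lambda$ is bounded, so your flagging of this subtlety is apt. That said, your own patch via the first-order condition is still circular, since $\sum_{k\ne j}V_{jk}\beta_k^{\text{curr}}$ is not a priori bounded. Second, your final step for local minimality---reaching $\bbeta^*+\bdelta$ by successive coordinate moves, ``each of which strictly increases $\widetilde Q$''---has a gap: after the first move to $\bbeta^*+\delta_1\be_1$, this new point is no longer the coordinate-$2$ minimizer (that minimizer has shifted with $\delta_1$), so the second move need not increase $\widetilde Q$. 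The paper sidesteps this by exploiting the separability of the penalty to write $\partial\widetilde Q(\bbeta^*;\btheta)=\sum_j\partial\widetilde Q(\bbeta^*;\theta_j\be_j)$, and then showing each summand is nonnegative by upper semicontinuity; you should replace your coordinatewise-move argument with this directional-derivative decomposition.
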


Note that the condition $\kappa(p_{\lambda})<1$ is always satisfied by the $L_1$-penalty, SCAD ($a>2$), and MCP ($a>1$). For the SICA penalty, the condition is satisfied if $\lambda(a^{-1}+a^{-2})<1/2$. This latter condition suggests that no matter what value $\lambda$ takes, one can adjust the value of $a$ to ensure computational stability. Since the optimal $\lambda$ is often small, one can set $a$ to a small value to achieve good performance of the SICA method.

It is often useful to determine when a local minimizer $\widehat{\bbeta}$ of $\widetilde{Q}(\bbeta)$ is also a global minimizer and when the sequence $\{\bbeta^m\}$ converges to it. The (restricted) global optimality of nonconcave penalized likelihood estimators was characterized by \cite{Fan:Lv:nonc:2011}. The following result gives sufficient conditions for the (restricted) global optimality of $\widehat{\bbeta}$ on some subspace of $\mathbb{R}^p$ and the convergence of $\{\bbeta^m\}$ to the global optimum.

\begin{thm}[Restricted global optimality]\label{thm:global}
Let $\widehat{\bbeta}$ be a local minimizer of $\widetilde{Q}(\bbeta)$. For any subset $S$ of $\{1,\cdots,p\}$, denote by $\mathcal{B}_S$ the $|S|$-dimensional subspace $\{\bbeta\in\mathbb{R}^p\colon\beta_j=0,j\not\in S\}$. Under Condition \ref{con:pen}, the following statements hold:
\begin{compactenum}[(a)]
\item If $\widehat\bbeta$ lies in $\mathcal{B}_S$ and $\Lambda_{\min}(\bV_{SS})\ge\kappa(p_{\lambda})\max_{j\in S}V_{jj}$, then $\widehat{\bbeta}$ is a global minimizer of $\widetilde{Q}(\bbeta)$ in $\mathcal{B}_S$.
\item If the sequence $\{\bbeta^m\}$ generated by the coordinate descent algorithm eventually lies in $\mathcal{B}_S$ and $\Lambda_{\min}(\bV_{SS})>\kappa(p_{\lambda})\max_{j\in S}V_{jj}$, then $\widetilde{Q}(\bbeta)$ has a unique global minimizer $\bbeta^*$ in $\mathcal{B}_S$ and the sequence $\{\bbeta^m\}$ converges to $\bbeta^*$.
\end{compactenum}
\end{thm}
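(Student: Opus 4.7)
My plan is to absorb the bounded concavity of the penalty into a quadratic correction that is dominated by the curvature of $L$ on the subspace $\mathcal B_S$, reducing both parts to standard convex-optimization facts. The starting point is a one-dimensional lemma: by the definition of $\kappa(p_\lambda)$, the inequality $p_\lambda'(t_1)-p_\lambda'(t_2)\le\kappa(p_\lambda)(t_2-t_1)$ holds for all $0<t_1<t_2$, so $\theta\mapsto p_\lambda(\theta)+\tfrac12\kappa(p_\lambda)\theta^2$ has nondecreasing derivative on $(0,\infty)$; together with Condition~\ref{con:pen} (positive $p_\lambda'(0+)$, monotonicity) and the evenness of $p_\lambda(|\cdot|)$, the map $\theta\mapsto p_\lambda(|\theta|)+\tfrac12\kappa(p_\lambda)\theta^2$ is convex on $\mathbb R$. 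Summing weighted copies over $j\in S$ produces, for $\bbeta\in\mathcal B_S$, the decomposition
\[
\widetilde Q(\bbeta)=\tfrac12\bbeta_S^T\bigl\{\bV_{SS}-\kappa(p_\lambda)\,\mathrm{diag}(V_{jj})_{j\in S}\bigr\}\bbeta_S-\bb_S^T\bbeta_S+\sum_{j\in S}V_{jj}\bigl[p_\lambda(|\beta_j|)+\tfrac12\kappa(p_\lambda)\beta_j^2\bigr],
\]
whose last sum is convex by the lemma, and whose quadratic is convex (resp.\ positive definite) precisely under $\Lambda_{\min}(\bV_{SS})\ge\kappa(p_\lambda)\max_{j\in S}V_{jj}$ (resp.\ strict inequality).

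With this decomposition in hand, part (a) is essentially immediate: $\widetilde Q|_{\mathcal B_S}$ is convex on the convex set $\mathcal B_S$, and any local minimizer of a convex function on a convex set is global. For part (b), the strict version of the eigenvalue inequality upgrades the quadratic to positive definite, so $\widetilde Q|_{\mathcal B_S}$ is strictly convex; coercivity (from positive definiteness of $\bV_{SS}-\kappa(p_\lambda)\mathrm{diag}(V_{jj})_{j\in S}$ and nonnegativity of the penalty) then gives existence of a unique global minimizer $\bbeta^*\in\mathcal B_S$.

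It remains to show $\bbeta^m\to\bbeta^*$. Because the tail of the sequence lies in the closed set $\mathcal B_S$, the coordinate updates on indices $j\notin S$ must return $0$ at every step of the tail, and from that point on the algorithm is exactly coordinate descent within $\mathcal B_S$ on the strictly convex coercive restricted objective. Theorem~\ref{thm:conv} gives boundedness; monotone decrease of $\widetilde Q(\bbeta^m)$ together with a standard argument (using that, for a quadratic plus separable lower-semicontinuous penalty, the one-dimensional minimization is unique here) shows that every cluster point is coordinate-wise stationary for $\widetilde Q|_{\mathcal B_S}$, hence equals $\bbeta^*$ by strict convexity; boundedness plus a unique cluster point forces convergence. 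The main obstacle I anticipate is precisely this last step, since the strict convexity only holds on $\mathcal B_S$ and not on $\mathbb R^p$, so Theorem~\ref{thm:conv}(b) cannot be invoked as a black box with a neighborhood of $\bbeta^*$ in $\mathbb R^p$. The clean workaround is to carry out the cluster-point analysis entirely within the subspace $\mathcal B_S$, exploiting the fact that, along the tail, the algorithm is genuinely a coordinate descent on the strictly convex $\widetilde Q|_{\mathcal B_S}$.
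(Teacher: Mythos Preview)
Your proof is correct and, for part~(a), essentially identical to the paper's: you make explicit the decomposition that underlies the one-line assertion that $\widetilde Q$ is convex on $\mathcal B_S$ under the eigenvalue condition.

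For part~(b), your argument works, but the paper avoids the workaround you describe by means of a small observation you missed: since $\Lambda_{\min}(\bV_{SS})\le\min_{j\in S}V_{jj}\le\max_{j\in S}V_{jj}$, the strict hypothesis $\Lambda_{\min}(\bV_{SS})>\kappa(p_\lambda)\max_{j\in S}V_{jj}$ automatically forces $\kappa(p_\lambda)<1$. With this in hand, Theorem~\ref{thm:conv} applies directly on all of $\mathbb R^p$: boundedness holds, and by part~(a) of that theorem every cluster point of $\{\bbeta^m\}$ is a local minimizer of $\widetilde Q$ in $\mathbb R^p$. Since the tail lies in the closed set $\mathcal B_S$, every cluster point lies in $\mathcal B_S$; being a local minimizer of $\widetilde Q$ in $\mathbb R^p$ it is \emph{a fortiori} a local minimizer of $\widetilde Q|_{\mathcal B_S}$, and strict convexity on $\mathcal B_S$ forces it to equal $\bbeta^*$. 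Boundedness plus a unique cluster point gives convergence. This is precisely the content of Theorem~\ref{thm:conv}(b) once one takes $\mathcal K$ to be a large closed ball intersected with $\mathcal B_S$ (the word ``neighborhood'' in Theorem~\ref{thm:conv}(b) is not used in its proof; only compactness and closedness matter). So the obstacle you anticipated dissolves: the paper gets to reuse Theorem~\ref{thm:conv} as a black box, whereas your route redoes the cluster-point analysis within the subspace. Both are valid; the paper's is shorter.
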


The condition in part (a) is trivially satisfied for the $L_1$-penalty. For the SCAD, MCP, and SICA, the condition can be satisfied with some $S$ if the correlation among covariates is not too strong and the concavity of the penalty function is not too large. Following \cite{Fan:Lv:nonc:2011}, under some mild regularity conditions we can further establish the global optimality of $\widehat{\bbeta}$ on the union of all $|S|$-dimensional coordinate subspaces of $\mathbb{R}^p$. Although the global optimality is somewhat hard to guarantee for concave penalties, we remark that it is not required for achieving practically good performance of the regularized estimator. In fact, our theoretical arguments suggest that a sparse local solution should possess nice statistical properties, while the coordinate descent algorithm is likely to find such a solution by carefully following a solution path from the sparsest end.

\subsection{Selection of Tuning Parameters}
After a solution path has been produced, the optimal regularization parameter $\lambda$ can be chosen by $M$-fold cross-validation. The cross-validation score is defined as
\[
\operatorname{CV}(\lambda)=\frac{1}{M}\sum_{m=1}^ML^{(m)}(\widehat{\bbeta}^{(-m)}(\lambda)),
\]
where $L^{(m)}(\cdot)$ is the loss function given in \eqref{eq:loss} computed from the $m$th part of the data, and $\widehat{\bbeta}^{(-m)}(\lambda)$ is the estimate from the data with the $m$th part removed. The concave penalties have one additional parameter $a$ to be tuned. For the SCAD penalty, $a=3.7$ was suggested by \citet{Fan:Li:vari:2001} from a Bayesian perspective and is commonly used in the literature. Since the MCP is similar to SCAD, we also take $a=3.7$. The choice of $a$ for the SICA penalty requires a little more caution, since a small $a$ that is often needed to yield a superior theoretical performance can sometimes cause the problem of computational instability. This can be clearly seen from the discussion following Theorem~\ref{thm:conv}; a too small $a$ would cause the convexity condition $\lambda(a^{-1}+a^{-2})<1/2$ to be violated. To solve this dilemma, we first compute a pilot solution path with a larger $a$ suggested by the convexity condition, and then set $a$ to a smaller value and recompute the solution path by taking the pilot solutions as warm starts. If needed, the above process can be repeated several times to gain further stability. Often we take $a=1$ for the pilot solution and one or a few values toward zero for recomputing the solution. We find that, in the settings we have considered, the practical performance of our methods is not sensitive to these choices. In the more difficult settings and if computing resources allow, a fine tuning on $a$ may yield additional benefits; see \citet{Mazu:Frie:Hast:spar:2011} for more discussion on producing a solution surface along $(\lambda,a)$.

\section{Simulation Studies}\label{sec:sim}

We conducted three simulation studies to evaluate the finite-sample performance of the proposed regularization methods with the Lasso, SCAD, MCP, and SICA penalties, and compare them to the oracle estimator which knew the true sparse model in advance. Since the elastic net (Enet) \citep{Zou:Hast:regu:2005} is also a common choice for the penalty function and is known to outperform the Lasso when the important variables are highly correlated, we also include it in our comparisons.

The first simulation study aims to examine the case where $p$ is comparable to but smaller than $n$. We generated data from the model
\[
\lambda(t\mid\bZ)=1+\bbeta_0^T\bZ,
\]
where $\bZ$ has a multivariate normal distribution with mean zero and covariance matrix $(\rho^{|i-j|})_{i,j=1}^p$ and subject to $\bbeta_0^T\bZ>-1$, and $\bbeta_0=(\bv^T,\cdots,\bv^T,\bzero)^T$ with the pattern $\bv=(1,0,-1,0,0,0)^T$ repeated $q$ times. We set $\rho=0.1$ and $0.5$, and $q=3$ so that the sparsity dimension $s=6$. We considered $p=50$ and 100, and $n=200$. The censoring time $C$ has a uniform distribution $U(0,c_0)$ with $c_0$ chosen to obtain a censoring rate about $25\%$.

We first choose the optimal regularization parameter $\lambda$ by tenfold cross-validation and evaluate the performance of the resulting estimators by six measures. The first two measures quantify prediction performance: PE1 is the prediction error based on the loss function $L(\cdot)$ (up to a constant), and PE2 is the $L_2$ prediction error in the excess risk, $\|\bZ^T(\widehat{\bbeta}-\bbeta_0)\|_2$, both computed from an independent test sample of size 500. For estimation accuracy, we report the $L_2$-loss $\|\widehat{\bbeta}-\bbeta_0\|_2$ and $L_1$-loss $\|\widehat{\bbeta}-\bbeta_0\|_1$. The other two measures pertain to model selection consistency: \#S and \#FN refer to the number of selected variables and the number of incorrectly excluded variables (false negatives), respectively. The means and standard deviations of each measure over 100 replicates are summarized in Table~\ref{tab:low}.

\begin{center}
[Table \ref{tab:low} about here.]
\end{center}

From Table~\ref{tab:low}, we see that the Lasso and Enet had very close performance in this setting and the concave penalties outperformed the Lasso and Enet in that they selected a sparser model with better prediction and estimation performance. As expected from their similarity, the SCAD and MCP had comparable performance, of which the latter selected a slightly sparser model than the former. The SICA performed best among the five methods, with a performance very close to that of the oracle estimator; in most cases, it was able to identify exactly the six important variables.

It has been noted that prediction-based criteria such as cross-validation tend to include too many irrelevant variables in Lasso-type procedures \citep{Mein:Bhlm:high:2006,Leng:Lin:Wahb:note:2006}, which was also observed in our simulations. Thus, it is natural to wonder whether the concave penalties really improve the variable selection performance over the Lasso and Enet, if the most appropriate amount of regularization for each method can be selected. In other words, it is sensible to compare the performance of the best model that ever exists on the solution path, assuming that we were assisted by an oracle which knew the true sparse model. Since our goal is to identify a parsimonious model that includes as many relevant variables as possible, we recorded the maximum number of correctly selected variables among all models up to a certain size on the solution path and averaged it over all replicates. By definition, this measure is an increasing function of the (maximum) model size and characterizes the best possible performance that can be achieved by any model selection criterion.

The results on the above defined performance measure for the first simulation study are shown in Figure~\ref{fig:low}. It is clear from the figure that the SICA outperformed the other methods in that it had a high chance to identify all six important variables immediately after the model size reached the true sparsity dimension. The MCP also exhibited a performance boost above the true sparsity level, while the boost for SCAD was more subtle and occurred at a later stage. These trends are magnified when the correlation among covariates increases. The Enet improved on the performance of the Lasso only marginally in this case.

\begin{center}
[Figure \ref{fig:low} about here.]
\end{center}

In the second simulation study, we compare the performance of various methods in the setting where $p$ is much larger than $n$. We used the same model setup as before, except that $p=2000$ and 5000, and $n=500$. The performance measures over 50 replicates with tenfold cross-validation to choose the optimal $\lambda$ are shown in Table \ref{tab:high}, which indicates the same trends as in Table~\ref{tab:low}. Note that in the most difficult setting, $p=5000$ and $\rho=0.5$, both the Lasso and Enet missed on average at least one important variable, whereas the concave penalties still included all six important variables. The variable selection performance assisted by an oracle is then compared in Figure~\ref{fig:high}. The SICA was still the winner in this case, followed by MCP. It is interesting to note that the Enet clearly outperformed the Lasso with moderate model size; this difference, however, was not notable in the cross-validated results. Since cross-validation seems to have less tolerance on false negatives than on false positives in these scenarios, its behavior is better explained by the tail of the curve: the SCAD had a slight performance boost on the tail, whereas the Lasso and Enet were very close to each other and both below the SCAD.

\begin{center}
[Table \ref{tab:high} about here.]
\end{center}

\begin{center}
[Figure \ref{fig:high} about here.]
\end{center}

The third simulation study reflects the more challenging case where the true model is only approximately sparse and we wish to analyze the robustness of our methods against departure from the sparsity assumption. To this end, in the previous model with $p=2000$ and $n=500$, we randomly chose 44 or 94 of the zero coefficients and perturbed them by $U(0,\ve)$ with signs randomly placed, so that $s=50$ or 100, respectively. This setting is intended to mimic a possible scenario in genetic studies where a relatively large number of genes have nonzero but weak effects, while a few key genes with strong effects stand out. To describe different levels of weak effects, we set $\ve=0.1$ and $0.2$. The dimensionality apparently exceeds the limit for exact recovery of all, strong and weak, signals; recall from condition \eqref{eq:wop_dim1} that we require $n\gg s^2\log p$ with constant $\varphi$ and $r_1=1$. In this case, even the oracle estimator, which is based on all covariates with nonzero effects, does not perform well. Thus, our target here is to identify the few key variables and a small portion of the variables with weak effects.

The results with tenfold cross-validation over 50 replicates are shown in Table~\ref{tab:hard}, where the added performance measure \#FN-S is the number of incorrectly excluded variables with strong effects. These results indicate that the SICA had the best performance in identifying the strong effects, because it aggressively seeks a sparse representation of the data and treats the variables with weak effects as noise variables. If the weak effects are also of interest, however, the SCAD and MCP performed better in that they selected more variables with nonzero effects at the expense of picking a larger model. It is worthwhile to note that when $\ve=0.1$, the SCAD, MCP, and SICA all had better prediction and estimation accuracy than the oracle estimator, which demonstrates the advantages of sparse modeling. The variable selection performance of all methods assisted by an oracle is depicted in Figure~\ref{fig:hard}, from which we see that the benefits of the concave penalties remain intact. Moreover, as the perturbation threshold $\ve$ increases, i.e., the model deviates further from sparsity, the SCAD and MCP showed less aggressive behavior than the SICA and were able to identify more variables with weak effects.

\begin{center}
[Table \ref{tab:hard} about here.]
\end{center}

\begin{center}
[Figure \ref{fig:hard} about here.]
\end{center}

Finally, we point out that there are also scenarios where the Enet tends to outperform the other methods; for instance, when many highly correlated variables have comparable effects and the sparsity dimension is intrinsically high, the $L_2$ term in the Enet penalty will play a pivotal role in regularization. These situations, however, are not our focus in sparse modeling and we do not pursue further in this article.

\section{Real Data Analysis}\label{sec:data}
We illustrate the proposed methods by an application to the diffuse large-B-cell lymphoma (DLBCL) data analyzed by \citet{Rose:Wrig:Chan:use:2002}. This data set consists of gene expression measurements for 7399 genes and survival outcomes after chemotherapy on 240 patients. The median follow-up time was 2.8 years and 138 patients died during the study period. The aim of the study was to formulate a molecular predictor of survival after chemotherapy for the disease. As in \citet{Rose:Wrig:Chan:use:2002}, the data set was randomly divided into a training set of 160 patients and a test set of 80 patients. We then applied the Lasso, SCAD, MCP, SICA, and Enet methods to the training set and used tenfold cross-validation to choose the optimal regularization parameter.

To assess the prediction performance of the resulting model, in addition to computing the prediction error based on the loss function $L(\cdot)$, we classified the test patients into a low-risk group and a high-risk group of equal size, according to the individual predicted excess risk $\widehat{\bbeta}^T\bZ$, and performed a two-sample log-rank test. Table \ref{tab:dlbcl_rslt} reports the number of selected genes, prediction error, and $p$-value of the log-rank test for each method, and the estimated coefficients for selected genes are given in Table \ref{tab:dlbcl_est}. These results suggest that all methods performed reasonably well in prediction, but the concave penalties selected sparser models than the Lasso and Enet. To see the similarity of the prediction results, we note that 60 of the 80 test patients received the same risk classification from all five methods. The selected genes also exhibit some consistency across different methods, although variability was observed in the cross-validation procedure due to the small sample size and ultra-high dimensionality. We remark that, to gain further stability, the idea of the sure independence screening \citep{Fan:Lv:sure:2008} can be used to reduce the dimensionality to a more manageable scale before applying the regularization methods, in which case the prediction performance of all methods tends to improve.

\begin{center}
[Table \ref{tab:dlbcl_rslt} about here.]
\end{center}

\begin{center}
[Table \ref{tab:dlbcl_est} about here.]
\end{center}

\section{Discussion}\label{sec:disc}
We have proposed a class of regularization methods for simultaneous variable selection and estimation in the additive hazards model. The main message of this article is that regularization methods can be used for variable selection with censored survival data in high-dimensional settings under conditions that are parallel to those in the linear regression context. Moreover, we have shown that concave penalties can substantially improve on the $L_1$ method and yield sparser models with better prediction performance, as evident from our theoretical and numerical results. Although we have focused on the additive hazards model, the empirical process techniques used here are fairly general and can be easily adapted to other survival models; for example, a key step to establishing a high-dimensional theory for the Cox model under similar conditions is to characterize the concentration of the empirical information matrix around its population counterpart.

The fact that our theoretical results allow the dimensionality to grow exponentially with the sample size has important implications. In practice, how high dimensionality the proposed methods can handle depends critically on the sample size and the correlation structure of the covariates. Variable selection in regression, in general, is a very difficult problem, and is even more challenging in the survival context. As a consequence, a relatively large sample size is essential to making reliable inference. In situations where the proposed methods may fail, it would be desirable to explore strategies that combine the strengths of a variety of approaches, and regularization methods can then be used as building blocks in such more powerful procedures.

We have partly based our performance comparisons on the best model selected by an oracle. In practice, how to choose the optimal regularization parameter remains a challenging issue for the regularization methodology. Although cross-validation is often the choice we have to resort to and sometimes performs better than AIC or BIC-type criteria, it suffers from several drawbacks and limitations. When the dimensionality is exceedingly high and the amount of regularization is necessarily large, the cross-validation curve can easily blow up and result in selecting too few variables. If variable selection is the sole purpose, stability selection \citep{Mein:Buhl:stab:2010} and related methods could potentially provide more accurate error control. These problems will be interesting topics for future research.

\appendix
\titleformat{\section}[block]{\large\bfseries\onehalfspacing}{\appendixname~\thesection: }{0pt}{}
\numberwithin{equation}{section}
\numberwithin{lem}{section}

\section{Proofs}
\setcounter{equation}{0}
\setcounter{lem}{0}

For clarity and readability, before proceeding to the proofs of our main results, we first present a lemma that provides optimality conditions for the regularization problem \eqref{eq:opt}, and establish a series of concentration inequalities that are essential to the main proofs. These results are also of independent interest. For notational simplicity, constants that are used in our proofs may vary from line to line.

\subsection{Optimality Conditions}
The following lemma provides optimality conditions that characterize a local solution to the regularization problem \eqref{eq:opt} and will be needed in the proof of Theorem \ref{thm:weak_oracle}.
\begin{lem}[Characterization of the regularized estimator]\label{lem:kkt}
Under Condition \ref{con:pen}, $\widehat{\bbeta}\in\mathbb{R}^p$ is a strict local minimizer of problem \eqref{eq:opt} if the following conditions hold:
\begin{gather}
\bU_{\widehat{A}}(\widehat{\bbeta})-\lambda\rho_{\lambda}'(|\widehat{\bbeta}_{\widehat{A}}|)\circ\sgn(\widehat{\bbeta}_{\widehat{A}})=\bzero,\label{eq:kkt1}\\
\|\bU_{\widehat{A}^c}(\widehat{\bbeta})\|_{\infty}<\lambda\rho'(0+),\label{eq:kkt2}
\end{gather}
and
\begin{equation}\label{eq:kkt3}
\Lambda_{\min}(\bV_{\widehat{A}\widehat{A}})>\lambda\kappa(\rho_{\lambda};\widehat{\bbeta}_{\widehat{A}}),
\end{equation}
where $\circ$ is the Hadamard (entrywise) product and the functions $|\cdot|$, $\rho_{\lambda}'(\cdot)$, and $\sgn(\cdot)$ are applied componentwise.
\end{lem}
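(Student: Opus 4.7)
\emph{Plan.} The goal is to verify that \eqref{eq:kkt1}--\eqref{eq:kkt3} force $Q(\widehat{\bbeta}+\bu) > Q(\widehat{\bbeta})$ for every sufficiently small nonzero $\bu\in\mathbb{R}^p$. Since $\nabla L(\widehat{\bbeta}) = \bV\widehat{\bbeta} - \bb = -\bU(\widehat{\bbeta})$, the quadratic loss contributes exactly $L(\widehat{\bbeta}+\bu) - L(\widehat{\bbeta}) = -\bU(\widehat{\bbeta})^T\bu + \tfrac{1}{2}\bu^T\bV\bu$, and the plan is to split the penalty into contributions from $\widehat{A}$ (where $\widehat{\beta}_j\ne 0$) and from $\widehat{A}^c$ (where $\widehat{\beta}_j=0$), then match each of the three resulting pieces to one of the conditions.

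\emph{Two penalty pieces.} On the active set, whenever $\|\bu_{\widehat{A}}\|_\infty < \min_{j\in\widehat{A}}|\widehat{\beta}_j|$, the signs of $\widehat{\beta}_j + u_j$ agree with those of $\widehat{\beta}_j$, so $s\mapsto p_\lambda(|\widehat{\beta}_j + s|)$ is smooth near $0$ with derivative $\lambda\rho_\lambda'(|\widehat{\beta}_j + s|)\sgn(\widehat{\beta}_j)$. The definition of $\kappa := \kappa(\rho_\lambda;\widehat{\bbeta}_{\widehat{A}})$ yields, for any $\eta > 0$ and $\bu$ small enough, the one-sided Lipschitz bound $|\rho_\lambda'(|\widehat{\beta}_j + s|) - \rho_\lambda'(|\widehat{\beta}_j|)| \le (\kappa+\eta)|s|$, so integration gives
\[
\sum_{j\in\widehat{A}}\bigl[p_\lambda(|\widehat{\beta}_j+u_j|) - p_\lambda(|\widehat{\beta}_j|)\bigr] \ge \lambda\bigl[\rho_\lambda'(|\widehat{\bbeta}_{\widehat{A}}|)\circ\sgn(\widehat{\bbeta}_{\widehat{A}})\bigr]^T\bu_{\widehat{A}} - \tfrac{\lambda(\kappa+\eta)}{2}\|\bu_{\widehat{A}}\|_2^2.
\]
Condition \eqref{eq:kkt1} cancels the linear term here against $-\bU_{\widehat{A}}(\widehat{\bbeta})^T\bu_{\widehat{A}}$, and by \eqref{eq:kkt3} with $\mu := \Lambda_{\min}(\bV_{\widehat{A}\widehat{A}}) - \lambda\kappa > 0$, the active contribution to $Q(\widehat{\bbeta}+\bu) - Q(\widehat{\bbeta})$ is at least $\tfrac{\mu}{4}\|\bu_{\widehat{A}}\|_2^2$ once $\eta$ is small. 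On the inactive set, concavity of $\rho_\lambda$ with $\rho_\lambda(0)=0$ gives $p_\lambda(|u_j|) \ge \lambda\rho_\lambda'(|u_j|)|u_j|$, and continuity of $\rho_\lambda'$ at $0^+$ with $\rho_\lambda'(0+) = \rho'(0+)$ implies $p_\lambda(|u_j|) \ge \lambda(\rho'(0+) - o(1))|u_j|$ as $\|\bu\|_\infty\to 0$; combined with \eqref{eq:kkt2},
\[
-\bU_{\widehat{A}^c}(\widehat{\bbeta})^T\bu_{\widehat{A}^c} + \sum_{j\in\widehat{A}^c}p_\lambda(|u_j|) \ge c_0\|\bu_{\widehat{A}^c}\|_1
\]
for some $c_0 > 0$ and all sufficiently small $\bu$.

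\emph{Cross terms and conclusion.} What remains is the off-diagonal quadratic $\bu_{\widehat{A}}^T\bV_{\widehat{A}\widehat{A}^c}\bu_{\widehat{A}^c}$ together with the nonnegative $\tfrac{1}{2}\bu_{\widehat{A}^c}^T\bV_{\widehat{A}^c\widehat{A}^c}\bu_{\widehat{A}^c}$; I discard the latter and handle the former by Cauchy--Schwarz and AM--GM, bounding its absolute value by $\tfrac{\mu}{8}\|\bu_{\widehat{A}}\|_2^2 + C\|\bu_{\widehat{A}^c}\|_2^2$. The elementary bound $\|\bu_{\widehat{A}^c}\|_2^2 \le \|\bu_{\widehat{A}^c}\|_\infty\|\bu_{\widehat{A}^c}\|_1$ shows that this $\ell_2^2$-quantity is $o(\|\bu_{\widehat{A}^c}\|_1)$ as $\|\bu\|_\infty\to 0$, so it is absorbed by the positive linear $L_1$ piece. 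Assembling everything gives
\[
Q(\widehat{\bbeta}+\bu) - Q(\widehat{\bbeta}) \ge \tfrac{\mu}{8}\|\bu_{\widehat{A}}\|_2^2 + \tfrac{c_0}{2}\|\bu_{\widehat{A}^c}\|_1 > 0
\]
for every sufficiently small $\bu\ne\bzero$, establishing the strict local minimum. The main technical delicacy is this order mismatch: \eqref{eq:kkt2} supplies only a first-order $L_1$ margin on $\widehat{A}^c$ while the off-diagonal quadratic is second-order, and the argument hinges on the fact that $\ell_2^2$-quantities on $\widehat{A}^c$ are of strictly smaller order than $\|\bu_{\widehat{A}^c}\|_1$ near the origin, which is exactly what lets the strict inequalities \eqref{eq:kkt2}--\eqref{eq:kkt3} propagate to a strict inequality for $Q$.
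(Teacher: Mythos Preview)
Your argument is correct, but it differs from the paper's proof in structure. The paper proceeds in two steps: it first restricts to the active subspace $\mathcal{B}=\{\bbeta:\bbeta_{\widehat{A}^c}=\bzero\}$, where \eqref{eq:kkt3} gives local strict convexity and \eqref{eq:kkt1} gives stationarity, so $\widehat{\bbeta}$ is a strict local minimizer within $\mathcal{B}$; then, for any nearby $\bbeta_1\notin\mathcal{B}$, it projects to $\bbeta_2\in\mathcal{B}$ and applies the mean value theorem along the segment from $\bbeta_2$ to $\bbeta_1$ (which moves only inactive coordinates) together with \eqref{eq:kkt2} to obtain $Q(\bbeta_1)>Q(\bbeta_2)\ge Q(\widehat{\bbeta})$. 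This projection trick neatly decouples the three conditions and sidesteps any cross-term analysis. Your direct expansion instead treats all coordinates at once and must explicitly control the off-diagonal quadratic $\bu_{\widehat{A}}^T\bV_{\widehat{A}\widehat{A}^c}\bu_{\widehat{A}^c}$ via Cauchy--Schwarz and the order comparison $\|\bu_{\widehat{A}^c}\|_2^2=o(\|\bu_{\widehat{A}^c}\|_1)$; the payoff is an explicit quantitative lower bound $Q(\widehat{\bbeta}+\bu)-Q(\widehat{\bbeta})\ge\tfrac{\mu}{8}\|\bu_{\widehat{A}}\|_2^2+\tfrac{c_0}{2}\|\bu_{\widehat{A}^c}\|_1$, which the paper's argument does not supply.
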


\begin{proof}
We first consider the $|\widehat{A}|$-dimensional subspace $\mathcal{B}=\{\bbeta\in\mathbb{R}^p\colon\bbeta_{\widehat{A}^c}=\bzero\}$. Condition \eqref{eq:kkt3} implies that the objective function $Q(\bbeta)$ in \eqref{eq:opt} is strictly convex in a neighborhood of $\widehat{\bbeta}$ in $\mathcal{B}$. Then condition \eqref{eq:kkt1} implies that $\widehat{\bbeta}$ is a stationary point and hence a strict local minimizer of $Q(\bbeta)$ in the subspace $\mathcal{B}$.

It remains to show that, for any $\bbeta_1\in\mathbb{R}^p\setminus\mathcal{B}$ that lies in a sufficiently small neighborhood of $\widehat{\bbeta}$, we have $Q(\bbeta_1)>Q(\widehat{\bbeta})$. To this end, let $\bbeta_2$ be the projection of $\bbeta_1$ onto the subspace $\mathcal{B}$. Since $Q(\bbeta_2)\ge Q(\widehat{\bbeta})$ from the preceding paragraph, it suffices to show that $Q(\bbeta_1)>Q(\bbeta_2)$. By the mean value theorem, we have
\begin{equation}\label{eq:Q_diff}
Q(\bbeta_1)-Q(\bbeta_2)=\sum_{j\in\widehat{A}^c\colon\beta_{1j}\ne0}\frac{\partial Q(\bbeta^*)}{\partial\beta_j}\beta_{1j}
=\sum_{j\in\widehat{A}^c\colon\beta_{1j}\ne0}\{-U_j(\bbeta^*)+\lambda\rho_{\lambda}'(|\beta_j^*|)\sgn(\beta_j^*)\}\beta_{1j},
\end{equation}
where $\bbeta^*=(\beta_1^*,\cdots,\beta_p^*)^T$ is a point on the line segment between $\bbeta_1=(\beta_{11},\cdots,\beta_{1p})^T$ and $\bbeta_2$. It follows from condition \eqref{eq:kkt2} and continuity that $|U_j(\bbeta^*)|<\lambda\rho_{\lambda}'(|\beta_j^*|)\sgn(\beta_j^*)$ for all $j\in\widehat{A}^c$, provided that $\bbeta_1$, and hence $\bbeta^*$, is sufficiently close to $\widehat{\bbeta}$. Using this fact and that $\sgn(\beta_j^*)=\sgn(\beta_{1j})$, we see that each term in \eqref{eq:Q_diff} is positive, and thus $Q(\bbeta_1)>Q(\bbeta_2)$. This completes the proof.
\end{proof}

\subsection{Concentration Inequalities}\label{sec:pf_conc}
The major complexity in our proofs lies in characterizing the concentration of the large matrices $\bV$ and $\bW$ and vector $\bU(\bbeta_0)$. Due to the high dimensionality and dependency among entries, this is not a direct consequence of classical random matrix theory. Also, since each entry of the stochastic matrices or vector is not a sum of independent terms, the usual Hoeffding's inequality is not applicable. Hence, to establish the needed concentration inequalities, we rely on a functional Hoeffding-type inequality and some maximal inequalities for empirical processes as our primary mathematical tools.

We adopt the standard empirical process notation. For any measurable function $f$, we denote by $\mathbb{P}_nf$ and $Pf$ the expectations of $f$ under the empirical measure $\mathbb{P}_n$ and the probability measure $P$, respectively. Let $\|\cdot\|_{P,r}$ denote the usual $L_r(P)$-norm. The ``size'' of a class $\mathcal{F}$ of functions is measured by the \emph{bracketing number} $N_{[\,]}(\ve,\mathcal{F},L_r(P))$, the minimum number of $\ve$-brackets in $L_r(P)$ needed to cover $\mathcal{F}$, and the \emph{covering number} $N(\ve,\mathcal{F},L_2(Q))$, the minimum number of $L_2(Q)$-balls of radius $\ve$ needed to cover $\mathcal{F}$. The logarithms of the bracketing number and covering number are called \emph{entropy with bracketing} and \emph{entropy}, respectively. The \emph{bracketing integral} and \emph{uniform entropy integral} are defined as
\[
J_{[\,]}(\delta,\mathcal{F},L_2(P))=\int_0^{\delta}\sqrt{\log N_{[\,]}(\ve,\mathcal{F},L_2(P))}\,d\ve
\]
and
\[
J(\delta,\mathcal{F},L_2)=\int_0^{\delta}\sqrt{\log\sup_Q N(\ve\|F\|_{Q,2},\mathcal{F},L_2(Q))}\,d\ve,
\]
respectively, where $F$ is an envelope of $\mathcal{F}$, i.e., $|f|\le F$ for all $f\in\mathcal{F}$, and the supremum is taken over all probability measures $Q$ with $\|F\|_{Q,r}>0$. We refer the unfamiliar reader to Chapter 19 of \citet{Vaar:asym:1998} or Chapter 2 of \citet{Koso:intr:2008} for more definitions and concepts.

Define $\bS^{(k)}(t)=n^{-1}\sum_{j=1}^nY_j(t)\bZ_j(t)^{\otimes k}$, $k=0,1,2$, which are the sample counterparts of $\bs^{(k)}(t)$ defined in \eqref{eq:s_k}, and recall that $\overline{\bZ}(t)=\bS^{(1)}(t)/\bS^{(0)}(t)$. We begin with the following lemma, on which the other inequalities will be based.

\begin{lem}[Concentration of $\bS^{(k)}(\cdot)$, $k=0,1,2$]\label{lem:concS}
Under Condition \ref{con:proc}, there exist constants $C,K>0$ such that
\begin{align}
P\biggl(\sup_{t\in[0,\tau]}|S^{(0)}(t)-s^{(0)}(t)|\ge Cn^{-1/2}(1+x)\biggr)&\le\exp(-Kx^2),\label{eq:conc1}\\
P\biggl(\sup_{t\in[0,\tau]}|S_j^{(1)}(t)-s_j^{(1)}(t)|\ge Cn^{-1/2}(1+x)\mid\Omega_L\biggr)&\le\exp(-Kx^2/L^2),\label{eq:conc2}
\end{align}
and
\begin{equation}
P\biggl(\sup_{t\in[0,\tau]}|S_{ij}^{(2)}(t)-s_{ij}^{(2)}(t)|\ge Cn^{-1/2}(1+x)\mid\Omega_L\biggr)\le\exp(-Kx^2/L^4),\label{eq:conc3}
\end{equation}
for all $x>0$ and $i,j=1,\cdots,p$, where $S_j^{(1)}(\cdot)$ is the $j$th component of $\bS^{(1)}(\cdot)$ and $S_{ij}^{(2)}(\cdot)$ is the $(i,j)$th entry of the matrix $\bS^{(2)}(\cdot)$.
\end{lem}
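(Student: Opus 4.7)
The plan is to treat each of the three centered quantities as the supremum of an empirical process indexed by $t$, and to bound it via (i) a bracketing entropy estimate derived from Condition 2(iv), followed by (ii) a functional Hoeffding/Talagrand-type concentration inequality for bounded empirical processes.

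Concretely, for each pair of indices $(i,j)$ I would introduce the function classes $\mathcal{F}_k = \{f_t^{(k)} : t \in [0,\tau]\}$, where $f_t^{(0)} = Y(t)$, $f_t^{(1)} = Y(t) Z_j(t)$, and $f_t^{(2)} = Y(t) Z_i(t) Z_j(t)$. On the event $\Omega_L$ these functions are uniformly bounded by $L^k$ (with $L = 1$ for $k = 0$). The key structural observation is that along each sample path, $t \mapsto Y(t) = I(X \ge t)$ is monotone, and $t \mapsto Z_j(t)$ has uniformly bounded total variation by Condition 2(iv); it follows that, restricted to $\Omega_L$, the map $t \mapsto f_t^{(k)}$ is of bounded variation of order $L^k$ (a jump of size at most $L^k$ at $t = X$ together with an interior variation controlled by products of sup-norms and the universal variation bound). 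Classical results for classes of uniformly bounded variation on a compact interval (e.g., Theorem 2.7.5 of van der Vaart and Wellner, 1996) then yield bracketing entropies of the form $\log N_{[\,]}(\ve, \mathcal{F}_k, L_2(P)) \lesssim L^k/\ve$.

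Given this entropy bound, a standard maximal inequality (Lemma 3.4.2 of van der Vaart--Wellner) gives $E\sup_t |\mathbb{P}_n f_t^{(k)} - P f_t^{(k)}| \lesssim L^k n^{-1/2}$, so the constant $C$ in the statement can be chosen large enough to absorb this expected supremum. I would then invoke a functional Hoeffding-type inequality — for instance, Massart's sharpened form of Talagrand's concentration inequality for empirical processes of uniformly bounded functions — to concentrate the supremum around its mean. Since the envelope is $L^k$ and the uniform second-moment bound is $P(f_t^{(k)})^2 \le L^{2k}$, the sub-Gaussian variance proxy is of order $L^{2k}/n$, which produces the deviation bound $\exp(-Kx^2/L^{2k})$ matching the three stated inequalities, with $L = 1$ giving the unconditional bound \eqref{eq:conc1}.

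The main obstacle I anticipate is tracking the correct power $L^{2k}$ in the exponent, which requires using the variance (rather than merely the sup-norm) in the Talagrand-type inequality and being careful to ensure that the chosen deviation level $Cn^{-1/2}(1+x)$ falls in the sub-Gaussian rather than the Bernstein regime of the tail. A secondary technical point is that the envelope $L$ depends on the conditioning event $\Omega_L$; a convenient remedy is to work throughout with the conditional law of $(X,\bZ(\cdot))$ given $\Omega_L$, so that the empirical-process structure is preserved and the $L$-dependence reduces to a rescaling of the function class. Finally, the bracketing and concentration constants must be chosen uniformly in $i$ and $j$, which is automatic since they depend only on the universal variation constant from Condition 2(iv) and on $L$.
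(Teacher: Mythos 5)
Your proposal is correct and follows essentially the same route as the paper: exploit the uniformly-bounded-variation structure of the sample paths to get an entropy bound for the class indexed by $t$, use a maximal inequality to bound the expected supremum by $Cn^{-1/2}$ (times the envelope), and then apply Massart's functional Hoeffding-type inequality (Theorem 9 of Massart, 2000) to obtain the sub-Gaussian tail with the envelope $L^k$ entering as $L^{2k}$ in the exponent. The only cosmetic differences are that the paper controls the entropy via uniform covering numbers (treating $\{Z_j(t)\colon t\in[0,\tau]\}$ as a VC-hull class and invoking Lemma 19.38 of van der Vaart) rather than via bracketing numbers for bounded-variation classes, and that the range bound in Massart's inequality already delivers the $(L^k)^2$ factor, so the variance refinement you flag as the main obstacle is not actually needed.
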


\begin{proof}
We only show \eqref{eq:conc2}, and the other two inequalities follow similarly. Write $R_j=\sup_{t\in[0,\tau]}|S_j^{(1)}(t)-s_j^{(1)}(t)|$. The main idea is to apply a functional Hoeffding-type inequality, Theorem 9 of \citet{Mass:abou:2000}. To this end, we need to control the term $ER_j$.

We first show that the class of functions $\{Y(t)Z_j(t)\colon t\in[0,\tau]\}$ has bounded uniform entropy integral. Since a function of bounded variation can be expressed as the difference of two increasing functions, it follows from Lemma 9.10 of \citet{Koso:intr:2008} that $\mathcal{Z}_j\equiv\{Z_j(t)\colon t\in[0,\tau]\}$ is a VC-hull class associated with a VC class of index 2. Then, by Corollary 2.6.12 of \citet{Vaar:Well:weak:1996}, the entropy of $\mathcal{Z}_j$ satisfies $\log N(\ve\|F\|_{Q,2},\mathcal{Z}_j,L_2(Q))\le K'(1/\ve)$ for some constant $K'>0$, and hence $\mathcal{Z}_j$ has the uniform entropy integral
\[
J(1,\mathcal{Z}_j,L_2)\le\int_0^1\sqrt{K'(1/\ve)}\,d\ve<\infty.
\]
Also, by Example 19.16 of \citet{Vaar:asym:1998}, $\mathcal{Y}\equiv\{Y(t)\colon t\in[0,\tau]\}$ is a VC class and hence has bounded uniform entropy integral. Thus, by Theorem 9.15 of \citet{Koso:intr:2008}, $\mathcal{Y}\mathcal{Z}_j$ has bounded uniform entropy integral.

Now an application of Lemma 19.38 of \citet{Vaar:asym:1998} gives
\[
ER_j\le C'n^{-1/2}J(1,\mathcal{Y}\mathcal{Z}_j,L_2)\|F\|_{P,2}\le Cn^{-1/2}
\]
for some constants $C',C>0$, where the envelope $F$ is taken as $\sup_{t\in[0,\tau]}Y(t)|Z_j(t)|$. It follows from Theorem 9 of \citet{Mass:abou:2000} that
\[
P(R_j\ge Cn^{-1/2}(1+x)\mid\Omega_L)\le P(R_j\ge ER_j+Cn^{-1/2}x\mid\Omega_L)\le\exp(-Kx^2/L^2)
\]
for some constant $K>0$, which concludes the proof.
\end{proof}

\begin{lem}[Concentration of $\bU(\bbeta_0)$]\label{lem:concU}
Under Conditions \ref{con:proc}, there exist constants $C,D,\linebreak[1]K>0$ such that
\[
P(|U_j(\bbeta_0)|\ge Cn^{-1/2}(1+x)\mid\Omega_L)\le D\exp\biggl(-K\frac{x^2\wedge n}{L^4}\biggr)
\]
for all $x>0$ and $j=1,\cdots,p$, where $U_j(\bbeta_0)$ is the $j$th component of $\bU(\bbeta_0)$.
\end{lem}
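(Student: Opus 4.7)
The strategy is to exploit the counting-process martingale structure to rewrite $U_j(\bbeta_0)$ as the average of $n$ i.i.d.\ mean-zero stochastic integrals, plus a higher-order remainder that is controlled by Lemma~\ref{lem:concS}. Substituting $dN_i(t) = Y_i(t)\{\lambda_0(t) + \bbeta_0^T\bZ_i(t)\}\,dt + dM_i(t)$ into $\bU(\bbeta_0)$ and using the identity $\sum_{i=1}^n Y_i(t)\{\bZ_i(t) - \overline{\bZ}(t)\} \equiv 0$, which annihilates the baseline-hazard drift, one obtains
\[
U_j(\bbeta_0) = \frac{1}{n}\sum_{i=1}^n\int_0^{\tau}\{Z_{ij}(t) - \overline{Z}_j(t)\}\,dM_i(t).
\]

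I would then decompose $U_j(\bbeta_0) = I_1 - I_2$ with
\[
I_1 = \frac{1}{n}\sum_{i=1}^n\int_0^{\tau}\{Z_{ij}(t) - e_j(t)\}\,dM_i(t),\qquad I_2 = \int_0^{\tau}\{\overline{Z}_j(t) - e_j(t)\}\,d\bar{M}(t),
\]
where $\bar{M}(t) = n^{-1}\sum_{i=1}^n M_i(t)$. The random variables $\xi_i = \int_0^{\tau}\{Z_{ij}(t) - e_j(t)\}\,dM_i(t)$ are i.i.d.\ and have mean zero: conditional on $\bZ_i(\cdot)$, $M_i$ is a martingale and the integrand $Z_{ij}(\cdot) - e_j(\cdot)$ is predictable, so $E\xi_i = 0$.

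For $I_1$, under $\Omega_L$ and using Condition~\ref{con:proc}(i) and (iv), integration by parts yields a deterministic bound on $|\xi_i|$ of order $L^2$, while the predictable variation gives a variance bound of the same order (up to a constant depending on $\int_0^\tau\lambda_0$). A Bernstein-type inequality for averages of bounded i.i.d.\ mean-zero random variables then delivers
\[
P(|I_1| \ge Cn^{-1/2}(1+x)\mid\Omega_L) \le D\exp\bigl(-K(x^2\wedge n)/L^4\bigr),
\]
with the subgaussian branch $\exp(-Kx^2/L^4)$ active for $x\le\sqrt{n}$ and a transition to $\exp(-Kn/L^4)$ when $x$ exceeds $\sqrt{n}$, consistent with the targeted form.

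It remains to absorb the remainder $I_2$ into the slack $n^{-1/2}(1+x)$. Writing
\[
\overline{Z}_j(t) - e_j(t) = \frac{S_j^{(1)}(t) - s_j^{(1)}(t)}{S^{(0)}(t)} - e_j(t)\cdot\frac{S^{(0)}(t) - s^{(0)}(t)}{S^{(0)}(t)}
\]
and using the lower bound $s^{(0)}(\tau) > 0$ from Condition~\ref{con:proc}(ii), Lemma~\ref{lem:concS} yields $\sup_{t\in[0,\tau]}|\overline{Z}_j(t) - e_j(t)| = O_p(n^{-1/2})$ with exponential tails on $\Omega_L$; a parallel VC-class argument applied to $\bar{M}(\cdot)$ yields $\sup_t|\bar{M}(t)| = O_p(n^{-1/2})$. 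Integration by parts on $I_2$, combined with the uniformly bounded variation of $Z_j(\cdot)$ (and hence of $\overline{Z}_j(\cdot)$ and $e_j(\cdot)$) supplied by Condition~\ref{con:proc}(iv), then gives $|I_2| = O_p(n^{-1})$ with matching tail probabilities. The main obstacle is the careful bookkeeping of powers of $L$ needed to arrive at the stated $L^4$ in the exponent, which requires simultaneous control of both empirical processes on a common high-probability event chosen so that the Bernstein bound for $I_1$ is not degraded by the deterministic estimate for $I_2$.
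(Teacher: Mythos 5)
Your reduction of $U_j(\bbeta_0)$ to $\mathbb{P}_n\int_0^\tau\{Z_j(t)-\bar Z_j(t)\}\,dM(t)$ and the Hoeffding treatment of the i.i.d.\ part are essentially the paper's first step. Where you diverge is the remainder: the paper does not center at $e_j$ and instead controls $T_2=\mathbb{P}_n\int_0^\tau\bar Z_j(t)\,dM(t)$ as the supremum of an empirical process over the class $\mathcal{G}_j=\{\int_0^\tau f(t)\,dM(t)\colon f$ of uniformly bounded variation, $\sup_t|f(t)-e_j(t)|\le\delta'\}$, via a bracketing-number bound inherited from $N(\ve,\mathcal{F}_j,\|\cdot\|_\infty)$ and Massart's functional Hoeffding inequality. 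You instead write the remainder as $I_2=\int_0^\tau\{\bar Z_j(t)-e_j(t)\}\,d\bar M(t)$ and try to beat it down to $O_p(n^{-1})$ by pairing two $O_p(n^{-1/2})$ suprema.

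That last step contains a genuine error. Integration by parts gives $I_2=\{\bar Z_j(\tau)-e_j(\tau)\}\bar M(\tau)-\int_0^\tau\bar M(t-)\,d\{\bar Z_j(t)-e_j(t)\}$. The boundary term is indeed a product of two $O_p(n^{-1/2})$ quantities, but the integral term is bounded by $\sup_t|\bar M(t)|$ times the \emph{total variation} of $\bar Z_j-e_j$, and uniform closeness of $\bar Z_j$ to $e_j$ does not make that total variation small: on $\Omega_L$ it is only $O(L)$, being controlled by the variations of $S_j^{(1)}$, $S^{(0)}$, and $e_j$, none of which shrink with $n$. Hence $I_2=O_p(n^{-1/2})$, not $O_p(n^{-1})$. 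This does not sink the lemma, since a bound at level $n^{-1/2}(1+x)$ is all that is required, but it obliges you to actually prove a Massart-type tail bound for $\sup_{t\in[0,\tau]}|\bar M(t)|$ at that level (the class $\{M(t)\colon t\in[0,\tau]\}$ does have bounded uniform entropy integral, so this is feasible) and to track the envelope of $M$ on $\Omega_L$, which is where the powers of $L$ accumulate. The paper's bracketing argument on $\mathcal{G}_j$ sidesteps any separate uniform control of $\bar M$, which is why it is the cleaner route; with the rate claim for $I_2$ corrected and the maximal inequality for $\bar M$ supplied, your version also goes through.
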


\begin{proof}
We first write
\[
U_j(\bbeta_0)=\mathbb{P}_n\int_0^{\tau}\{Z_j(t)-\bar{Z}_j(t)\}\,dM(t)=\mathbb{P}_n\int_0^{\tau}Z_j(t)\,dM(t)-\mathbb{P}_n\int_0^{\tau}\bar{Z}_j(t)\,dM(t)\equiv T_1-T_2,
\]
where $\bar{Z}_j(\cdot)$ is the $j$th component of $\overline{\bZ}(\cdot)$. Since term $T_1$ is an i.i.d.\ sum of mean-zero random variables, an application of Hoeffding's inequality \citep{Hoef:prob:1963} gives $P(|T_1|\ge n^{-1/2}x\mid\Omega_L)\le2\exp(-Kx^2/L^4)$ for some constant $K>0$.

We will apply Theorem 9 of \citet{Mass:abou:2000} to bound term $T_2$. Note that, from \eqref{eq:conc1} and \eqref{eq:conc2} in Lemma \ref{lem:concS}, we have
\[
P\biggl(\sup_{t\in[0,\tau]}|S^{(0)}(t)-s^{(0)}(t)|\ge\delta\biggr)\le\exp(-Kn)
\]
and
\[
P\biggl(\sup_{t\in[0,\tau]}|S_j^{(1)}(t)-s_j^{(1)}(t)|\ge\delta\mid\Omega_L\biggr)\le\exp(-Kn/L^2),
\]
for some constant $\delta>0$ and $j=1,\cdots,p$. Since these two tail probabilities are bounded by $\exp(-Kn/L^4)$ for $L\ge1$, it suffices to consider the case where $\sup_{t\in[0,\tau]}|S^{(0)}(t)-s^{(0)}(t)|\le\delta$ and $\sup_{t\in[0,\tau]}|S_j^{(1)}(t)-s_j^{(1)}(t)|\le\delta$ for $j=1,\cdots,p$. Write
\begin{equation}\label{eq:Z_bar}
\bar{Z}_j(t)-e_j(t)=\frac{1}{S^{(0)}(t)}\{S_j^{(1)}(t)-s_j^{(1)}(t)\}-\frac{s_j^{(1)}(t)}{S^{(0)}(t)s^{(0)}(t)}\{S^{(0)}(t)-s^{(0)}(t)\}.
\end{equation}
Since $S^{(0)}(\cdot)$ and $s^{(0)}(\cdot)$ are bounded away from zero by Condition \ref{con:proc}(ii), the above representation implies that $\sup_{t\in[0,\tau]}|\bar{Z}_j(t)-e_j(t)|\le\delta'$ for some constant $\delta'>0$. Let $\mathcal{F}_j$ denote the class of functions $f\colon[0,\tau]\to\mathbb{R}$ that are of uniformly bounded variation and satisfy $\sup_{t\in[0,\tau]}|f(t)-e_j(t)|\le\delta'$. Define the class of functions $\mathcal{G}_j=\{\int_0^{\tau}f(t)\,dM(t)\colon f\in\mathcal{F}_j\}$ and $G_j=\sup_{g\in\mathcal{G}_j}|(\mathbb{P}_n-P)g|=\sup_{g\in\mathcal{G}_j}|\mathbb{P}_ng|$. We need to control $EG_j$.

By constructing $\|\cdot\|_{\infty}$-balls centered at piecewise constant functions on a regular grid, one can show that the covering number of the class $\mathcal{F}_j$ satisfies $N(\ve,\mathcal{F}_j,\|\cdot\|_{\infty})\le(K/\ve)^{K'/\ve}$ for some constants $K,K'>0$. Note also that, for any $f_1,f_2\in\mathcal{F}_j$,
\[
\biggl|\int_0^{\tau}f_1(t)\,dM(t)-\int_0^{\tau}f_2(t)\,dM(t)\biggr|\le\sup_{s\in[0,\tau]}|f_1(s)-f_2(s)|\int_0^{\tau}|dM(t)|.
\]
By Theorem 2.7.11 of \citet{Vaar:Well:weak:1996}, the bracketing number of the class $\mathcal{G}_j$ satisfies $N_{[\,]}(2\ve\|F\|_{P,2},\mathcal{G}_j,L_2(P))\le N(\ve,\mathcal{F}_j,\|\cdot\|_{\infty})\le(K/\ve)^{K'/\ve}$, where $F=\int_0^{\tau}|dM(t)|$. Hence, $\mathcal{G}_j$ has bounded bracketing integral.

An application of Corollary 19.35 of \citet{Vaar:asym:1998} yields that
\[
EG_j\le C'n^{-1/2}J_{[\,]}(\|G\|_{P,2},\mathcal{G}_j,L_2(P))\le Cn^{-1/2}
\]
for some constants $C',C>0$, where $G$ is an envelope of $\mathcal{G}_j$. We then apply Theorem 9 of \citet{Mass:abou:2000} to obtain $P(|T_2|\ge Cn^{-1/2}(1+x)\mid\Omega_L)\le\exp(-Kx^2/L^4)$ for some constant $K>0$. Putting the bounds for $T_1$ and $T_2$ together leads to the desired inequality, thus completing the proof.
\end{proof}

\begin{lem}[Concentration of $\bV$]\label{lem:concV-D}
Under Condition \ref{con:proc}, there exist constants $C,D,K>0$ such that
\[
P(|V_{ij}-D_{ij}|\ge Cn^{-1/2}(1+x)\mid\Omega_L)\le D\exp\biggl(-K\frac{x^2\wedge n}{L^4}\biggr)
\]
for all $x>0$ and $i,j=1,\cdots,p$, where $V_{ij}$ and $D_{ij}$ are the $(i,j)$th entries of the matrices $\bV$ and $\bD$, respectively.
\end{lem}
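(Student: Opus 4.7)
The starting point is to rewrite both $V_{ij}$ and $D_{ij}$ in a form that makes the difference tractable. Expanding the square in \eqref{eq:V_def} and using $\overline{\bZ}(t)=\bS^{(1)}(t)/S^{(0)}(t)$, the cross terms collapse to give the representation
\[
V_{ij}=\int_0^{\tau}S_{ij}^{(2)}(t)\,dt-\int_0^{\tau}\frac{S_i^{(1)}(t)S_j^{(1)}(t)}{S^{(0)}(t)}\,dt,
\]
and the analogous identity holds for $D_{ij}$ with $S$ replaced by $s$. Subtracting splits $V_{ij}-D_{ij}=T_1+T_2$, where $T_1=\int_0^{\tau}\{S_{ij}^{(2)}(t)-s_{ij}^{(2)}(t)\}\,dt$ and $T_2$ is the integrated ratio difference. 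I will bound each piece separately and finish with a union bound.

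For $T_1$, the trivial bound $|T_1|\le\tau\sup_{t\in[0,\tau]}|S_{ij}^{(2)}(t)-s_{ij}^{(2)}(t)|$ combined with \eqref{eq:conc3} of Lemma \ref{lem:concS} already yields a tail of exactly the claimed form. For $T_2$, the obstacle is that $S^{(0)}(t)$ is stochastic and could in principle be close to zero. I would first restrict to the favorable event $\mathcal{E}=\{\sup_{t\in[0,\tau]}|S^{(0)}(t)-s^{(0)}(t)|\le\delta\}$ with $\delta=s^{(0)}(\tau)/2$; by Condition \ref{con:proc}(ii), $s^{(0)}(t)\ge s^{(0)}(\tau)>0$ uniformly on $[0,\tau]$, so on $\mathcal{E}$ the factor $1/S^{(0)}(t)$ is uniformly bounded, and $P(\mathcal{E}^c)\le\exp(-Kn)$ from \eqref{eq:conc1}, which is absorbed into the $x^2\wedge n$ in the final exponent. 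On $\mathcal{E}$ I would use the algebraic linearization
\[
\frac{S_i^{(1)}S_j^{(1)}}{S^{(0)}}-\frac{s_i^{(1)}s_j^{(1)}}{s^{(0)}}
=\frac{(S_i^{(1)}-s_i^{(1)})S_j^{(1)}}{S^{(0)}}
+\frac{s_i^{(1)}(S_j^{(1)}-s_j^{(1)})}{S^{(0)}}
-\frac{s_i^{(1)}s_j^{(1)}(S^{(0)}-s^{(0)})}{S^{(0)}s^{(0)}}.
\]
By Condition \ref{con:proc}(iii), $|s_j^{(1)}(t)|\le E|Z_j(t)|$ is uniformly bounded by an absolute constant, so the coefficients of the three differences are uniformly bounded on $\mathcal{E}$. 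Consequently $|T_2|$ is dominated, up to an absolute constant, by the three suprema $\sup_t|S^{(0)}-s^{(0)}|$, $\sup_t|S_i^{(1)}-s_i^{(1)}|$, $\sup_t|S_j^{(1)}-s_j^{(1)}|$.

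The remaining step is to invoke \eqref{eq:conc1} and \eqref{eq:conc2} of Lemma \ref{lem:concS} at level $Cn^{-1/2}(1+x)$ for each of these three suprema, combine with the $T_1$ bound from \eqref{eq:conc3}, and union-bound over the four events (together with $\mathcal{E}^c$). The $L^4$ in the exponent emerges from \eqref{eq:conc3}; \eqref{eq:conc1} and \eqref{eq:conc2} contribute $L$-free and $L^2$ exponents that are no worse than $L^4$ when $L\ge1$, so the final tail matches the claimed form. I expect the main technical obstacle to be executing the linearization cleanly—i.e., rigorously justifying that $S^{(0)}$ stays uniformly bounded away from zero on $\mathcal{E}$ and that the quadratic remainder in the ratio expansion is negligible compared to the linear part—while keeping careful track of the dependence on $L$ so that the final exponent has the advertised $-K(x^2\wedge n)/L^4$ scaling.
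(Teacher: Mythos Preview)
Your approach is essentially identical to the paper's: the same $T_1+T_2$ decomposition, the same exact algebraic identity for the ratio (so there is no ``quadratic remainder'' to control---the linearization you wrote is an identity, not a Taylor expansion), and the same restriction to a high-probability event on which $S^{(0)}$ is bounded below. The only small slip is that the first coefficient in your linearization is $S_j^{(1)}/S^{(0)}$, not $s_j^{(1)}/S^{(0)}$, so on $\Omega_L\cap\mathcal{E}$ it is \emph{a priori} only $O(L)$ rather than an absolute constant; the paper fixes this by also restricting to $\{\sup_t|S_j^{(1)}-s_j^{(1)}|\le\delta\}$, which costs only an extra $\exp(-Kn/L^2)$ and makes all three coefficients absolutely bounded.
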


\begin{proof}
We first write
\[
V_{ij}-D_{ij}=\int_0^{\tau}\{S_{ij}^{(2)}(t)-s_{ij}^{(2)}(t)\}\,dt
+\int_0^{\tau}\biggl\{\frac{S_i^{(1)}(t)S_j^{(1)}(t)}{S^{(0)}(t)}-\frac{s_i^{(1)}(t)s_j^{(1)}(t)}{s^{(0)}(t)}\biggr\}\,dt
\equiv T_1+T_2.
\]
Clearly, \eqref{eq:conc3} in Lemma \ref{lem:concS} implies that $P(|T_1|\ge Cn^{-1/2}(1+x)\mid\Omega_L)\le\exp(-Kx^2/L^4)$. To bound term $T_2$, write
\begin{align*}
\frac{S_i^{(1)}(t)S_j^{(1)}(t)}{S^{(0)}(t)}-\frac{s_i^{(1)}(t)s_j^{(1)}(t)}{s^{(0)}(t)}
&=\frac{S_j^{(1)}(t)}{S^{(0)}(t)}\{S_i^{(1)}(t)-s_i^{(1)}(t)\}+\frac{s_i^{(1)}(t)}{S^{(0)}(t)}\{S_j^{(1)}(t)-s_j^{(1)}(t)\}\\
&\rel{=}{}-\frac{s_i^{(1)}(t)s_j^{(1)}(t)}{S^{(0)}(t)s^{(0)}(t)}\{S^{(0)}(t)-s^{(0)}(t)\}.
\end{align*}
By the same arguments as in the proof of Lemma \ref{lem:concU}, it suffices to consider the case where $\sup_{t\in[0,\tau]}|S^{(0)}(t)-s^{(0)}(t)|\le\delta$ and $\sup_{t\in[0,\tau]}|S_j^{(1)}(t)-s_j^{(1)}(t)|\le\delta$ for some constant $\delta>0$ and $j=1,\cdots,p$. From the above representation and \eqref{eq:conc1} and \eqref{eq:conc2} in Lemma \ref{lem:concS}, it follows that $P(|T_2|\ge Cn^{-1/2}(1+x)\mid\Omega_L)\le3\exp(-Kx^2/L^2)$. Combining the bounds for $T_1$ and $T_2$ yields the desired inequality and concludes the proof.
\end{proof}

\begin{lem}[Concentration of $\bW$]\label{lem:concW-Sigma}
Under Condition \ref{con:proc}, there exist constants $C,D,K>0$ such that
\[
P(|W_{ij}-\Sigma_{ij}|\ge Cn^{-1/2}(1+x)\mid\Omega_L)\le D\exp\biggl(-K\frac{x^2\wedge n}{L^4}\biggr)
\]
for all $x>0$ and $i,j=1,\cdots,p$, where $W_{ij}$ and $\Sigma_{ij}$ are the $(i,j)$th entries of the matrices $\bW$ and $\bSigma$, respectively.
\end{lem}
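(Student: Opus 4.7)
The plan is to follow the template of the proof of Lemma \ref{lem:concV-D}, decomposing $W_{ij}-\Sigma_{ij}$ into an i.i.d.\ empirical-process piece handled by Hoeffding's inequality and a remainder piece controlled by the deviation of $\overline{\bZ}$ from $\be$ via Lemma \ref{lem:concS}. Explicitly, write
\[
W_{ij}-\Sigma_{ij}=(\mathbb{P}_n-P)\int_0^{\tau}\{Z_i-e_i\}\{Z_j-e_j\}\,dN+\mathbb{P}_n\int_0^{\tau}\bigl[\{Z_i-\bar{Z}_i\}\{Z_j-\bar{Z}_j\}-\{Z_i-e_i\}\{Z_j-e_j\}\bigr]\,dN\equiv T_1+T_2,
\]
where $\bar{Z}_\ell$ and $e_\ell$ denote the $\ell$th components of $\overline{\bZ}$ and $\be$. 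Conditions \ref{con:proc}(ii)--(iii) imply that $e_\ell(t)=s_\ell^{(1)}(t)/s^{(0)}(t)$ is bounded by an absolute constant uniformly in $\ell$ and $t$, since $s^{(0)}(t)\ge s^{(0)}(\tau)>0$ while $\sup_t E|Z_\ell(t)|$ is uniformly finite under Condition \ref{con:proc}(iii).

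For $T_1$, on $\Omega_L$ each i.i.d.\ summand is bounded in absolute value by a constant multiple of $L^2$ (using $|Z_\ell|\le L$, $|e_\ell|\le C$, and $\int_0^{\tau}dN_k\le1$ since each subject has at most one failure). Hoeffding's inequality then gives $P(|T_1|\ge Cn^{-1/2}(1+x)\mid\Omega_L)\le 2\exp(-Kx^2/L^4)$. For $T_2$, the algebraic identity
\[
\{Z_i-\bar{Z}_i\}\{Z_j-\bar{Z}_j\}-\{Z_i-e_i\}\{Z_j-e_j\}=(e_j-\bar{Z}_j)(Z_i-\bar{Z}_i)+(e_i-\bar{Z}_i)(Z_j-e_j)
\]
lets the $t$-only factors $\bar{Z}_\ell-e_\ell$ be pulled outside $\mathbb{P}_n$, giving on $\Omega_L$ the bound $|T_2|\le CL\bigl(\sup_{t\in[0,\tau]}|\bar{Z}_i-e_i|+\sup_{t\in[0,\tau]}|\bar{Z}_j-e_j|\bigr)$ (the outer $L$ arising from $\mathbb{P}_n\int|Z_\ell-\bar{Z}_\ell|\,dN\le 2L$ on $\Omega_L$). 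By the representation \eqref{eq:Z_bar}, on the event $\{\sup_t|S^{(0)}-s^{(0)}|\le\delta\}$ (whose complement has probability at most $\exp(-Kn)$ by \eqref{eq:conc1}), $S^{(0)}$ remains bounded away from zero, so $\sup_t|\bar{Z}_\ell-e_\ell|\le C'\bigl(\sup_t|S^{(0)}-s^{(0)}|+\sup_t|S_\ell^{(1)}-s_\ell^{(1)}|\bigr)$ for $\ell=i,j$. Applying \eqref{eq:conc1} and \eqref{eq:conc2} to these three suprema via a union bound yields $P(|T_2|\ge CLn^{-1/2}(1+y)\mid\Omega_L)\le D\exp(-Ky^2/L^2)+\exp(-Kn)$; reparametrizing to absorb the outer $L$ factor (and adjusting the constants) converts the exponent $y^2/L^2$ into $x^2/L^4$, with the $\wedge n$ originating from the truncation event. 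A final union bound on the $T_1$ and $T_2$ tails produces the claim.

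The main bookkeeping obstacle is tracking the extra factor of $L$ in $T_2$ that arises because $\mathbb{P}_n\int|Z_i-\bar{Z}_i|\,dN$ is only bounded by $2L$ on $\Omega_L$; this forces the exponent to degrade from the $x^2/L^2$ of Lemma \ref{lem:concS} to $x^2/L^4$ here, mirroring the analogous degradation already observed in Lemma \ref{lem:concV-D}. Encouragingly, no new empirical-process machinery beyond Hoeffding's inequality and Lemma \ref{lem:concS} is needed, because $\overline{\bZ}$ enters $\bW$ only through a $t$-dependent but summation-index-independent factor that may be pulled outside the empirical average.
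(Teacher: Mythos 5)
Your proof is correct, but it takes a genuinely different decomposition from the paper's. The paper expands $\{Z_i-\bar{Z}_i\}\{Z_j-\bar{Z}_j\}$ into the four pieces $Z_iZ_j$, $-\bar{Z}_iZ_j$, $-Z_i\bar{Z}_j$, $\bar{Z}_i\bar{Z}_j$, handles the first by Hoeffding's inequality, and treats each cross term by splitting it into a genuine empirical-process term such as $(\mathbb{P}_n-P)\int_0^{\tau}\bar{Z}_i(t)Z_j(t)\,dN(t)$ --- which is bounded exactly as term $T_2$ in the proof of Lemma \ref{lem:concU}, i.e., via bracketing entropy and Massart's functional inequality over the class $\mathcal{F}_i$ --- plus a bias term $P\int_0^{\tau}\{\bar{Z}_i(t)-e_i(t)\}Z_j(t)\,dN(t)$ controlled through \eqref{eq:Z_bar} and Lemma \ref{lem:concS}. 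You instead center at the deterministic functions $e_i,e_j$ from the outset, so your $T_1$ is a plain i.i.d.\ sum of terms bounded by $O(L^2)$ on $\Omega_L$ (using the uniform boundedness of $e_\ell$, which you correctly justify from Conditions \ref{con:proc}(ii)--(iii)), and the entire dependence on $\overline{\bZ}$ is confined to $T_2$, where $\bar{Z}_\ell-e_\ell$ is a $t$-only, summation-index-independent factor that can be pulled out of the empirical average and controlled by Lemma \ref{lem:concS} alone; your algebraic identity checks out, and the bookkeeping of the extra factor of $L$ (yielding the $L^4$ in the exponent) and of the $\wedge\,n$ truncation matches the paper's. What your route buys is the elimination of any new entropy calculation for this lemma; what the paper's route buys is uniformity with the treatment of Lemma \ref{lem:concU} and Lemma \ref{lem:concV-D}. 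Both arguments share the same minor, unaddressed subtlety that Hoeffding's inequality is applied conditionally on $\Omega_L$, where the summands are no longer exactly centered at their conditional means; the discrepancy is exponentially small in $L^r$ and absorbable into the constants.
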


\begin{proof}
We first write
\begin{align*}
W_{ij}-\Sigma_{ij}&=(\mathbb{P}_n-P)\int_0^{\tau}Z_i(t)Z_j(t)\,dN(t)\\
&\rel{=}{}-\biggl\{\mathbb{P}_n\int_0^{\tau}\bar{Z}_i(t)Z_j(t)\,dN(t)-P\int_0^{\tau}e_i(t)Z_j(t)\,dN(t)\biggr\}\\
&\rel{=}{}-\biggl\{\mathbb{P}_n\int_0^{\tau}Z_i(t)\bar{Z}_j(t)\,dN(t)-P\int_0^{\tau}Z_i(t)e_j(t)\,dN(t)\biggr\}\\
&\rel{=}{}+\biggl\{\mathbb{P}_n\int_0^{\tau}\bar{Z}_i(t)\bar{Z}_j(t)\,dN(t)-P\int_0^{\tau}e_i(t)e_j(t)\,dN(t)\biggr\}\\
&\equiv T_1-T_2-T_3+T_4.
\end{align*}
Since term $T_1$ is an i.i.d.\ sum, an application of Hoeffding's inequality gives $P(|T_1|\ge n^{-1/2}x\mid\Omega_L)\le2\exp(-Kx^2/L^4)$. To bound term $T_2$, write
\[
T_2=(\mathbb{P}_n-P)\int_0^{\tau}\bar{Z}_i(t)Z_j(t)\,dN(t)+P\int_0^{\tau}\{\bar{Z}_i(t)-e_i(t)\}Z_j(t)\,dN(t)
\equiv T_{21}+T_{22}.
\]
Term $T_{21}$ can be bounded similarly as term $T_2$ in the proof of Lemma \ref{lem:concU}. Also, note that
\[
|T_{22}|\le\sup_{s\in[0,\tau]}|\bar{Z}_i(s)-e_i(s)|P\int_0^{\tau}|Z_j(t)|\,dN(t).
\]
Then it follows from \eqref{eq:Z_bar} and Lemma \ref{lem:concS} that
\[
P(|T_{22}|\ge Cn^{-1/2}(1+x)\mid\Omega_L)\le D\exp\biggl(-K\frac{x^2\wedge n}{L^4}\biggr).
\]
We can bound terms $T_3$ and $T_4$ similarly and thus obtain the desired inequality. This completes the proof.
\end{proof}

\subsection{Proof of Lemma \ref{lem:sample1}}
By the union bound and Lemma \ref{lem:concV-D}, we have
\begin{align*}
&P\biggl(\|\bV_{AA}-\bD_{AA}\|_{\infty}\ge\frac{1}{2\varphi}\Bigmid\Omega_L\biggr)\\
&\qquad=P\biggl(\max_{i\in A}\sum_{j\in A}|V_{ij}-D_{ij}|\ge\frac{1}{2\varphi}\Bigmid\Omega_L\biggr)
\le\sum_{i\in A}P\biggl(\sum_{j\in A}|V_{ij}-D_{ij}|\ge\frac{1}{2\varphi}\Bigmid\Omega_L\biggr)\\
&\qquad\le\sum_{i\in A}\sum_{j\in A}P\biggl(|V_{ij}-D_{ij}|\ge\frac{1}{2\varphi s}\Bigmid\Omega_L\biggr)
\le s^2D\exp\biggl\{-K\frac{n}{L^4}\biggl(\frac{1}{\varphi^2s^2}\wedge1\biggr)\biggr\}.
\end{align*}
By an error bound for matrix inversion \cite[p.~336]{Horn:John:matr:1985}, if $\|\bV_{AA}-\bD_{AA}\|_{\infty}<1/(2\varphi)$, then
\[
\frac{\|\bV_{AA}^{-1}-\bD_{AA}^{-1}\|_{\infty}}{\varphi}\le\frac{\varphi\|\bV_{AA}-\bD_{AA}\|_{\infty}}{1-\varphi\|\bV_{AA}-\bD_{AA}\|_{\infty}}<1,
\]
and hence $\|\bV_{AA}^{-1}\|_{\infty}\le\|\bD_{AA}^{-1}\|_{\infty}+\|\bV_{AA}^{-1}-\bD_{AA}^{-1}\|_{\infty}=\varphi+\|\bV_{AA}^{-1}-\bD_{AA}^{-1}\|_{\infty}<2\varphi$. Then probability bound \eqref{eq:phi} follows.

To show probability bound \eqref{eq:rho}, we write
\begin{align*}
\bV_{A^cA}\bV_{AA}^{-1}-\bD_{A^cA}\bD_{AA}^{-1}&=(\bV_{A^cA}-\bD_{A^cA})\bV_{AA}^{-1}+\bD_{A^cA}(\bV_{AA}^{-1}-\bD_{AA}^{-1})\\
&=(\bV_{A^cA}-\bD_{A^cA})\bV_{AA}^{-1}-\bD_{A^cA}\bD_{AA}^{-1}(\bV_{AA}-\bD_{AA})\bV_{AA}^{-1}\\
&\equiv T_1-T_2.
\end{align*}
Similarly as before, by the union bound and Lemma \ref{lem:concV-D}, we have
\begin{align*}
&P\biggl[\|\bV_{A^cA}-\bD_{A^cA}\|_{\infty}\ge\frac{1}{2\varphi}\biggl\{\frac{\alpha}{4}\frac{\rho'(0+)}{\rho_{\lambda}'(d)}\biggr\}\wedge \Bigl(\frac{c}{2}n^{\gamma}\Bigr)\Bigmid\Omega_L\biggr]\\
&\qquad\le(p-s)sD\exp\biggl[-K\frac{n}{L^4}\biggl\{\frac{\bigl(\rho_{\lambda}'(d)^{-1}\wedge n^{\gamma}\bigr)^2}{\varphi^2s^2}\wedge1\biggr\}\biggr].
\end{align*}
This, along with \eqref{eq:phi}, gives
\begin{align*}
&P\biggl[\|T_1\|_{\infty}\ge\biggl\{\frac{\alpha}{4}\frac{\rho'(0+)}{\rho_{\lambda}'(d)}\biggr\}\wedge\Bigl(\frac{c}{2}n^{\gamma}\Bigr)\Bigmid\Omega_L\biggr]\\
&\qquad\le P\biggl[\|\bV_{A^cA}-\bD_{A^cA}\|_{\infty}\ge\frac{1}{2\varphi}\biggl\{\frac{\alpha}{4}\frac{\rho'(0+)}{\rho_{\lambda}'(d)}\biggr\}\wedge \Bigl(\frac{c}{2}n^{\gamma}\Bigr)\Bigmid\Omega_L\biggr]+P\bigl(\|\bV_{AA}^{-1}\|_{\infty}\ge2\varphi\mid\Omega_L\bigr)\\
&\qquad\le(p-s)sD\exp\biggl[-K\frac{n}{L^4}\biggl\{\frac{\bigl(\rho_{\lambda}'(d)^{-1}\wedge n^{\gamma}\bigr)^2}{\varphi^2s^2}\wedge1\biggr\}\biggr]+s^2D\exp\biggl\{-K\frac{n}{L^4}\biggl(\frac{1}{\varphi^2s^2}\wedge1\biggr)\biggr\}.
\end{align*}
Also, by Condition \ref{con:inco} and \eqref{eq:phi}, we have
\begin{align*}
&P\biggl[\|T_2\|_{\infty}\ge\biggl\{\frac{\alpha}{4}\frac{\rho'(0+)}{\rho_{\lambda}'(d)}\biggr\}\wedge\Bigl(\frac{c}{2}n^{\gamma}\Bigr)\Bigmid\Omega_L\biggr]\\
&\qquad\le P\biggl[\|\bV_{AA}-\bD_{AA}\|_{\infty}\ge\frac{1}{2\varphi}\biggl\{\frac{\alpha}{4(1-\alpha)}\wedge\frac{1}{2}\biggr\}\Bigmid\Omega_L\biggr] +P\bigl(\|\bV_{AA}^{-1}\|_{\infty}\ge2\varphi\mid\Omega_L\bigr)\\
&\qquad\le s^2D\exp\biggl\{-K\frac{n}{L^4}\biggl(\frac{1}{\varphi^2s^2}\wedge1\biggr)\biggr\}.
\end{align*}
Putting the bounds for $T_1$ and $T_2$ together, we obtain
\begin{align*}
&P\biggl[\|\bV_{A^cA}\bV_{AA}^{-1}-\bD_{A^cA}\bD_{AA}^{-1}\|_{\infty}\ge\biggl\{\frac{\alpha}{2}\frac{\rho'(0+)}{\rho_{\lambda}'(d)}\biggr\}\wedge(cn^{\gamma}) \mid\Omega_L\biggr]\\
&\qquad\le(p-s)sD\exp\biggl[-K\frac{n}{L^4}\biggl\{\frac{\bigl(\rho_{\lambda}'(d)^{-1}\wedge n^{\gamma}\bigr)^2}{\varphi^2s^2}\wedge1\biggr\}\biggr]+s^2D\exp\biggl\{-K\frac{n}{L^4}\biggl(\frac{1}{\varphi^2s^2}\wedge1\biggr)\biggr\}.
\end{align*}
Then probability bound \eqref{eq:rho} follows from Condition \ref{con:inco} and the triangle inequality.

Finally, to show probability bound \eqref{eq:kappa}, by the Hoffman--Wielandt inequality \citep{Horn:John:matr:1985}, we have
\[
|\Lambda_{\min}(\bV_{AA})-\Lambda_{\min}(\bD_{AA})|\le\biggl\{\sum_{j=1}^s|\Lambda_{(j)}(\bV_{AA})-\Lambda_{(j)}(\bD_{AA})|^2\biggr\}^{1/2}\le\|\bV_{AA}-\bD_{AA}\|_F,
\]
where $\Lambda_{(j)}(\cdot)$ denotes the $j$th smallest eigenvalue and $\|\cdot\|_F$ is the Frobenius norm. Then it follows from the union bound and Lemma \ref{lem:concV-D} that
\begin{align*}
&P(|\Lambda_{\min}(\bV_{AA})-\Lambda_{\min}(\bD_{AA})|\ge\mu\mid\Omega_L)\\
&\qquad\le P(\|\bV_{AA}-\bD_{AA}\|_F\ge\mu\mid\Omega_L)=P\biggl(\sum_{i,j\in A}|V_{ij}-D_{ij}|^2\ge\mu^2\mid\Omega_L\biggr)\\
&\qquad\le\sum_{i,j\in A}P\biggl(|V_{ij}-D_{ij}|\ge\frac{\mu}{s}\Bigmid\Omega_L\biggr)\le s^2D\exp\biggl\{-K\frac{n}{L^4}\biggl(\frac{\mu^2}{s^2}\wedge1\biggr)\biggr\},
\end{align*}
which, by the definition of $\mu$, implies \eqref{eq:kappa}. This concludes the proof.

\subsection{Proof of Theorem \ref{thm:weak_oracle}}\label{sec:pf_weak}
The main idea of the proof is to first define an event with a high probability and then analyze the behavior of the regularized estimator $\widehat{\bbeta}$ conditional on that event. The first part involves concentration inequalities developed in Section \ref{sec:pf_conc} and Lemma \ref{lem:sample1}, while the second part involves nonprobabilistic arguments based on Lemma \ref{lem:kkt}.

First, by the union bound and Lemma \ref{lem:concU}, we have
\begin{multline}\label{eq:U_A}
P\biggl(\|\bU_A(\bbeta_0)\|_{\infty}\ge\frac{1}{2cn^{\gamma}}\frac{\alpha}{4}\lambda\rho'(0+)\mid\Omega_L\biggr)\\
\qquad\le\sum_{j\in A}P\biggl(|U_j(\bbeta_0)|\ge\frac{1}{2cn^{\gamma}}\frac{\alpha}{4}\lambda\rho'(0+)\mid\Omega_L\biggr)
\le sD\exp\biggl\{-K\frac{n}{L^4}\biggl(\frac{\lambda^2}{n^{2\gamma}}\wedge1\biggr)\biggr\}.
\end{multline}
Similarly, we have
\begin{equation}\label{eq:U_Ac}
P\biggl(\|\bU_{A^c}(\bbeta_0)\|_{\infty}\ge\frac{\alpha}{4}\lambda\rho'(0+)\mid\Omega_L\biggr)
\le(p-s)D\exp\biggl\{-K\frac{n}{L^4}(\lambda^2\wedge1)\biggr\}.
\end{equation}
Also, Condition \ref{con:proc}(iii) and the union bound imply that
\begin{equation}\label{eq:Omega}
P(\Omega_L^c)\le\sum_{j=1}^pP\biggl(\sup_{t\in[0,\tau]}|Z_j(t)|>L\biggr)\le pD\exp(-KL^r).
\end{equation}
It follows from \eqref{eq:U_A}--\eqref{eq:Omega} and Lemma \ref{lem:sample1} that with probability at least
\begin{multline}\label{eq:prob}
1-(p-s)sD\exp\biggl[-K\frac{n}{L^4}\biggl\{\frac{\bigl(\rho_{\lambda}'(d)^{-1}\wedge n^{\gamma}\bigr)^2}{\varphi^2s^2}\wedge1\biggr\}\biggr]\\
-s^2D\exp\biggl[-K\frac{n}{L^4}\biggl\{\frac{(\varphi^{-1}\wedge\mu)^2}{s^2}\wedge1\biggr\}\biggr] -sD\exp\biggl\{-K\frac{n}{L^4}\biggl(\frac{\lambda^2}{n^{2\gamma}}\wedge1\biggr)\biggr\}\\
-(p-s)D\exp\biggl\{-K\frac{n}{L^4}(\lambda^2\wedge1)\biggr\}-pD\exp(-KL^r),
\end{multline}
the following inequalities hold:
\begin{gather*}
\|\bU_A(\bbeta_0)\|_{\infty}<\frac{1}{2cn^{\gamma}}\frac{\alpha}{4}\lambda\rho'(0+),\qquad
\|\bU_{A^c}(\bbeta_0)\|_{\infty}<\frac{\alpha}{4}\lambda\rho'(0+),\\
\|\bV_{AA}^{-1}\|_{\infty}<2\varphi,\qquad
\|\bV_{A^cA}\bV_{AA}^{-1}\|_{\infty}<\biggl\{\Bigl(1-\frac{\alpha}{2}\Bigr)\frac{\rho'(0+)}{\rho_{\lambda}'(d)}\biggr\}\wedge(2cn^{\gamma}),
\end{gather*}
and
\begin{equation}\label{eq:eigen}
\Lambda_{\min}(\bV_{AA})>\lambda\kappa_0.
\end{equation}

From now on, we condition on the event that these inequalities hold. It suffices to find a $\widehat{\bbeta}\in\mathbb{R}^p$ that satisfies all the optimality conditions in Lemma \ref{lem:kkt} and the desired properties. To this end, take $\widehat{\bbeta}_{A^c}=\bzero$, and we will first determine $\widehat{\bbeta}_{A}$ by solving equation \eqref{eq:kkt1}. Since $\bU(\bbeta)=\bb-\bV\bbeta$, we have $\bU_A(\widehat{\bbeta})=\bU_A(\bbeta_0)-\bV_{AA}(\widehat{\bbeta}_A-\bbeta_{0A})$. Substituting this into the equation $\bU_A(\widehat{\bbeta})-\lambda\rho_{\lambda}'(|\widehat{\bbeta}_A|)\circ\sgn(\widehat{\bbeta}_A)=\bzero$ gives
\begin{equation}\label{eq:kkt1eq}
\widehat{\bbeta}_A-\bbeta_{0A}=\bV_{AA}^{-1}\{\bU_A(\bbeta_0)-\lambda\rho_{\lambda}'(|\widehat{\bbeta}_A|)\circ\sgn(\widehat{\bbeta}_A)\}.
\end{equation}
Define the function $f\colon\mathbb{R}^s\to\mathbb{R}^s$ by $f(\btheta)=\bbeta_{0A}+\bV_{AA}^{-1}\{\bU_A(\bbeta_0)-\lambda\rho_{\lambda}'(|\btheta|)\circ\sgn(\btheta)\}$, and let $\mathcal{K}$ denote the hypercube $\{\btheta\in\mathbb{R}^s\colon\|\btheta-\bbeta_{0A}\|_{\infty}\le c_1\varphi\lambda\rho'(0+)\}$. By the inequalities established before, for $\btheta\in\mathcal{K}$, we have
\begin{align*}
\|f(\btheta)-\bbeta_{0A}\|_{\infty}&\le\|\bV_{AA}^{-1}\|_{\infty}\{\|\bU_A(\bbeta_0)\|_{\infty}+\lambda\rho'(0+)\}\\
&\le2\varphi\biggl\{\frac{1}{2cn^{\gamma}}\frac{\alpha}{4}\lambda\rho'(0+)+\lambda\rho'(0+)\biggr\}
\le c_1\varphi\lambda\rho'(0+),
\end{align*}
i.e., $f(\mathcal{K})\subset\mathcal{K}$. Also, condition \eqref{eq:wop_d} implies that for $\btheta\in\mathcal{K}$, $\|\btheta-\bbeta_{0A}\|_{\infty}\le d$, so that $\sgn(\btheta)=\sgn(\bbeta_{0A})$. Thus, in view of Condition \ref{con:pen}, $f$ is a continuous function on the convex, compact hypercube $\mathcal{K}$. An application of Brouwer's fixed point theorem yields that equation \eqref{eq:kkt1eq} has a solution $\widehat{\bbeta}_A$ in $\mathcal{K}$. Moreover, $\sgn(\widehat{\bbeta}_A)=\sgn(\bbeta_{0A})$ and hence $\widehat{A}=A$. Thus, we have found a $\widehat{\bbeta}\in\mathbb{R}^p$ that satisfies \eqref{eq:kkt1} and the desired properties. It remains to check conditions \eqref{eq:kkt2} and \eqref{eq:kkt3}.

To verify that $\widehat{\bbeta}$ satisfies \eqref{eq:kkt2}, we write
\begin{align*}
\bU_{A^c}(\widehat{\bbeta})&=\bU_{A^c}(\bbeta_0)-\bV_{A^cA}(\widehat{\bbeta}_A-\bbeta_{0A})\\
&=\bU_{A^c}(\bbeta_0)-\bV_{A^cA}\bV_{AA}^{-1}\{\bU_A(\bbeta_0)-\lambda\rho_{\lambda}'(|\widehat{\bbeta}_A|)\circ\sgn(\widehat{\bbeta}_A)\},
\end{align*}
where we have substituted \eqref{eq:kkt1eq}. Since $\|\widehat{\bbeta}_A-\bbeta_{0A}\|_{\infty}\le d$, we have $\|\widehat{\bbeta}_A\|_{\infty}=\|\bbeta_{0A}+(\widehat{\bbeta}_A-\bbeta_{0A})\|_{\infty}\ge\|\bbeta_{0A}\|_{\infty}-\|\widehat{\bbeta}_A-\bbeta_{0A}\|_{\infty} \ge2d-d=d$. The triangle inequality, concavity of $\rho_{\lambda}(\cdot)$, and the inequalities established before together imply that
\begin{align*}
\|\bU_{A^c}(\widehat{\bbeta})\|_{\infty}&\le\|\bU_{A^c}(\bbeta_0)\|_{\infty}+\|\bV_{A^cA}\bV_{AA}^{-1}\|_{\infty}\{\|\bU_{A}(\bbeta_0)\|_{\infty} +\lambda\rho_{\lambda}'(d)\}\\
&<\frac{\alpha}{4}\lambda\rho'(0+)+2cn^{\gamma}\frac{1}{2cn^{\gamma}}\frac{\alpha}{4}\lambda\rho'(0+)+\Bigl(1-\frac{\alpha}{2}\Bigr)\frac{\rho'(0+)} {\rho_{\lambda}'(d)}\lambda\rho_{\lambda}'(d)\\
&=\frac{\alpha}{4}\lambda\rho'(0+)+\frac{\alpha}{4}\lambda\rho'(0+)+\Bigl(1-\frac{\alpha}{2}\Bigr)\lambda\rho'(0+)\\
&=\lambda\rho'(0+).
\end{align*}

Since $\|\widehat{\bbeta}_A-\bbeta_{0A}\|_{\infty}\le d$, it follows from \eqref{eq:eigen} that $\Lambda_{\min}(\bV_{AA})>\lambda\kappa_0\ge\lambda\kappa(\rho_{\lambda};\widehat{\bbeta}_A)$, and hence \eqref{eq:kkt3} is satisfied. Finally, we choose $L$ by matching the exponential terms in \eqref{eq:prob} and note that the probability tends to 1 by conditions \eqref{eq:wop_dim} and \eqref{eq:wop_lam}. This completes the proof.

\subsection{Proof of Theorem \ref{thm:oracle}}
The proof Theorem \ref{thm:oracle} is based on the proof of Theorem \ref{thm:weak_oracle} in Section \ref{sec:pf_weak}. First, by the same arguments as for deriving probability bound \eqref{eq:kappa} in Lemma \ref{lem:sample1}, one can obtain
\begin{align*}
&P(\Lambda_{\min}(\bV_{AA})\le\Lambda_1/2\mid\Omega_L)\\
&\qquad=P(|\Lambda_{\min}(\bV_{AA})-\Lambda_1|\ge\Lambda_1/2\mid\Omega_L)\le s^2D\exp\biggl\{-K\frac{n}{L^4}\biggl(\frac{\Lambda_1^2}{s^2}\wedge1\biggr)\biggr\}.
\end{align*}
Thus, with probability at least
\begin{equation}\label{eq:prob_or}
1-s^2D\exp\biggl\{-K\frac{n}{L^4}\biggl(\frac{\Lambda_1^2}{s^2}\wedge1\biggr)\biggr\}-pD\exp(-KL^r),
\end{equation}
it holds that $\Lambda_{\min}(\bV_{AA})>\Lambda_1/2$, and hence
\begin{equation}\label{eq:V_AA2}
\|\bV_{AA}^{-1}\|_2=1/\Lambda_{\min}(\bV_{AA})<2/\Lambda_1.
\end{equation}

From now on, we condition on the intersection of the event that \eqref{eq:V_AA2} holds and the event defined in the proof of Theorem \ref{thm:weak_oracle}; such an event still has a high probability. Since sparsity has been established in Theorem \ref{thm:weak_oracle}, we only need to show the asymptotic normality. By substituting \eqref{eq:kkt1eq}, we can write
\begin{align*}
\sqrt{n}\bu^T\bSigma_{AA}^{-1/2}\bD_{AA}(\widehat{\bbeta}_A-\bbeta_{0A})
&=\sqrt{n}\bu^T\bSigma_{AA}^{-1/2}\bD_{AA}\bV_{AA}^{-1}\{\bU_A(\bbeta_0)-\lambda\rho_{\lambda}'(|\widehat{\bbeta}_A|)\circ\sgn(\widehat{\bbeta}_A)\}\\
&=\sqrt{n}\bu^T\bSigma_{AA}^{-1/2}\bU_A(\bbeta_0)+\sqrt{n}\bu^T\bSigma_{AA}^{-1/2}\bD_{AA}(\bV_{AA}^{-1}-\bD_{AA}^{-1})\bU_A(\bbeta_0)\\
&\rel{=}{}-\sqrt{n}\bu^T\bSigma_{AA}^{-1/2}\bD_{AA}\bV_{AA}^{-1}\lambda\rho_{\lambda}'(|\widehat{\bbeta}_A|)\circ\sgn(\widehat{\bbeta}_A)\\
&\equiv T_1+T_2-T_3.
\end{align*}
We first consider term $T_2$. Since $\|\bu\|_2=1$, we have
\[
|T_2|\le\sqrt{n}\|\bSigma_{AA}^{-1/2}\bD_{AA}\|_2\|\bD_{AA}^{-1}\|_2\|\bV_{AA}-\bD_{AA}\|_2\|\bV_{AA}^{-1}\|_2\|\bU_A(\bbeta_0)\|_2.
\]
It follows from Lemma \ref{lem:concV-D} that
\begin{align*}
\|\bV_{AA}-\bD_{AA}\|_2\le\|\bV_{AA}-\bD_{AA}\|_F\le\biggl(s^2\max_{i,j\in A}|V_{ij}-D_{ij}|\biggr)^{1/2}=sO_p(n^{-1/2}),
\end{align*}
and similarly, by Lemma \ref{lem:concU}, $\|\bU_A(\bbeta_0)\|_2=\sqrt{s}O_p(n^{-1/2})$. Using also $\|\bSigma_{AA}^{-1/2}\bD_{AA}\|_2=\Lambda_3^{-1/2}$, $\|\bD_{AA}^{-1}\|_2=1/\Lambda_1$, and \eqref{eq:V_AA2}, we obtain
\[
|T_2|\le\sqrt{n}\Lambda_3^{-1/2}\Lambda_1^{-1}sO_p(n^{-1/2})2\Lambda_1^{-1}\sqrt{s}O_p(n^{-1/2})=\frac{2s^{3/2}}{\Lambda_1^2\Lambda_3^{1/2}}O_p(n^{-1/2})=o_p(1),
\]
by the third condition in \eqref{eq:op_dim}. We then consider term $T_3$. The concavity of $\rho_{\lambda}(\cdot)$, \eqref{eq:V_AA2}, and condition \eqref{eq:op_lam} lead to
\[
|T_3|\le\sqrt{n}\|\bSigma_{AA}^{-1/2}\bD_{AA}\|_2\|\bV_{AA}^{-1}\|_2\lambda\sqrt{s}\rho_{\lambda}'(d) <\frac{2\sqrt{ns}\lambda\rho_{\lambda}'(d)}{\Lambda_1\Lambda_3^{1/2}}\to0.
\]

It remains to show that term $T_1$ is asymptotically normal. Note that
\[
\bu^T\bSigma_{AA}^{-1/2}\bW_{AA}\bSigma_{AA}^{-1/2}\bu=1+\bu^T\bSigma_{AA}^{-1/2}(\bW_{AA}-\bSigma_{AA})\bSigma_{AA}^{-1/2}\bu.
\]
It follows from Lemma \ref{lem:concW-Sigma} that $\|\bW_{AA}-\bSigma_{AA}\|_2=sO_p(n^{-1/2})$. Then the second term in the above display is bounded by
\[
\|\bSigma_{AA}^{-1/2}\|_2\|\bW_{AA}-\bSigma_{AA}\|_2\|\bSigma_{AA}^{-1/2}\|_2=\Lambda_2^{-1/2}sO_p(n^{-1/2})\Lambda_2^{-1/2}=\frac{s}{\Lambda_2}O_p(n^{-1/2})=o_p(1),
\]
in view of the second condition in \eqref{eq:op_dim}. An application of the martingale central limit theorem \citep{Ande:Gill:cox':1982} yields that $T_1$ is asymptotically standard normal. Finally, we choose the optimal $L$ in \eqref{eq:prob_or}, note that the probability tends to 1 by the first condition in \eqref{eq:op_dim}, and combine it with the probability in Theorem \ref{thm:weak_oracle}. This concludes the proof.

\subsection{Proof of Theorem \ref{thm:conv}}
Since the sequence of objective functions $\{\widetilde{Q}(\bbeta^m)\}$ is decreasing, we see that the sequence $\{\widetilde{Q}(\bbeta^m)\}$ lies in a compact set of $\mathbb{R}$. This entails that $\{\bbeta^m\}$ is bounded, since any $|\beta_j|\to\infty$ would imply that $\widetilde{Q}(\bbeta)\to\infty$ by the assumption $V_{jj}\ne0$ for all $j$. Denote by $\balpha_j^m = (\beta_1^m, \cdots, \beta_{j-1}^m, \beta_j^m, \beta_{j+1}^{m-1}, \cdots, \beta_p^{m-1})^T$.

To show part (a), let $\bbeta^*$ be a cluster point of $\{\bbeta^m\}$ and $\{\bbeta^{m_k}\}$ a subsequence of $\{\bbeta^m\}$ that converges to $\bbeta^*$. We first prove a claim that if $\bbeta^{m_k}-\bbeta^{m_k-1}\to\bzero$, then $\bbeta^*$ is a local minimizer of $\widetilde{Q}(\bbeta)$. Denote by $\partial f(\bbeta;\balpha)$ the directional derivative of a function $f$ along the direction $\balpha$. For any $\btheta=(\theta_1,\cdots,\theta_p)\in\mathbb{R}^p$, we have
\begin{equation}\label{eq:dir_der}
\partial\widetilde{Q}(\bbeta^*;\btheta)=\sum_{j=1}^p\frac{\partial L(\bbeta^*)}{\partial\beta_j}\theta_j+\sum_{j=1}^pV_{jj}\partial p_{\lambda}(\beta_j^*;\theta_j)=\sum_{j=1}^p\partial\widetilde{Q}(\bbeta^*;\theta_j\be_j),
\end{equation}
where $\beta_j^*$ is the $j$th component of $\bbeta^*$ and $\be_j$ is the $p$-vector with 1 at the $j$th component and 0 elsewhere. Since $\beta_j^{m_k}$ minimizes $\widetilde{Q}(\beta_1^{m_k},\cdots,\beta_{j-1}^{m_k},\cdot,\beta_{j+1}^{m_k-1},\cdots,\beta_p^{m_k-1})$, we have $\partial\widetilde{Q}(\balpha_j^{m_k};\theta_j\be_j)\ge0$. Since $\bbeta^{m_k}-\bbeta^{m_k-1}\to\bzero$, we have $\lim_{k\to\infty}\bbeta^{m_k-1}=\lim_{k\to\infty}\bbeta^{m_k}=\bbeta^*$ and hence $\lim_{k\to\infty}\balpha_j^{m_k}=\bbeta^*$. It then follows from the upper semicontinuity of directional derivatives \citep{Bert:nonl:1999} that
\[
\partial\widetilde{Q}(\bbeta^*;\theta_j\be_j)\ge\limsup_{k\to\infty}\partial\widetilde{Q}(\balpha_j^{m_k};\theta_j\be_j)\ge0
\]
for all $j$. In view of \eqref{eq:dir_der}, we have $\partial\widetilde{Q}(\bbeta^*;\btheta)\ge0$ for all $\btheta\in\mathbb{R}^p$, so that $\bbeta^*$ is a local minimizer of $\widetilde{Q}(\bbeta)$. It remains to show that $\bbeta^{m_k}-\bbeta^{m_k-1}\to\bzero$.

In fact, we will show that $\bbeta^m-\bbeta^{m-1}\to\bzero$. Consider the update from $\balpha_{j-1}^m$ to $\balpha_j^m$. The condition $\kappa(p_{\lambda})<1$ implies that $\widetilde{Q}(\bbeta)$ is strictly convex in $\beta_j$, so that
\[
\widetilde{Q}(\balpha_{j-1}^m)-\widetilde{Q}(\balpha_j^m)\ge\theta(\beta_j^{m-1}-\beta_j^m)+\frac{c_0}{2}(\beta_j^{m-1}-\beta_j^m)^2,
\]
for every subgradient $\theta$ of $\widetilde{Q}(\beta_1^m,\cdots,\beta_{j-1}^m,\cdot,\beta_{j+1}^{m-1},\cdots, \beta_p^{m-1})$ at $\beta_j^m$, where $c_0=1-\kappa(p_{\lambda})>0$ \citep{Dem':Vasi:nond:1985}. The optimality of $\beta_j^m$ entails that 0 is a subgradient and hence
\[
\widetilde{Q}(\balpha_{j-1}^m)-\widetilde{Q}(\balpha_j^m)\ge\frac{c_0}{2}(\beta_j^{m-1}-\beta_j^m)^2.
\]
Adding both sides over $j=1,\cdots,p$ and $m=1,\cdots,M$ yields
\[
\widetilde{Q}(\bbeta^0)-\widetilde{Q}(\bbeta^M)\ge\frac{c_0}{2}\sum_{m=1}^M\|\bbeta^{m-1}-\bbeta^m\|_2^2.
\]
Noting that the sequence $\{\widetilde{Q}(\bbeta^m)\}$ is bounded, we have $\sum_{m=1}^{\infty}\|\bbeta^{m-1}-\bbeta^m\|_2^2 <\infty$, so that $\bbeta^m-\bbeta^{m-1}\to\bzero$. Part (a) is proved.

To show part (b), we use a contradiction argument. Suppose that the sequence $\{\bbeta^m\}$ does not converge to $\bbeta^*$. Then there exists a subsequence $\{\bbeta^{m_k}\}$ of $\{\bbeta^m\}$ such that $\|\bbeta^{m_k}-\bbeta^*\|_2 \ge \ve$ for some $\ve>0$. Since $\{\bbeta^{m_k}\}$ is bounded, by taking a further subsequence if necessary, we can assume that $\{\bbeta^{m_k}\}$ converges to a point $\bbeta^{**}$. Clearly $\bbeta^{**}\in\mathcal{K}$ since $\mathcal{K}$ is closed, and $\|\bbeta^{**}-\bbeta^*\|_2 \ge \ve$. It follows from part (a) that $\bbeta^{**}$ is a local minimizer of $\widetilde{Q}(\bbeta)$. This contradicts the assumption that $\bbeta^*$ is the unique local minimizer in $\mathcal{K}$ and proves part (b).

\subsection{Proof of Theorem \ref{thm:global}}
The inequality $\Lambda_{\min}(\bV_{SS})\ge\kappa(p_{\lambda})\max_{j\in S}V_{jj}$ in part (a) ensures that $\widetilde{Q}(\bbeta)$ is convex on the subspace $\mathcal{B}_S$, from which the restricted global optimality follows. Part (a) is proved.

To show part (b), note first that the strict inequality $\Lambda_{\min}(\bV_{SS})>\kappa(p_{\lambda})\max_{j\in S}V_{jj}$ implies strict convexity and hence the existence of a unique global minimizer $\bbeta^*$ of $\widetilde{Q}(\bbeta)$ on $\mathcal{B}_S$. Also, since $\Lambda_{\min}(\bV_{SS}) \le \min_{j\in S}V_{jj} \le \max_{j\in S}V_{jj}$, we have $\kappa(p_{\lambda})<1$. It then follows from Theorem \ref{thm:conv} and the boundedness of the sequence $\{\bbeta^m\}$ that the sequence $\{\bbeta^m\}$ converges to $\bbeta^*$. This proves part (b).

\section{SICA-Penalized Least Squares in One Dimension}\label{sec:app_sica}

In this appendix, we present an analytic form of the SICA-regularized estimator in one dimension. Consider the one-dimensional SICA-penalized least-squares problem
\[
\widehat{\theta}=\argmin_{\theta\in\mathbb{R}}\biggl\{\frac{1}{2}(\theta-\theta_0)^2+p_{\lambda}(|\theta|)\biggr\},
\]
where $\theta_0\in\mathbb{R}$ and $p_{\lambda}(\cdot)=\lambda\rho(\cdot)$ with the SICA penalty $\rho(\cdot)$ given by \eqref{eq:sica}. To find the nonzero stationary points of the objective function, it is easy to derive from the derivative equation that we need to solve the cubic equation $t^3+c_2t^2+c_1t+c_0=0$ for the positive roots, where $c_2=2a-|\theta_0|$, $c_1=a^2-2a|\theta_0|$, and $c_0=\lambda a(a+1)-a^2|\theta_0|$. Let $Q=(c_2^2-3c_1)/9$ and $R=(2c_2^3-9c_1c_2+27c_0)/54$. If $Q^3\le R^2$, the cubic equation either has a unique, negative root or, in addition, has a real root corresponding to an inflection point; hence $\widehat{\theta}=0$. Otherwise, compute the two largest roots
\[
t_1=-2\sqrt{Q}\cos\biggl(\frac{\alpha-2\pi}{3}\biggr)-\frac{c_2}{3}<t_2=-2\sqrt{Q}\cos\biggl(\frac{\alpha+2\pi}{3}\biggr)-\frac{c_2}{3},
\]
where $\alpha=\arccos(R/\sqrt{Q^3})$. If $t_1>0$, then $t_1$ is a local maximum and $t_2$ is a local minimum; by comparing the values of the objective function at 0 and $t_2$, we have that $\widehat{\theta}=\sgn(\theta_0)t_2$ if $t_2/2+\lambda(a+1)/(a+t_2)<\theta_0$, and $\widehat{\theta}=0$ otherwise. If $t_1<0$ and $t_2>0$, then $t_2$ is a local minimum and $\widehat{\theta}=\sgn(\theta_0)t_2$. Otherwise, the cubic equation has no positive roots and $\widehat{\theta}=0$.

In summary, $\widehat{\theta}=\sgn(\theta_0)t_2$ if $Q^3>R^2$, $t_1>0$, and $t_2/2+\lambda(a+1)/(a+t_2)<\theta_0$, or $Q^3>R^2$, $t_1<0$, and $t_2>0$; otherwise $\widehat{\theta}=0$. This gives a complete characterization of the SICA solution $\widehat{\theta}$ in one dimension.

\bibliographystyle{jasa}
\bibliography{stat}

\begin{sidewaystable}
\def\~{\phantom{0}}\centering
\caption{Results for various methods in the first simulation study with $n=200$, $s=6$, and censoring rate about $25\%$. Values shown are means (standard deviations) of each performance measure over 100 replicates}\label{tab:low}\vskip1ex
\begin{tabularx}{\linewidth}{@{}XX>{\hspace{7pt}}c*{5}{>{\hspace{14pt}}c}@{}}
\midrule
Setting    & Method & PE1 & PE2 & $L_2$-loss & $L_1$-loss & \#S & \#FN\\
\midrule
$p=50$     & Lasso  & 0.191 (0.055) & 22.73 (6.38) & 1.061 (0.311) & 3.341 (0.838) & 19.6 (4.5) & 0.0 (0.2)\\
$\rho=0.1$ & SCAD   & 0.135 (0.044) & 10.90 (5.24) & 0.498 (0.250) & 1.266 (0.658) & 10.7 (2.5) & 0.0 (0.2)\\
           & MCP    & 0.138 (0.047) & 11.34 (5.83) & 0.518 (0.278) & 1.271 (0.716) &\~8.9 (2.1) & 0.0 (0.3)\\
           & SICA   & 0.130 (0.048) & 10.30 (5.77) & 0.471 (0.273) & 1.015 (0.665) &\~6.2 (1.0) & 0.1 (0.4)\\
           & Enet   & 0.192 (0.055) & 22.75 (6.32) & 1.062 (0.308) & 3.367 (0.896) & 19.9 (4.9) & 0.0 (0.1)\\
           & Oracle & 0.121 (0.030) &\~9.26 (3.62) & 0.424 (0.172) & 0.894 (0.398) &\~6.0 (0.0) & 0.0 (0.0)\\
\addlinespace
$p=50$     & Lasso  & 0.199 (0.052) & 18.33 (4.58) & 1.078 (0.310) & 3.504 (0.916) & 21.2 (5.3) & 0.1 (0.5)\\
$\rho=0.5$ & SCAD   & 0.134 (0.071) &\~9.98 (5.63) & 0.522 (0.354) & 1.370 (1.089) & 10.7 (3.0) & 0.0 (0.2)\\
           & MCP    & 0.130 (0.057) &\~9.92 (5.77) & 0.522 (0.374) & 1.295 (1.044) &\~8.6 (2.4) & 0.1 (0.7)\\
           & SICA   & 0.121 (0.040) &\~8.87 (4.17) & 0.461 (0.246) & 1.058 (0.788) &\~6.9 (2.8) & 0.0 (0.0)\\
           & Enet   & 0.199 (0.052) & 18.32 (4.44) & 1.078 (0.300) & 3.548 (1.049) & 21.8 (5.9) & 0.0 (0.0)\\
           & Oracle & 0.108 (0.019) &\~7.56 (2.96) & 0.398 (0.187) & 0.850 (0.452) &\~6.0 (0.0) & 0.0 (0.0)\\
\addlinespace
$p=100$    & Lasso  & 0.297 (0.060) & 25.44 (5.31) & 1.289 (0.284) & 4.312 (0.932) & 25.2 (6.2) & 0.0 (0.1)\\
$\rho=0.1$ & SCAD   & 0.253 (0.054) & 11.72 (5.01) & 0.558 (0.257) & 1.601 (0.740) & 15.1 (4.0) & 0.0 (0.2)\\
           & MCP    & 0.253 (0.059) & 11.83 (5.39) & 0.563 (0.275) & 1.507 (0.801) & 11.5 (3.1) & 0.0 (0.3)\\
           & SICA   & 0.236 (0.062) & 10.13 (5.44) & 0.483 (0.270) & 1.070 (0.751) &\~6.7 (2.5) & 0.0 (0.2)\\
           & Enet   & 0.300 (0.066) & 25.70 (5.79) & 1.302 (0.307) & 4.403 (1.137) & 26.0 (7.7) & 0.0 (0.1)\\
           & Oracle & 0.219 (0.041) &\~8.77 (3.90) & 0.423 (0.203) & 0.892 (0.472) &\~6.0 (0.0) & 0.0 (0.0)\\
\addlinespace
$p=100$    & Lasso  & 0.275 (0.057) & 22.98 (4.82) & 1.379 (0.334) & 4.716 (0.931) & 27.5 (7.0) & 0.2 (0.8)\\
$\rho=0.5$ & SCAD   & 0.207 (0.053) & 11.24 (5.87) & 0.589 (0.393) & 1.697 (1.055) & 15.3 (4.2) & 0.1 (0.8)\\
           & MCP    & 0.209 (0.054) & 11.48 (6.29) & 0.606 (0.420) & 1.643 (1.115) & 11.5 (3.1) & 0.2 (1.0)\\
           & SICA   & 0.198 (0.071) & 10.03 (6.75) & 0.542 (0.417) & 1.264 (1.151) &\~6.7 (2.6) & 0.2 (0.8)\\
           & Enet   & 0.275 (0.058) & 23.07 (4.96) & 1.385 (0.343) & 4.733 (0.974) & 27.6 (7.1) & 0.2 (0.8)\\
           & Oracle & 0.169 (0.023) &\~7.42 (2.71) & 0.394 (0.179) & 0.838 (0.427) &\~6.0 (0.0) & 0.0 (0.0)\\
\midrule
\end{tabularx}
\end{sidewaystable}

\begin{sidewaystable}
\def\~{\phantom{0}}\centering
\caption{Results for various methods in the second simulation study with $n=500$, $s=6$, and censoring rate about $25\%$. Values shown are means (standard deviations) of each performance measure over 50 replicates}\label{tab:high}\vskip1ex
\begin{tabularx}{\linewidth}{@{}XX>{\hspace{7pt}}c*{5}{>{\hspace{14pt}}c}@{}}
\midrule
Setting    & Method & PE1 & PE2 & $L_2$-loss & $L_1$-loss & \#S & \#FN\\
\midrule
$p=2000$   & Lasso  & 10.673 (0.038) & 27.43 (3.99) & 1.396 (0.208) & 5.036 (0.780) & 58.4 (14.8) & 0.0 (0.0)\\
$\rho=0.1$ & SCAD   & 10.558 (0.017) &\~6.48 (1.95) & 0.301 (0.096) & 0.972 (0.435) & 21.0 (9.3)\~& 0.0 (0.0)\\
           & MCP    & 10.556 (0.020) &\~6.29 (2.56) & 0.293 (0.123) & 0.772 (0.492) & 11.5 (4.9)\~& 0.0 (0.0)\\
           & SICA   & 10.551 (0.014) &\~5.64 (2.48) & 0.265 (0.120) & 0.582 (0.376) &  6.7 (3.6)  & 0.0 (0.0)\\
           & Enet   & 10.673 (0.038) & 27.43 (3.99) & 1.396 (0.208) & 5.036 (0.780) & 58.4 (14.8) & 0.0 (0.0)\\
           & Oracle & 10.548 (0.010) &\~5.16 (1.87) & 0.244 (0.097) & 0.511 (0.223) &  6.0 (0.0)  & 0.0 (0.0)\\
\addlinespace
$p=2000$   & Lasso  & 11.913 (0.051) & 27.24 (3.85) & 1.638 (0.250) & 5.809 (0.747) & 65.4 (18.2) & 0.1 (0.3)\\
$\rho=0.5$ & SCAD   & 11.743 (0.022) &\~7.30 (2.21) & 0.362 (0.117) & 1.409 (0.498) & 34.3 (12.8) & 0.0 (0.0)\\
           & MCP    & 11.738 (0.019) &\~6.35 (2.19) & 0.322 (0.119) & 0.956 (0.408) & 15.9 (5.9)\~& 0.0 (0.0)\\
           & SICA   & 11.732 (0.022) &\~5.19 (2.32) & 0.274 (0.123) & 0.609 (0.339) &  6.8 (4.0)  & 0.0 (0.0)\\
           & Enet   & 11.913 (0.051) & 27.24 (3.85) & 1.638 (0.250) & 5.809 (0.747) & 65.4 (18.2) & 0.1 (0.3)\\
           & Oracle & 11.728 (0.009) &\~4.92 (1.79) & 0.264 (0.108) & 0.558 (0.238) &  6.0 (0.0)  & 0.0 (0.0)\\
\addlinespace
$p=5000$   & Lasso  & 11.320 (0.047) & 30.62 (4.21) & 1.610 (0.223) & 5.559 (0.790) & 59.2 (18.2) & 0.0 (0.1)\\
$\rho=0.1$ & SCAD   & 11.138 (0.016) &\~8.23 (2.36) & 0.391 (0.122) & 1.611 (0.570) & 38.2 (14.5) & 0.0 (0.0)\\
           & MCP    & 11.132 (0.017) &\~7.23 (2.44) & 0.348 (0.126) & 1.029 (0.435) & 16.5 (7.1)\~& 0.0 (0.0)\\
           & SICA   & 11.123 (0.011) &\~6.06 (2.52) & 0.298 (0.133) & 0.648 (0.351) &  6.2 (1.0)  & 0.0 (0.0)\\
           & Enet   & 11.321 (0.048) & 30.62 (4.22) & 1.610 (0.223) & 5.571 (0.824) & 59.4 (18.6) & 0.0 (0.1)\\
           & Oracle & 11.122 (0.010) &\~5.78 (2.25) & 0.283 (0.121) & 0.608 (0.291) &  6.0 (0.0)  & 0.0 (0.0)\\
\addlinespace
$p=5000$   & Lasso  & 10.364 (0.053) & 31.81 (4.30) & 1.997 (0.293) & 6.390 (0.553) & 56.9 (28.7) & 1.1 (1.8)\\
$\rho=0.5$ & SCAD   & 10.120 (0.018) &\~9.24 (2.39) & 0.464 (0.151) & 2.424 (0.839) & 64.2 (20.1) & 0.0 (0.0)\\
           & MCP    & 10.111 (0.019) &\~7.69 (2.38) & 0.388 (0.136) & 1.471 (0.643) & 27.8 (10.8) & 0.0 (0.0)\\
           & SICA   & 10.096 (0.010) &\~4.91 (1.90) & 0.267 (0.115) & 0.573 (0.272) &  6.1 (0.2)  & 0.0 (0.0)\\
           & Enet   & 10.364 (0.053) & 31.81 (4.30) & 1.997 (0.293) & 6.387 (0.532) & 56.9 (28.0) & 1.0 (1.7)\\
           & Oracle & 10.094 (0.007) &\~4.65 (1.65) & 0.255 (0.110) & 0.552 (0.270) &  6.0 (0.0)  & 0.0 (0.0)\\
\midrule
\end{tabularx}
\end{sidewaystable}

\begin{sidewaystable}
\def\~{\phantom{0}}\centering
\caption{Results for various methods in the third simulation study with $n=500$, $p=2000$, and censoring rate about $25\%$. Values shown are means (standard deviations) of each performance measure over 50 replicates. The oracle method is based on all nonzero (strong and weak) effects}\label{tab:hard}\vskip1ex
\begin{tabularx}{\linewidth}{@{}XX>{\hspace{3pt}}c*{6}{>{\hspace{6pt}}c}@{}}
\midrule
Setting   & Method & PE1 & PE2 & $L_2$-loss & $L_1$-loss & \#S & \#FN & \#FN-S\\
\midrule
$s=50$    & Lasso  &\~9.743 (0.039) & 29.80 (3.28) & 1.808 (0.214) &\~8.461 (0.675) & 63.8 (17.5) & 42.5 (1.4) & 0.1 (0.6)\\
$\ve=0.1$ & SCAD   &\~9.597 (0.018) & 12.07 (1.26) & 0.552 (0.073) &\~4.039 (0.692) & 43.5 (14.5) & 41.6 (1.4) & 0.0 (0.0)\\
          & MCP    &\~9.591 (0.017) & 11.74 (1.25) & 0.546 (0.082) &\~3.568 (0.575) & 20.4 (8.6)\~& 43.0 (0.9) & 0.0 (0.0)\\
          & SICA   &\~9.582 (0.009) & 11.16 (1.04) & 0.536 (0.090) &\~3.168 (0.352) &  6.0 (0.0)  & 44.0 (0.0) & 0.0 (0.0)\\
          & Enet   &\~9.743 (0.040) & 29.80 (3.28) & 1.809 (0.215) &\~8.463 (0.676) & 63.9 (17.3) & 42.5 (1.4) & 0.1 (0.6)\\
          & Oracle &\~9.615 (0.046) & 14.59 (2.72) & 0.692 (0.145) &\~3.905 (0.794) & 50.0 (0.0)\~&\~0.0 (0.0) & 0.0 (0.0)\\
\addlinespace
$s=50$    & Lasso  & 10.517 (0.040) & 38.02 (2.95) & 2.288 (0.205) & 11.531 (0.749) & 47.4 (29.9) & 43.5 (3.4) & 1.4 (2.0)\\
$\ve=0.2$ & SCAD   & 10.359 (0.057) & 22.17 (4.90) & 1.097 (0.360) &\~7.986 (1.175) & 66.3 (17.7) & 38.0 (3.3) & 0.1 (0.8)\\
          & MCP    & 10.364 (0.071) & 23.33 (6.30) & 1.200 (0.460) &\~7.701 (1.494) & 33.4 (10.5) & 40.7 (2.9) & 0.3 (1.1)\\
          & SICA   & 10.327 (0.023) & 21.63 (1.28) & 1.121 (0.110) &\~6.755 (0.607) &  7.1 (6.1)  & 43.9 (0.4) & 0.0 (0.0)\\
          & Enet   & 10.518 (0.040) & 38.04 (2.94) & 2.289 (0.205) & 11.547 (0.769) & 48.4 (30.5) & 43.4 (3.4) & 1.3 (2.0)\\
          & Oracle & 10.252 (0.045) & 15.22 (2.46) & 0.717 (0.129) &\~4.005 (0.649) & 50.0 (0.0)\~&\~0.0 (0.0) & 0.0 (0.0)\\
\addlinespace
$s=100$   & Lasso  & 12.630 (0.045) & 33.34 (3.47) & 2.026 (0.229) & 10.942 (0.689) & 57.0 (23.1) & 91.6 (2.7) & 0.6 (1.3)\\
$\ve=0.1$ & SCAD   & 12.460 (0.023) & 14.87 (1.17) & 0.712 (0.073) &\~6.706 (0.641) & 53.7 (14.8) & 89.1 (2.5) & 0.0 (0.0)\\
          & MCP    & 12.456 (0.024) & 14.83 (1.40) & 0.727 (0.090) &\~6.313 (0.617) & 27.2 (9.5)\~& 91.5 (1.7) & 0.0 (0.0)\\
          & SICA   & 12.438 (0.015) & 14.27 (1.24) & 0.733 (0.098) &\~5.768 (0.382) &  6.1 (0.3)  & 94.0 (0.0) & 0.0 (0.0)\\
          & Enet   & 12.630 (0.045) & 33.34 (3.47) & 2.026 (0.229) & 10.942 (0.689) & 57.0 (23.1) & 91.6 (2.7) & 0.6 (1.3)\\
          & Oracle & 12.703 (0.111) & 26.11 (3.88) & 1.219 (0.185) &\~9.679 (1.407) &100.0 (0.0)\~\~&\~0.0 (0.0) & 0.0 (0.0)\\
\addlinespace
$s=100$   & Lasso  & 13.652 (0.037) & 41.79 (2.54) & 2.514 (0.175) & 15.707 (0.605) & 31.2 (27.9) & 94.4 (4.3) & 2.6 (2.3)\\
$\ve=0.2$ & SCAD   & 13.581 (0.068) & 30.94 (7.51) & 1.659 (0.582) & 13.556 (1.437) & 60.6 (26.6) & 88.1 (5.7) & 1.0 (2.0)\\
          & MCP    & 13.578 (0.070) & 31.79 (7.33) & 1.738 (0.560) & 13.308 (1.570) & 30.2 (15.0) & 91.8 (4.3) & 1.1 (2.1)\\
          & SICA   & 13.523 (0.058) & 28.93 (4.88) & 1.569 (0.353) & 12.061 (1.313) &  6.5 (4.2)  & 94.4 (1.2) & 0.5 (1.4)\\
          & Enet   & 13.652 (0.038) & 41.80 (2.54) & 2.514 (0.175) & 15.744 (0.727) & 33.0 (30.4) & 94.4 (4.3) & 2.6 (2.2)\\
          & Oracle & 13.587 (0.121) & 27.41 (3.87) & 1.296 (0.201) & 10.307 (1.490) &100.0 (0.0)\~\~&\~0.0 (0.0) & 0.0 (0.0)\\
\midrule
\end{tabularx}
\end{sidewaystable}

\begin{table}
\centering
\caption{Results for various methods applied to the DLBCL data}\label{tab:dlbcl_rslt}\vskip1ex
\begin{tabular}{@{}lccc@{}}
\midrule
Method & \# of selected genes & Prediction error & $p$-value\\
\midrule
Lasso  & 24 & 0.3047 & 0.016\\
SCAD   & 21 & 0.3048 & 0.013\\
MCP    & 13 & 0.3083 & 0.012\\
SICA   & 11 & 0.3038 & 0.003\\
Enet   & 26 & 0.3046 & 0.013\\
\midrule
\end{tabular}
\end{table}

\begin{table}
\def\-{\phantom{-}}\centering
\caption{Estimated coefficients for selected genes in the DLBCL data by various methods}\label{tab:dlbcl_est}\vskip1ex
\begin{tabular}{@{}l*{5}{>{$}c<{$}}@{}}
\midrule
Gene ID & \multicolumn{1}{c}{Lasso} & \multicolumn{1}{c}{SCAD} & \multicolumn{1}{c}{MCP} & \multicolumn{1}{c}{SICA} & \multicolumn{1}{c@{}}{Enet}\\
\midrule
25054 &\-0.0066 &\-0.0064 &\-0.0035 &\-0.0102 &\-0.0065\\
33791 & -0.0041 & -0.0021 &         &         & -0.0048\\
27181 & -0.0090 & -0.0084 & -0.0019 &         & -0.0092\\
28654 &\-0.0039 &\-0.0035 &         &         &\-0.0037\\
31242 &\-0.0112 &\-0.0101 &\-0.0005 &\-0.0129 &\-0.0113\\
31981 &\-0.0223 &\-0.0200 &\-0.0222 &\-0.0410 &\-0.0214\\
27718 &\-0.0015 &\-0.0015 &         &         &\-0.0024\\
24725 &\-0.0095 &\-0.0085 &\-0.0133 &         &\-0.0092\\
27218 &\-0.0148 &\-0.0153 &\-0.0335 &\-0.0015 &\-0.0149\\
33014 &\-0.0067 &\-0.0054 &         &         &\-0.0075\\
16006 &\-0.0021 &         &         &         &\-0.0029\\
33974 &         &         &         &         &\-0.0003\\
27731 & -0.0252 & -0.0206 &         & -0.0551 & -0.0242\\
24394 & -0.0190 & -0.0198 & -0.0557 &         & -0.0197\\
24400 &\-0.0035 &\-0.0013 &\-0.0002 &         &\-0.0045\\
24203 &\-0.0002 &         &         &         &\-0.0012\\
24271 & -0.0065 & -0.0038 & -0.0001 & -0.0114 & -0.0069\\
25977 & -0.0197 & -0.0168 & -0.0157 & -0.0221 & -0.0200\\
34344 &\-0.0321 &\-0.0302 &\-0.0400 &\-0.0348 &\-0.0312\\
24530 &\-0.0004 &         &         &         &\-0.0011\\
27191 &         &         &         &         & -0.0002\\
33358 &\-0.0012 &\-0.0017 &         &         &\-0.0019\\
26470 &\-0.0067 &\-0.0036 &\-0.0048 &\-0.0074 &\-0.0063\\
26524 &\-0.0030 &\-0.0021 &         &         &\-0.0038\\
34376 &\-0.0278 &\-0.0243 &\-0.0311 &\-0.0296 &\-0.0273\\
32679 & -0.0075 & -0.0054 &         & -0.0053 & -0.0081\\
\midrule
\end{tabular}
\end{table}

\clearpage
\newgeometry{hmargin=0.75in,vmargin=1in}
\begin{figure}
\centering
\includegraphics[width=3.5in]{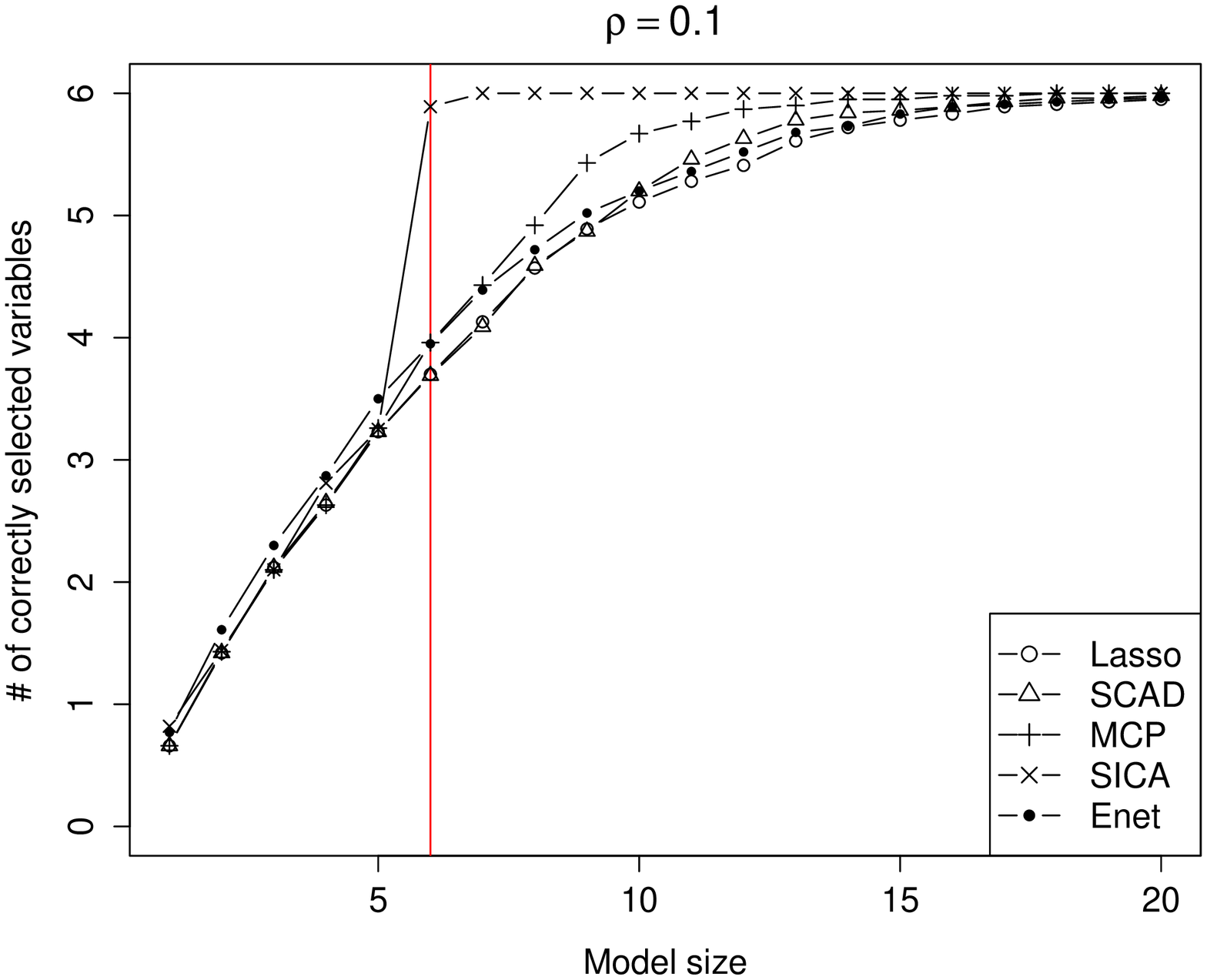}%
\includegraphics[width=3.5in]{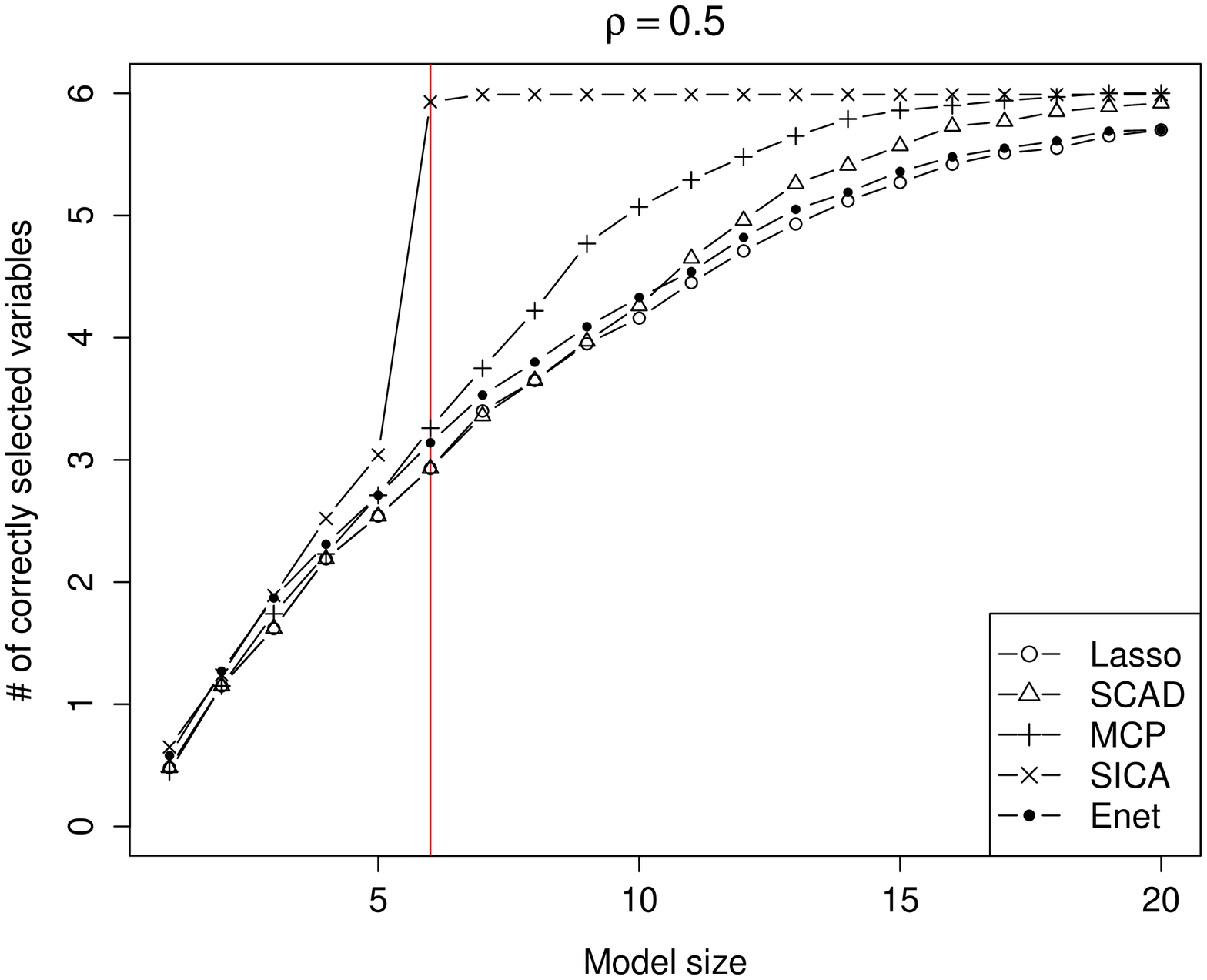}
\caption{Variable selection performance for various methods in the first simulation study with $p=100$. The vertical line indicates the true sparsity dimension.}\label{fig:low}
\end{figure}

\begin{figure}
\centering
\includegraphics[width=3.5in]{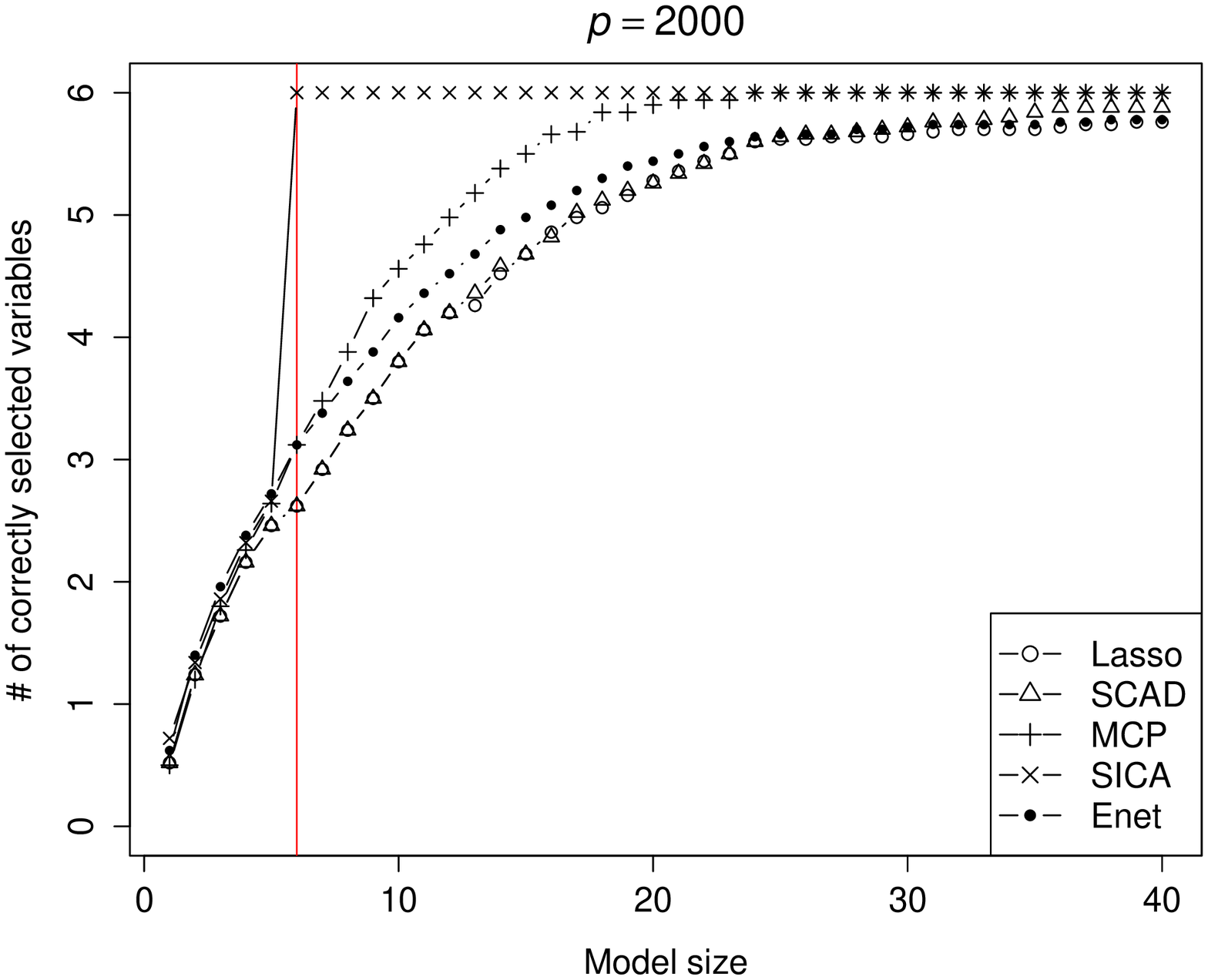}%
\includegraphics[width=3.5in]{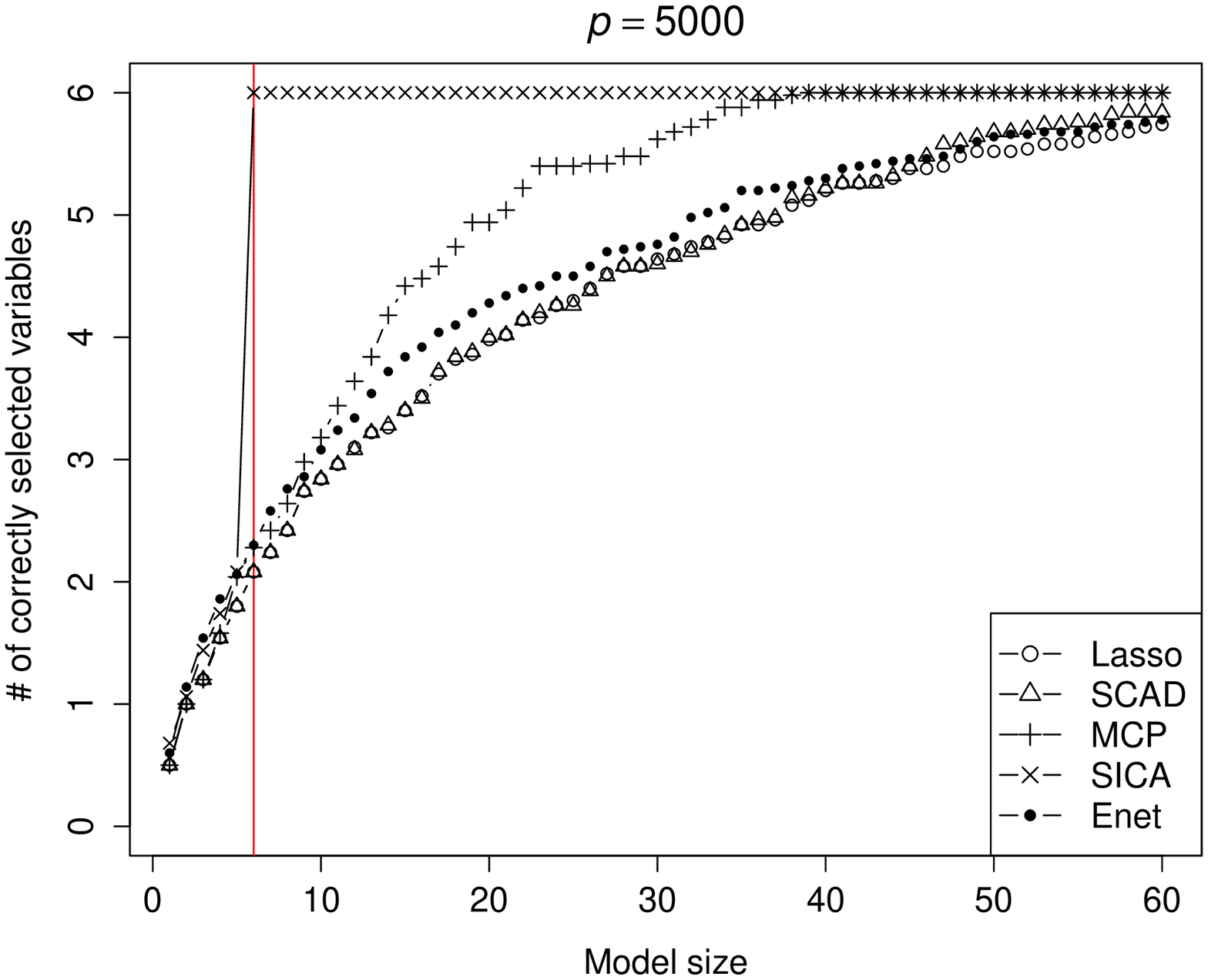}
\caption{Variable selection performance for various methods in the second simulation study with $\rho=0.5$. The vertical line indicates the true sparsity dimension.}\label{fig:high}
\end{figure}

\begin{figure}
\centering
\includegraphics[width=3.5in]{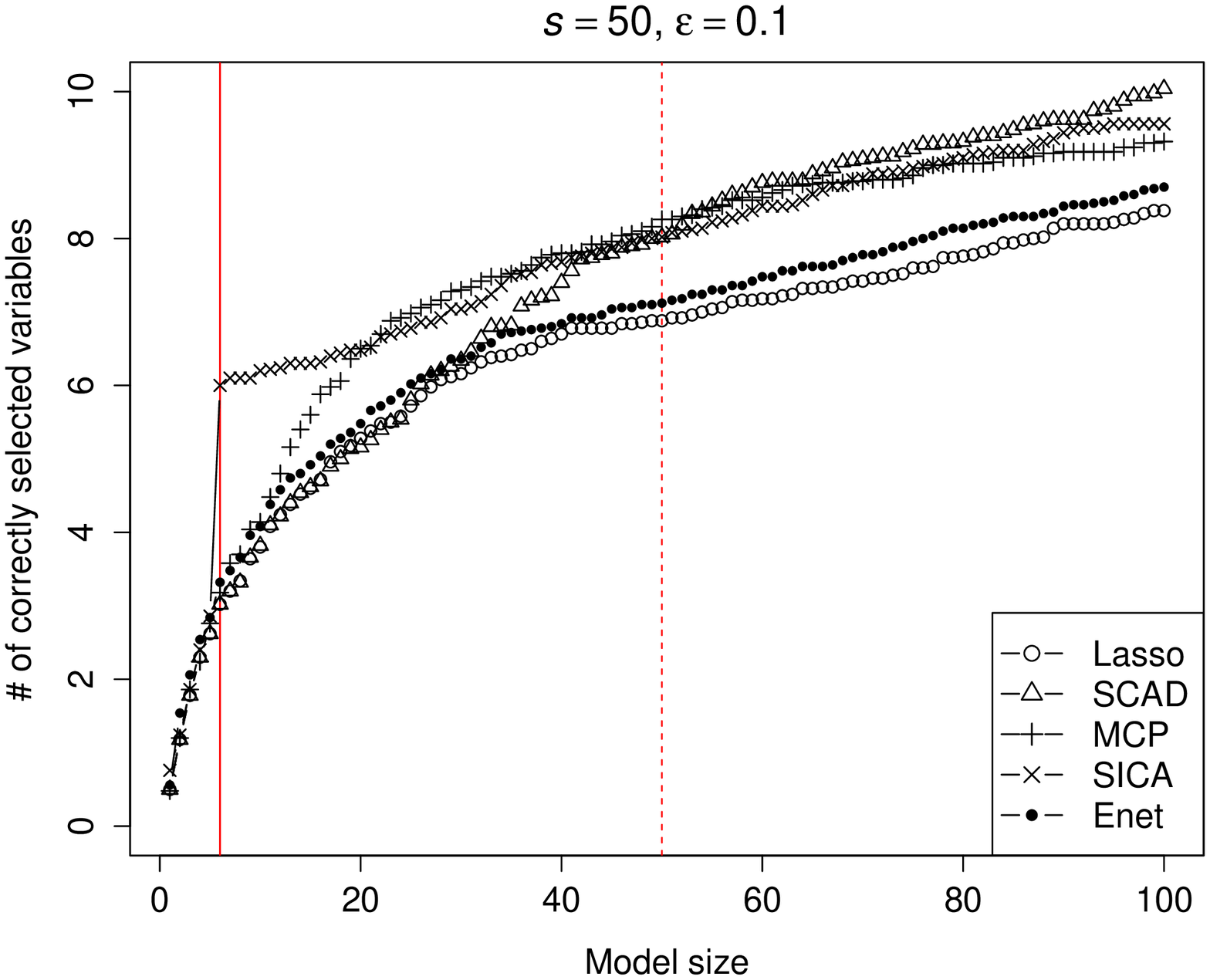}%
\includegraphics[width=3.5in]{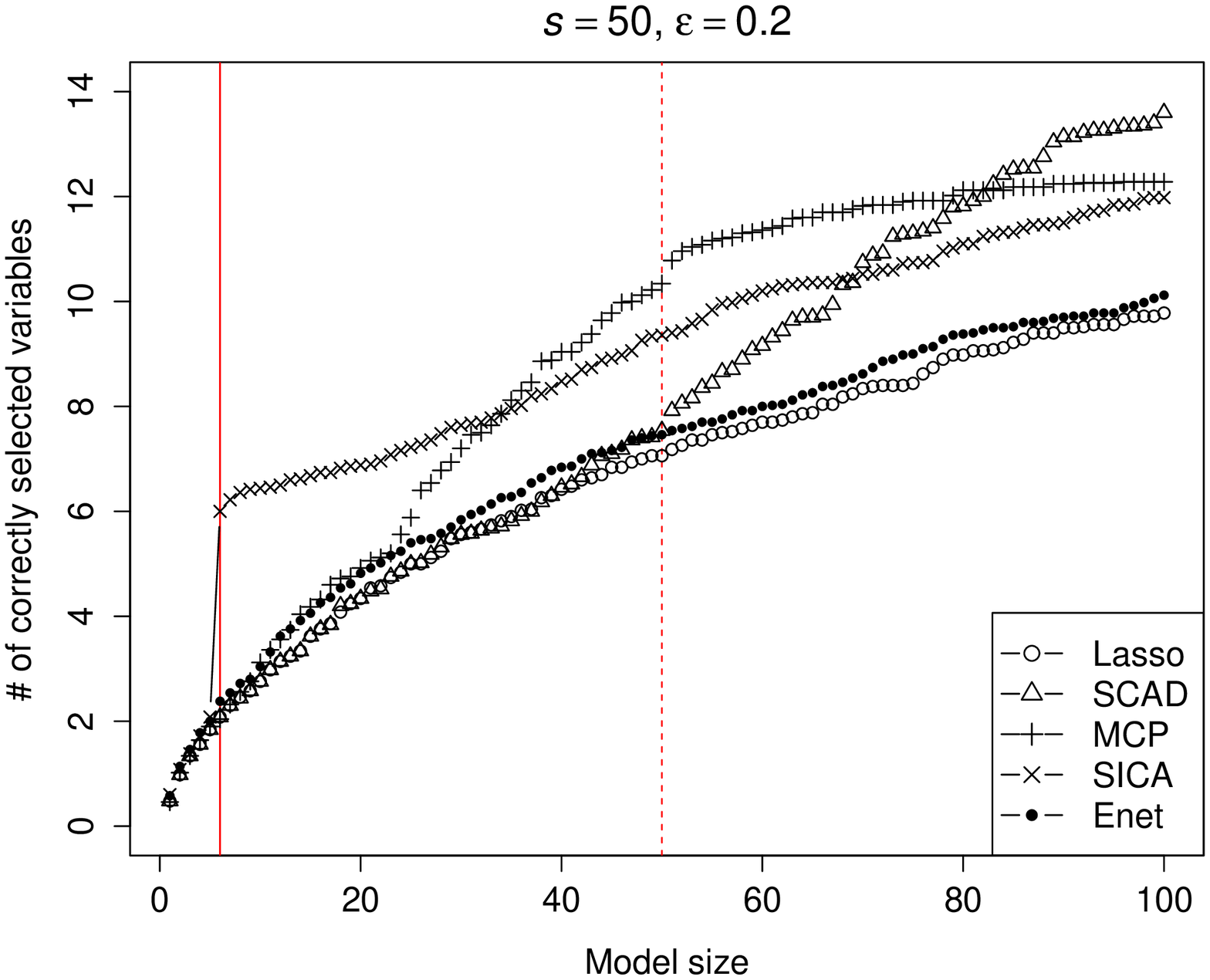}\\
\includegraphics[width=3.5in]{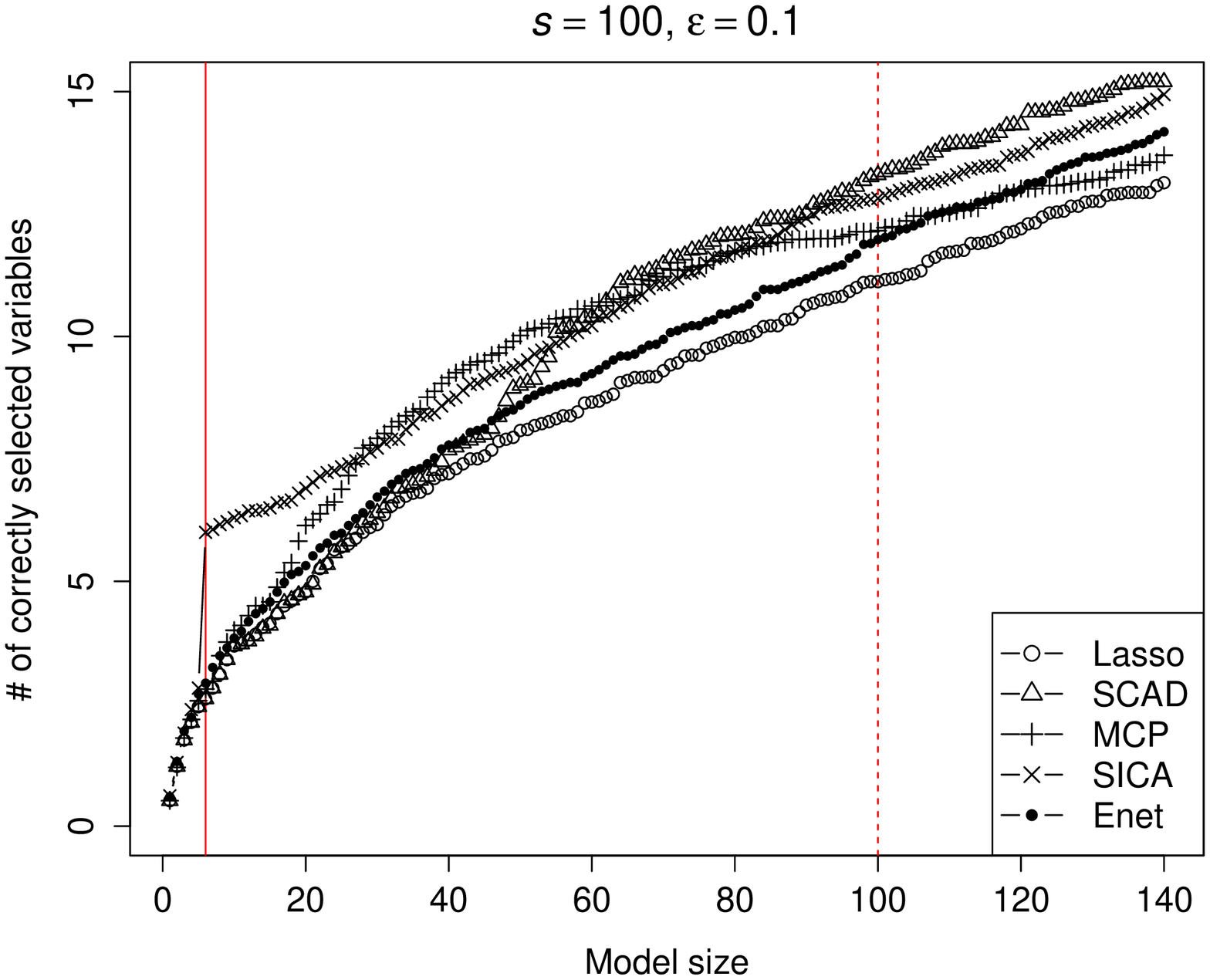}%
\includegraphics[width=3.5in]{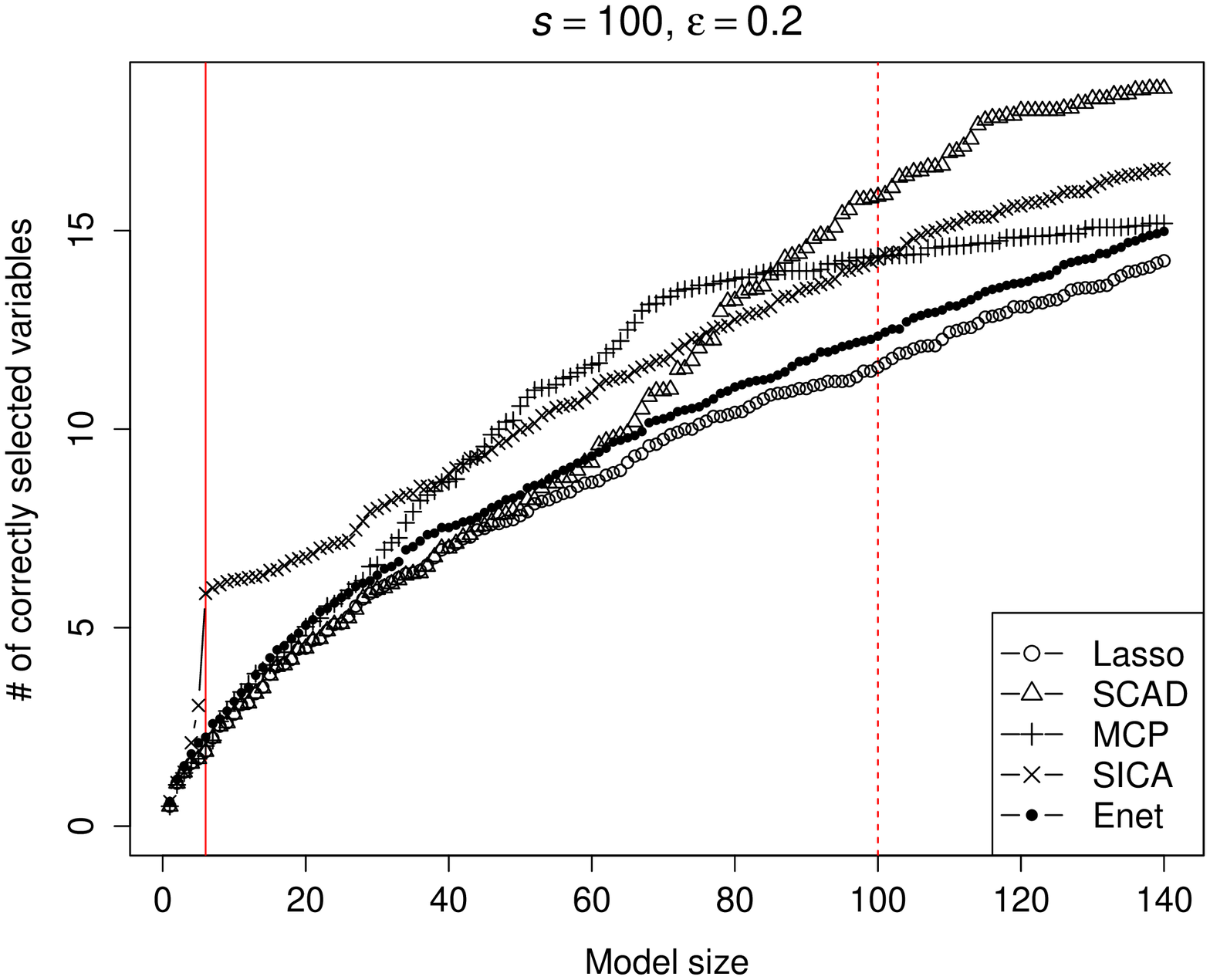}
\caption{Variable selection performance for various methods in the third simulation study. The vertical lines indicate the sparsity level with strong effects (solid) and the sparsity level with all nonzero effects (dashed).}\label{fig:hard}
\end{figure}

\end{document}